\documentclass[journal]{IEEEtran}

\usepackage{amsmath,amsfonts,amssymb}
\usepackage{amsthm}
\usepackage{array}
\usepackage{textcomp}
\usepackage{stfloats}
\usepackage{url}
\usepackage{graphicx}
\usepackage{cite}
\usepackage{bm}
\usepackage{braket}
\usepackage{mathtools}
\usepackage{booktabs}
\usepackage{rotating}
\usepackage{afterpage}
\usepackage{subcaption}
\usepackage{tikz}

\graphicspath{{figs/}}

\newtheorem{definition}{Definition}
\newtheorem{lemma}{Lemma}
\newtheorem{theorem}{Theorem}
\newtheorem{corollary}{Corollary}

\newcommand{\F}{\mathbb{F}}
\newcommand{\CNOT}{\mathrm{CNOT}}
\newcommand{\diag}{\operatorname{diag}}
\newcommand{\GP}{\mathrm{GP}}
\newcommand{\GPS}{\mathrm{GPS}}
\newcommand{\GPF}{\mathrm{GPF}}
\newcommand{\Dist}{\mathrm{Dist}}

\begin{document}

\title{Asymptotically Optimal Circuit Depth for Diagonal Unitary Synthesis and Compilation on Two-Dimensional Grids}

\author{
	Chengzhuo~Xu,
	Xiao~Chen,
	Zhihao~Liu
    and~Zhigang~Li
	\thanks{Draft Manuscript, June~2026;}
	\thanks{Chengzhuo Xu is with School of Computer Science and Engineering, Southeast University, Nanjing, 211189, China  (e-mail: 230228516@seu.edu.cn).}
	\thanks{Xiao Chen is with College of Information Engineering, China Jiliang University, Hangzhou, 310018, China (e-mail: 24a0305209@cjlu.edu.cn).}
	\thanks{Zhihao Liu is with School of Computer Science and Engineering, Southeast University, Nanjing, 211189, China and Key Laboratory of Computer Network and Information Integration (Southeast University), Ministry of Education, Nanjing, 211189, Jiangsu, China (email: liuzhtopic@163.com)}
	\thanks{Zhigang Li is with Nanjing Meteorology Bereau, Nanjing, 210019, Jiangsu, China (email: zhigangyangquan@163.com)}
    \thanks{(Corresponding author: Zhihao Liu.)}%
}

\maketitle
\raggedbottom

\begin{abstract}
Diagonal unitaries are a fundamental but resource-intensive class of quantum operations, arising as the phase separators of QAOA and the time-evolution blocks of Hamiltonian simulation. Under all-to-all connectivity their optimal depth is established, but on nearest-neighbor hardware general-purpose compilers fall back on heuristic search, which yields no analyzable cost bound and becomes intractable at the very sizes where depth is the bottleneck. We address synthesis and compilation jointly. On the synthesis side, we develop a Gray-Path Framework (GPF) that realizes any $n$-qubit diagonal unitary in asymptotically optimal $R_z$ and CNOT depth $O(2^n/n)$ without ancillas. Our main result is that compiling GPF onto a two-dimensional nearest-neighbor grid preserves this optimality: routing adds depth $\Theta(2^n/n)$ and gate count $\Theta(2^n)$. Because GPF fixes its entire interaction structure in advance, routing reduces to scheduling a known sequence, with no heuristic search. We give the construction both with and without ancillas: the ancilla-free, cost-optimized layout is a two-row grid, and a $2k$-row layout introduces a space--time tradeoff that cuts depth by $1/k$ while remaining asymptotically optimal for the enlarged register; both are deterministic and analyzed in closed form. The same complexity is also attained on a linear nearest-neighbor chain, so the preservation is topology-independent, holding on any architecture that contains such a chain. All routing bounds are closed-form, giving the concrete resource estimates that heuristic compilers cannot provide at scale.
\end{abstract}

\begin{IEEEkeywords}
Quantum circuit synthesis, quantum circuit compilation, diagonal unitary, circuit depth, nearest-neighbor routing, qubit routing, restricted topology.
\end{IEEEkeywords}

\section{Introduction}
\label{sec:introduction}

\IEEEPARstart{D}{iagonal} unitaries are among the simplest algebraic classes of quantum operations, yet their synthesis is a recurring bottleneck in quantum algorithms. In QAOA the phase-separator operators are diagonal in the computational basis; in Hamiltonian simulation commuting Pauli-$Z$ terms produce diagonal time-evolution blocks after basis changes; and in instantaneous quantum polynomial-time circuits the computational content is carried by commuting phase operations \cite{Farhi2014QAOA,Bremner2011IQP}. These examples make diagonal-unitary synthesis a natural target for compiler-level depth optimization. Existing Walsh- and phase-polynomial-based methods synthesize arbitrary $n$-qubit diagonal unitaries without ancillas \cite{Welch2014Diagonal}, and $O(2^n/n)$ is the asymptotically optimal depth for this task \cite{Sun2023Depth,Zhang2024DiagonalDepth}.

This paper studies the circuit realization of diagonal unitaries in two stages, synthesis and compilation, and tracks the leading depth constants in closed form across both; treating the two together is the integrated view of circuit optimization emphasized in a recent survey \cite{Yan2024Overview}. The first stage is \emph{synthesis} under all-to-all logical connectivity: we ask whether the $O(2^n/n)$ scale can be reached by an ancilla-free family whose CNOT skeleton is transparent and whose leading constants can be computed. We answer this with a Gray-Path Framework (GPF) that converts diagonal synthesis into a traversal of Walsh parity modes: a CNOT network changes the parity checked by a single-qubit phase gate, so a qubit track that visits all nonzero Walsh modes synthesizes the target unitary up to global phase, and recursive multi-track mixing exposes the parallelism needed for $O(2^n/n)$ depth.

The second stage is \emph{compilation} onto a two-dimensional nearest-neighbor grid. Real hardware rarely satisfies all-to-all connectivity: once only nearest-neighbor two-qubit gates are permitted, the logical CNOT layers must be mapped, routed, and scheduled. General-purpose compilers treat routing as a generic problem solved by heuristic search such as SABRE \cite{Li2019SABRE}. For GPF this is unnecessary, because its interaction structure is fully determined by the synthesis construction: the entire sequence of two-qubit interactions is known before any qubit is placed, so routing is not a matter of discovering which gates to schedule but of scheduling a known sequence optimally, and we show this admits a closed-form optimum. We prove that compiling GPF onto a nearest-neighbor two-row grid preserves its asymptotic complexity: routing adds depth $\Theta(2^n/n)$ and gate count $\Theta(2^n)$, with closed-form leading constants. Restricted topology costs a definite constant factor, not a change of complexity class. The recursion already runs its deeper blocks on single rows; folding the top level as well places the entire construction on a one-dimensional line, which extends the preservation to be topology-independent at twice the two-row constant, while the two-row grid remains the primary, cost-optimized geometry and the substrate for an ancilla depth--space extension.

The main contributions fall in two halves: the synthesis half supplies the enabling ancilla-free construction with explicit constants, while the compilation half, where the paper's main result lies, shows that this optimality survives restricted-topology routing with explicit leading constants.

\noindent\textbf{Synthesis (Section~\ref{sec:gpf-framework}).}
\begin{itemize}
    \item We formulate diagonal-unitary synthesis as Walsh parity-phase synthesis and establish the exact rule by which a CNOT skeleton conjugates a parity phase (Lemma~\ref{lem:mode-migration}).
    \item We define a basic Gray Path construction and a recursive Gray-Path Framework with local mixing modules $\GPS$ and $\GPS^*$ that realize all nonzero Walsh modes without ancillas and reach $R_z$ and CNOT depth $O(2^n/n)$.
    \item We solve the depth recurrences in closed form: a single product $C\approx 3.40147$ governs the leading constants, with an explicit upper envelope $2(C-1)\approx 4.80294$ holding for all $n$ (Theorem~\ref{thm:balanced-constants}).
\end{itemize}

\noindent\textbf{Compilation (Sections~\ref{sec:placement}--\ref{sec:full}).}
\begin{itemize}
    \item \textbf{Restricted-topology routing preserves the complexity, with explicit leading constants.} On a nearest-neighbor two-row grid, routing GPF adds depth $\Theta(2^n/n)$ and gate count $\Theta(2^n)$ (Theorems~\ref{thm:route-depth}--\ref{thm:route-count}), so restricted topology costs only a definite constant factor. Because the interaction structure is known in advance, the optimal schedule is closed-form rather than search-based, with the per-block cost fixed by a single constant $\mu_{\mathrm{jump}}$ (Theorem~\ref{thm:gps2l}); its leading constants are the explicit numbers a designer needs for resource estimation and topology selection.
    \item \textbf{Closed-form overhead, no search.} On its native two-row layout the construction's normalized routing overhead is \emph{bounded} and available in closed form, whereas general-purpose heuristic routers rely on search that becomes infeasible at these sizes (Section~\ref{sec:comparison}).
    \item \textbf{Topology-independence.} Folding the top level onto a single row gives a one-dimensional-line realization, so the preservation extends to any topology containing a linear chain (Theorem~\ref{thm:gps1l}, Corollary~\ref{cor:topology}), at twice the two-row constant.
    \item \textbf{Space--time tradeoff.} A fan-out replication of the phase layers across additional rows reduces depth by $1/k$ with only polynomial ancilla overhead (Section~\ref{sec:fanout}).
\end{itemize}

We next situate these contributions in the adjacent literature.

\textbf{Diagonal-unitary synthesis.} The general problem of synthesizing an arbitrary unitary into elementary gates was placed on a systematic footing by Shende, Bullock, and Markov \cite{Shende2006Synthesis}; diagonal unitaries form a structured subclass, for which classical methods exploit the Walsh or phase-polynomial representation of the phase function. Bullock and Markov established asymptotically efficient decompositions for arbitrary diagonal computations \cite{Bullock2004Diagonal}; later constructions refined ancilla-free diagonal-unitary synthesis and connected it to phase polynomials, CNOT networks, and Gray-code traversals \cite{Welch2014Diagonal,Amy2018CNOTPhase,Vandaele2022PhasePolynomial,Zhang2024DiagonalDepth}, and depth-optimal synthesis results identify $O(2^n/n)$ as the natural asymptotic scale for such tasks \cite{Sun2023Depth}. The synthesis half of this paper contributes a member of this family with explicit leading constants; the compilation half then asks what restricted-topology routing adds to it.

\textbf{Routing as matchings.} The view of restricted-topology compilation as ``routing via matchings''---rooted in the linear-depth fermionic swap network of Kivlichan et al.\ \cite{Kivlichan2018SwapNetwork}---was put in general form by O'Gorman et al.\ \cite{OGorman2019SwapNetworks}, who showed that any unordered set of $k$-qubit gates can be ordered and parallelized on a linear arrangement in $O(n^{k-1})$ depth, and pointed to QAOA phase separators and fermionic SWAP networks as instances. Our abstraction is the same in spirit, but specialized: GPF requests a \emph{structured} family of cyclic-offset matchings, for which we obtain not a generic upper bound but a closed-form, constant-factor-optimal deterministic schedule (instantiated with a low-jump Gray code), and we treat two geometries (two-row and folded single-row) uniformly.

\textbf{Structure-aware routing.} Exploiting known circuit structure rather than generic search is an active and effective line for QAOA and related circuits: edge-coloring-based routing \cite{Kotil2023QAOARouting}, pattern-based mapping to linear-nearest-neighbor layouts \cite{Zhu2024Coqa}, fermionic SWAP networks with circuit averaging \cite{Hashim2022FermionicSwap}, parity-twine/SWAP-network encodings \cite{Montanez2025ParityTwine}, and structured, non-search compilation guaranteeing linear depth for QAOA circuits on multi-dimensional architectures \cite{Jin2021StructuredQAOA}. Such linear-depth guarantees rest on the QAOA phase separator being a sparse, few-body diagonal; the present work instead targets the full dense $2^n$-mode diagonal-unitary spectrum, for which $\Theta(2^n/n)$ depth is intrinsic, carrying the structure-exploiting principle there and supplying the missing closed-form complexity.

\textbf{Steiner-tree CNOT routing.} The methodologically closest line routes \emph{arbitrary} CNOT / phase-polynomial circuits on a restricted coupling graph by Gaussian elimination constrained to Steiner trees: Steiner-Gauss \cite{Kissinger2020SteinerGauss}, the Steiner-tree optimizations of Nash et al.\ \cite{Nash2020NISQ}, the dynamically-remapping PermRowCol \cite{Griend2023PermRowCol} (itself deterministic, not search-based), and depth-oriented linear-reversible synthesis \cite{Brugiere2022Depth}. These methods are general and instance-specific---each target parity set is routed from scratch by elimination, with no closed-form cost. GPF's matchings, by contrast, are a single \emph{known, structured} family (cyclic-offset), which is exactly what lets us replace per-instance elimination by one closed-form, constant-factor-optimal schedule fixed in advance.

\textbf{Connectivity and circuit complexity.} On the theoretical side, Yuan, Allcock, and Zhang \cite{Yuan2024Connectivity} showed that limited connectivity (e.g., a 1-D chain) does not blow up the asymptotic complexity of several structured families, including diagonal unitaries, and that ancillas restore polynomial depth. That is an existence/asymptotic statement; we complement it with an explicit, deterministic construction whose leading constants are given in closed form, where an asymptotic order alone gives nothing concrete for planning at the sizes ($n>16$) of interest. Related results on CNOT-circuit cost under limited connectivity \cite{Wu2023CNOTLimited} and the space--depth trade-off of CNOT circuits \cite{Jiang2020SpaceDepth} share the CNOT-skeleton perspective we adopt.

\textbf{Deterministic structured compilation.} The thesis that hardware-structured, non-iterative compilation can outperform repeated global search has recent echoes on other platforms, e.g., the deterministic zoned-architecture compiler for neutral-atom arrays \cite{Huang2026ZAP}. On the ancilla side, adjustable-depth diagonal-operator circuits with space--time--accuracy trade-offs \cite{Zylberman2024Diagonal} are the closest relatives of our Section~\ref{sec:fanout} construction.

The remainder of the paper is organized as follows. Section~\ref{sec:preliminaries} sets up the shared foundations---the Walsh and CNOT-skeleton model, the depth model, and the transition sequence. Section~\ref{sec:gpf-framework} develops the synthesis half: the basic Gray Path, the recursive GPF, and the closed-form depth analysis. Sections~\ref{sec:placement}--\ref{sec:full} develop the compilation half: the two-row placement, the two routing-module theorems, and the complete assembly with its complexity and a comparison with general-purpose routers; Section~\ref{sec:fanout} develops the ancilla-assisted fan-out construction and Section~\ref{sec:discussion} concludes. Appendices collect the closed-form derivations and the full eight-qubit circuits.

\section{Preliminaries}
\label{sec:preliminaries}

\subsection{Diagonal Unitaries and Walsh Parity Phases}

Let $x,s\in\F_2^n$, and let $s\cdot x$ denote the binary inner product over $\F_2$. An arbitrary $n$-qubit diagonal unitary can be written as
\begin{equation}
    \Lambda_n(\bm{\theta})
    =
    \sum_{x\in\F_2^n} e^{i\theta_x}\ket{x}\bra{x}.
    \label{eq:diagonal}
\end{equation}
Directly synthesizing Eq.~\eqref{eq:diagonal} in the computational basis addresses individual basis states with multi-controlled phase gates, obscuring the operator's commutative structure.

We instead express the phase function in the Walsh basis. Define
\begin{equation}
    \widehat{\theta}_s
    =
    2^{-n}\sum_{x\in\F_2^n}(-1)^{s\cdot x}\theta_x .
    \label{eq:walsh-coef}
\end{equation}
Then
\begin{equation}
    \theta_x
    =
    \sum_{s\in\F_2^n}(-1)^{s\cdot x}\widehat{\theta}_s .
\end{equation}
Using $(-1)^b=1-2b$ for $b\in\{0,1\}$, the diagonal unitary decomposes into conditional parity phases. For $s\ne 0$, define
\begin{equation}
    P_s(\varphi_s)
    =
    \sum_{x\in\F_2^n}
    e^{i\varphi_s(s\cdot x)}
    \ket{x}\bra{x},
    \qquad
    \varphi_s=-2\widehat{\theta}_s .
    \label{eq:parity-phase}
\end{equation}
There exists a global phase $\gamma=\sum_s\widehat{\theta}_s$ such that
\begin{equation}
    \Lambda_n(\bm{\theta})
    =
    e^{i\gamma}\prod_{s\in\F_2^n\setminus\{0\}}P_s(\varphi_s).
    \label{eq:walsh-product}
\end{equation}
The global phase is not physically observable and will not be implemented by the circuit constructions below. The nonzero factors in Eq.~\eqref{eq:walsh-product} commute, so their synthesis order can be chosen to optimize routing cost.

Throughout this paper, $R_z(\phi)$ denotes $\diag(1,e^{i\phi})$, i.e.\ the phase gate $P(\phi)$. The symmetric rotation convention instead defines $\diag(e^{-i\phi/2},e^{i\phi/2})$; the two agree up to the global phase $e^{i\phi/2}$, which is unobservable and convenient to drop for parity-phase synthesis.

\subsection{CNOT Skeletons}

An ancilla-free CNOT circuit over $n$ qubits implements an invertible linear transformation $A\in\mathrm{GL}(n,2)$ on computational basis states \cite{Patel2008LinearReversible}:
\begin{equation}
    C_A\ket{x}=\ket{Ax}.
\end{equation}
We call $C_A$ the CNOT skeleton. The following conjugation rule is the algebraic engine of the construction.

\begin{lemma}[Conjugation of a parity phase by a CNOT skeleton]
\label{lem:mode-migration}
For every $A\in\mathrm{GL}(n,2)$, $s\in\F_2^n$, and phase angle $\varphi$,
\begin{equation}
    C_A^\dagger P_s(\varphi) C_A
    =
    P_{A^T s}(\varphi).
    \label{eq:cnot-conjugation}
\end{equation}
\end{lemma}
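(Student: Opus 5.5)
We verify the identity by checking how both sides act on computational basis states. Both sides are diagonal in that basis, so matching their eigenvalue on every $\ket{x}$ is enough. The only inputs are the action $C_A\ket{x}=\ket{Ax}$, its inverse $C_A^\dagger=C_A^{-1}=C_{A^{-1}}$, and the definition of $P_s(\varphi)$ in Eq.~\eqref{eq:parity-phase}.

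\textbf{Step 1: apply the three factors to $\ket{x}$.} Fix $x\in\F_2^n$. First, $C_A\ket{x}=\ket{Ax}$. Next, $P_s(\varphi)$ multiplies this by $e^{i\varphi\,(s\cdot Ax)}$. Finally, $C_A^\dagger=C_{A^{-1}}$, because $C_A$ permutes the computational basis and is therefore unitary with inverse $C_{A^{-1}}$; it sends $\ket{Ax}$ back to $\ket{x}$. Hence
\[
C_A^\dagger P_s(\varphi)C_A\ket{x}=e^{i\varphi\,(s\cdot Ax)}\ket{x}.
\]
This shows that the left-hand side is diagonal.

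\textbf{Step 2: move $A$ across the inner product.} Over $\F_2$ we have $s\cdot(Ax)=s^TAx=(A^Ts)^Tx=(A^Ts)\cdot x$. So the eigenvalue found in Step 1 is $e^{i\varphi\,((A^Ts)\cdot x)}$, which is exactly the eigenvalue of $P_{A^Ts}(\varphi)$ on $\ket{x}$.

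\textbf{Subtlety: the exponent.} In Eq.~\eqref{eq:parity-phase}, the bit $s\cdot x\in\{0,1\}$ sits inside a real exponent. The transpose identity in Step 2 holds in $\F_2$, and both sides are then read as the same element of $\{0,1\}$. The eigenvalues therefore agree as complex numbers, not merely modulo $2$.

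\textbf{Edge case $s=0$.} Then $A^Ts=0$, both sides equal the identity, and the identity holds trivially. For $s\ne0$, invertibility of $A$ gives $A^Ts\ne0$, so nonzero modes map to nonzero modes. This fact will be needed later.

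No step is a genuine obstacle. The points that need care are the two already noted: $C_A^\dagger=C_{A^{-1}}$, and the fact that the $\F_2$ transpose identity carries over to the real exponent because both sides are the same bit.
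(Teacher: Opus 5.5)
Your proof is correct and follows the paper's argument. Both act on a basis state $\ket{x}$, read off the phase from $s\cdot(Ax)$, and use $s\cdot(Ax)=(A^Ts)\cdot x$ over $\F_2$ to identify the operator with $P_{A^Ts}(\varphi)$; your explicit justification of $C_A^\dagger=C_{A^{-1}}$ and your handling of $s=0$ (the paper defines $P_s$ only for $s\neq 0$) just make explicit what the paper leaves implicit.
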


\begin{proof}
For any basis state $\ket{x}$, the phase contributed by the left-hand side depends on $s\cdot(Ax)$. Since $s\cdot(Ax)=(A^Ts)\cdot x$ over $\F_2$, the conjugated operator applies phase $e^{i\varphi}$ exactly on those $x$ satisfying $(A^Ts)\cdot x=1$, which is $P_{A^Ts}(\varphi)$.
\end{proof}

In particular, applying $R_z(\varphi)$ to qubit $k$ realizes $P_{e_k}(\varphi)$. Under a CNOT skeleton $C_A$, the same inserted phase becomes $P_{A^Te_k}(\varphi)$. Hence one selected qubit track can realize different Walsh modes as the corresponding skeleton mode evolves.

This separates every construction in this paper into two layers. A \emph{CNOT skeleton} is a sequence of skeleton updates that carries the phase-insertion track through a chosen list of Walsh modes; together with the \emph{insertion slots} it exposes---one per visited mode---the skeleton depends only on $n$ and the chosen traversal, not on the target unitary. Filling a slot with $R_z(\varphi_s)$ while the current mode equals $s$ realizes the factor $P_s(\varphi_s)$ of Eq.~\eqref{eq:walsh-product}; only the angles $\varphi_s=-2\widehat{\theta}_s$ are instance-specific. A skeleton therefore becomes a concrete diagonal-unitary circuit only after this \emph{phase insertion}, and a single skeleton synthesizes every $n$-qubit diagonal unitary. In the figures below we draw each insertion slot of a skeleton as a dashed split line; the basic $\GP$ is the special case in which every slot is filled by a phase.

\subsection{Depth Model and Hardware Assumption}

The reported costs are circuit \emph{layer depths}: the number of parallel gate layers on the critical path, with only two-qubit gates constrained by connectivity (single-qubit phase gates apply on any qubit in place). Each cost is a layer depth $D$ or a gate count $S$, carrying a superscript that names the construction (kept explicit throughout, never blank or primed) and a subscript that selects one of three independently counted tracks:
\begin{itemize}
    \item $D_z,\,S_z$: the depth and gate count of the phase ($R_z$) layers;
    \item $D_{\CNOT},\,S_{\CNOT}$: the depth and gate count of the CNOT layers;
    \item $D_{\mathrm{route}},\,S_{\mathrm{route}}$: the depth and gate count of the CNOTs added by routing on a restricted topology.
\end{itemize}

The two halves of the paper differ only in the connectivity they assume: synthesis (Section~\ref{sec:gpf-framework}) under all-to-all logical connectivity, and compilation (Sections~\ref{sec:placement}--\ref{sec:full}) restricting two-qubit gates to the nearest-neighbor edges of a two-dimensional grid.

\subsection{The Gray-Code Transition Sequence and Its Two Cost Functionals}
\label{sec:graycode}

Both halves of this paper rest on a single combinatorial object. A \emph{Gray-code transition sequence} on $m$ coordinates---abbreviated \emph{transition sequence} below---is the list of flipped coordinates of a single-bit-flip (Gray) traversal of $\F_2^m$: at step $t$ the traversal flips coordinate $f_t\in\{0,\ldots,m-1\}$. It appears in two forms. The basic Gray Path traverses the full space ($m=n$) along an open Hamiltonian path; the local mixing modules GPS/GPS$^*$ are driven by a \emph{closed} cycle $G=(f_0,f_1,\ldots,f_{2^m-1})$ on the high subspace ($m=n_H$), which must visit every mode and return to mode $0$ after all flips (Appendix~\ref{app:lowjump}). The synthesized unitary is independent of the choice of $G$; the \emph{cost}, however, is governed by two functionals of $G$ that we separate here because they pull in opposite directions.

\emph{Fold factor $\kappa(G)$.} Partition the $2^m$ flips of $G$ into consecutive groups of four; write the flips of group $g$ as $(g_1,g_2,g_3,g_4)$ and assign it the cost
\begin{equation*}
    c_g=
    \begin{cases}
        3, & g_1=g_3 \ \text{ or }\ g_2=g_4,\\
        4, & \text{otherwise}.
    \end{cases}
\end{equation*}
A group is thus cheap ($c_g=3$) exactly when its first and third flips coincide or its second and fourth do (call such a group \emph{matched}), and costs $c_g=4$ when neither pair repeats. The fold factor is the average of $c_g$ over the $2^{m-2}$ groups,
\begin{equation}
    \kappa(G)=\frac{\sum_{g=1}^{2^{m-2}} c_g}{2^{m-2}}\;\in[3,4],
    \label{eq:kappa-def}
\end{equation}
so the range $[3,4]$ is immediate from $c_g\in\{3,4\}$, with $\kappa=3$ when every group is matched and $4$ when none is; the reading of $c_g$ as a group's CNOT-to-phase-layer ratio is established in Appendix~\ref{app:kappa-bound}. The binary reflected Gray code (BRGC) flips the lowest coordinate at every odd step, so every group is matched and $\kappa\equiv 3$ exactly; an irregular sequence mixes matched and unmatched groups, giving $\kappa\in(3,4)$ with no closed form.

\emph{Jump density $\mu_{\mathrm{jump}}(G)$.} Place the $m$ coordinates on a ring and let $\Dist_m(a,b)=\min(|a-b|,\,m-|a-b|)$ be the cyclic distance between two ring positions. The jump density is the average cyclic distance between consecutive flips,
\begin{equation}
    \mu_{\mathrm{jump}}(G)=\frac{1}{2^m}\sum_{t=0}^{2^m-1}\Dist_m\!\left(f_t,\,f_{(t+1)\bmod 2^m}\right).
    \label{eq:mu-def}
\end{equation}
Adjacent flips differ, so every term is at least $1$ and $\mu_{\mathrm{jump}}\ge 1$, with equality reachable only at small $m$ (Appendix~\ref{app:lowjump}). The BRGC scores poorly here, whereas a sequence whose successive flips stay close on the ring keeps $\mu_{\mathrm{jump}}$ near this lower bound.

For the depth of the final compiled circuit both functionals matter, and one would like each as small as possible: a smaller $\kappa$ lowers the synthesis-side CNOT depth, and a smaller $\mu_{\mathrm{jump}}$ lowers the compilation-side routing depth. The two pull in opposite directions, however, and are hard to keep small simultaneously. Since routing contributes the dominant constant to the compiled depth, this paper adopts a \emph{low-jump} sequence (one that keeps $\mu_{\mathrm{jump}}$ small) as its default throughout, with explicit construction in Appendix~\ref{app:lowjump}. Whether a valid sequence with strictly smaller $\mu_{\mathrm{jump}}$ exists is an open problem, and any such improvement would directly lower the final compiled depth.

\section{The Gray-Path Framework}
\label{sec:gpf-framework}

\subsection{Basic Gray Path Synthesis}
\label{sec:gp}

We order the Walsh modes by a transition sequence (Section~\ref{sec:graycode}), in its open-path form on the full space $\F_2^n$. We adopt the BRGC as a representative valid transition sequence. Its single-bit-flip property is precisely what is needed for low-cost mode traversal: moving from one Walsh mode to the next changes only one coordinate and can be implemented by a single CNOT update of the current skeleton. We use the BRGC as the running example, but the construction depends only on the single-bit-flip property: any such transition sequence yields the \emph{same} diagonal unitary. The choice of sequence affects cost alone, through the two functionals of Section~\ref{sec:graycode} (the fold factor $\kappa$, Eq.~\eqref{eq:kappa-def}, and the jump density $\mu_{\mathrm{jump}}$, Eq.~\eqref{eq:mu-def}): of the circuit depths only the $\GPF^*$ CNOT depth varies, through $\kappa$. Gray-code based traversals have also appeared in general multiqubit-gate decompositions \cite{Mottonen2004Multiqubit}.

\begin{definition}[Gray Path]
Fix a qubit track $a$ as the phase insertion track. A Gray Path $\GP(n)$ is the circuit obtained by traversing the nonzero Walsh modes in the order of a transition sequence (e.g.\ a BRGC). At each step, if the next mode differs from the current mode in bit $f$, one applies a CNOT update controlled on qubit $f$ and targeted on track $a$, and then inserts the phase gate $R_z(\varphi_s)$ on track $a$.
\end{definition}

\begin{theorem}[Correctness of $\GP(n)$ up to global phase]
\label{thm:gp-correctness}
For every diagonal unitary $\Lambda_n(\bm{\theta})$, the Gray Path construction realizes Eq.~\eqref{eq:walsh-product} after removing the global phase $e^{i\gamma}$, and uses no auxiliary qubits.
\end{theorem}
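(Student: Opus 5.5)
The plan is to track, step by step, which Walsh mode the phase-insertion track $a$ currently carries, and to use Lemma~\ref{lem:mode-migration} to show that each inserted $R_z$ realizes exactly one factor $P_s(\varphi_s)$. I will then argue that the leftover CNOT skeleton can be made the identity.

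\textbf{Step 1: the mode invariant.}
Let $C_t$ denote the CNOT skeleton applied after $t$ updates, with linear map $A_t\in\mathrm{GL}(n,2)$. I will prove by induction that track $a$ then holds the parity $s_t\cdot x$. Here $s_t=A_t^T e_a$ is the $t$-th mode of the transition sequence.

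The update CNOT$(f\to a)$ replaces $x_a$ by $x_a\oplus x_f$. Hence $A_{t+1}^T e_a = A_t^T e_a + A_t^T e_f$. To keep this equal to $s_t\oplus e_f$, I need $A_t^T e_f=e_f$ for every control $f\neq a$. This holds because only track $a$ is ever targeted, so every other qubit keeps its original value.

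The track itself needs care. The sequence should start from a mode containing $e_a$, and a flip of coordinate $a$ has no CNOT with control $a$ and target $a$. I plan to handle this by letting track $a$ carry the $a$-bit implicitly. The nonzero modes then split into those with $s_a=1$, reached directly, and those with $s_a=0$.

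\textbf{Step 2: the main obstacle, modes with $s_a=0$.}
These modes cannot lie on track $a$ through CNOTs targeted on $a$ alone, since $A_t$ always keeps $x_a$ with coefficient $1$. This is the step I expect to be hardest.

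My first attempt is to apply the definition literally, with the flipped coordinate $f$ ranging over control qubits other than $a$. The traversal then runs over the $2^{n-1}$ modes $e_a\oplus u$ with $u\in\F_2^{n-1}$.

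I will cover the remaining $2^{n-1}-1$ modes $u\neq 0$ by a second Gray pass on another track $b\neq a$, using controls other than $a$. Alternatively, I can interleave a CNOT$(a\to f)$ swap-like step. Either way, Lemma~\ref{lem:mode-migration} gives the phase realized at each slot as $P_{A_t^T e_{\mathrm{track}}}$. So whichever variant matches the paper's definition, the accounting is identical: each nonzero $s$ must occur exactly once as the current mode at an insertion slot.

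\textbf{Step 3: correctness of the product.}
Writing the circuit as $\prod_t C_t^\dagger R_z^{(a)}(\varphi_{s_t})C_t$ (commuting the skeleton through), Lemma~\ref{lem:mode-migration} gives $\prod_t P_{s_t}(\varphi_{s_t})$. This holds provided the final skeleton is undone.

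I will close the path by appending the CNOTs that return track $a$ to mode $e_a$. For the open path this costs at most $n-1$ CNOTs, one per set bit of the final mode other than $a$. For a closed cycle it costs nothing.

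Since each nonzero $s$ appears exactly once and the factors commute, Eq.~\eqref{eq:walsh-product} yields $\Lambda_n(\bm\theta)$ up to $e^{i\gamma}$. The convention $R_z(\phi)=\diag(1,e^{i\phi})$ introduces no additional phase.

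\textbf{Step 4: no ancillas.}
No ancilla is used, since all gates act on the $n$ data qubits, which are returned to $\ket{x}$ with only a phase attached.
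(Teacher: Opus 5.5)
\textbf{There is a genuine gap: Step 2 never proves that every nonzero mode is realized exactly once.} Your Steps 1, 3 and 4 are sound. Your Step 2 diagnosis is also correct: if every CNOT targets track $a$, row $a$ of $A_t$ keeps a $1$ in position $a$. So only the $2^{n-1}$ modes with $s_a=1$ can ever occupy the track. The paper's proof does not confront this point. It simply asserts that the transition sequence visits every nonzero mode once and that each transition moves the skeleton to the next mode. It then concludes via Lemma~\ref{lem:mode-migration} and commutativity much as your Step 3 does, and it leaves implicit the uncomputation of the final skeleton, which you rightly supply. But once you have shown that the literal single-track construction cannot reach all modes, you must exhibit a working construction and verify coverage. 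The sentence ``whichever variant matches the paper's definition, the accounting is identical'' restates the coverage requirement rather than proving it. Coverage is the entire content of the theorem beyond Lemma~\ref{lem:mode-migration}.

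Your first fix fails as stated. A second pass on track $b$, with all CNOTs targeting $b$ and controlled outside $\{a,b\}$, realizes only the $2^{n-2}$ modes with $s_a=0$, $s_b=1$. That leaves the $2^{n-2}-1$ modes with $s_a=s_b=0$ uncovered.

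What works is a cascade over an ordering $a_1,\dots,a_n$ of all qubits. Pass $k$ runs a Gray path on the coordinates outside $\{a_1,\dots,a_k\}$ with CNOTs targeted on $a_k$. By your Step~1 invariant it realizes exactly the set $\{s:\ s_{a_1}=\cdots=s_{a_{k-1}}=0,\ s_{a_k}=1\}$, of size $2^{n-k}$. These sets partition $\F_2^n\setminus\{0\}$. Uncompute the passes in reverse order, or each one immediately after it runs. Either way every control is unmodified whenever it is used, so your Step~3 telescoping applies.

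Your second fix, interleaving CNOT$(a\to f)$, writes to qubit $f$. That breaks the invariant $A_t^Te_f=e_f$ on which Step~1 rests, and it invalidates the $(n-1)$-CNOT closing count in Step~3. It would need an explicit sequence and a re-derived invariant before it counts as a proof.
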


\begin{proof}
By Lemma~\ref{lem:mode-migration}, the phase gate inserted on track $a$ realizes the Walsh mode equal to the current skeleton mode associated with that track. The transition sequence visits every nonzero mode exactly once, and each transition updates the skeleton to the next mode. Since the operators $P_s(\varphi_s)$ commute, their order does not affect the resulting diagonal unitary. The zero mode contributes only to $e^{i\gamma}$ in Eq.~\eqref{eq:walsh-product}, so it can be omitted.
\end{proof}

The basic $\GP(n)$ construction is conceptually useful but not depth-optimal: it has a single active track and therefore uses $2^n$ phase layers and $2^n$ CNOT layers. Figures \ref{fig:gp-brgc} and \ref{fig:gp-alt} show representative $n=3$ Gray Path circuits.

\begin{figure}[!t]
    \centering
    \begin{subfigure}[b]{\columnwidth}
        \centering
        \includegraphics[width=\columnwidth]{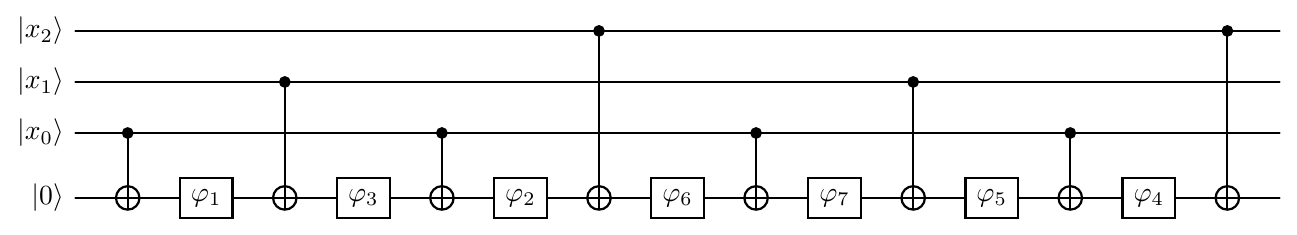}
        \caption{BRGC traversal.}
        \label{fig:gp-brgc}
    \end{subfigure}
    \\[0.6em]
    \begin{subfigure}[b]{\columnwidth}
        \centering
        \includegraphics[width=\columnwidth]{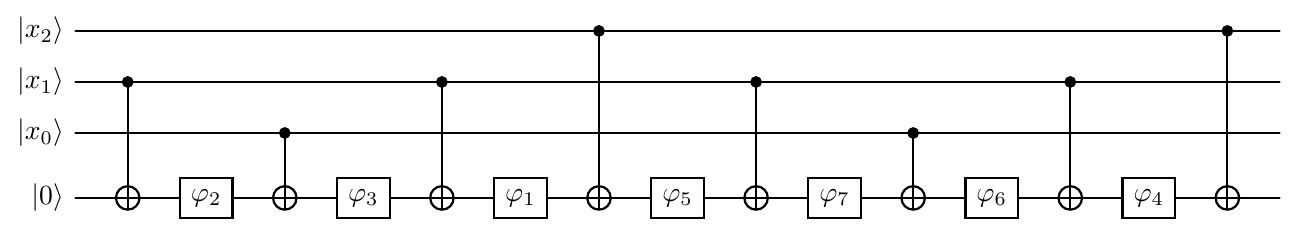}
        \caption{Transition sequence $[1,0,1,2,1,0,1,2]$.}
        \label{fig:gp-alt}
    \end{subfigure}
    \caption{Two equivalent basic Gray Path circuits for $n=3$: different transition sequences realize the same commuting product of Walsh phases. Phase labels abbreviate the angles $\varphi_s$.}
\end{figure}

\subsection{Recursive Multi-Track Cover}
\label{sec:gpf}

The main limitation of $\GP(n)$ is that it exposes only one active phase track. The Gray-Path Framework (GPF) replaces this single-track traversal by a recursive multi-track cover of the Walsh space. The guiding principle is to divide the qubits into two nearly equal subspaces and use one subspace to generate simultaneous tracks through the other.

\subsubsection{Recursive Decomposition}

Partition the qubit set into a high subspace $\mathcal{Q}_H$ and a low subspace $\mathcal{Q}_L$, with
\begin{equation}
    n_H=\lceil n/2\rceil,\qquad n_L=\lfloor n/2\rfloor .
\end{equation}
Every Walsh mode $s\in\F_2^n$ decomposes uniquely as
\begin{equation}
    s=s_H\oplus s_L,\qquad
    s_H\in\F_2^{n_H},\quad s_L\in\F_2^{n_L}.
\end{equation}
We name the high-subspace qubits $\mathrm{H}_0,\ldots,\mathrm{H}_{n_H-1}$ and the low-subspace qubits $\mathrm{L}_0,\ldots,\mathrm{L}_{n_L-1}$; this high/low pair is the primary qubit label used throughout the paper. The circuits of this section are drawn with it, and on the two-row grid the same $\mathrm{H}_i/\mathrm{L}_i$ name the physical sites the qubits occupy, so no separate physical-site notation is introduced there. The remaining labels appearing later---the global logical index $x_i$, the offset-ring occupancy $h_i/l_i$, and the recursive coordinate $Q_i$---are sub-algorithm-specific relabelings, collected in Table~\ref{tab:notation}.

The GPF recursion separates two cases. When $s_L=0$, the remaining modes live entirely in $\mathcal{Q}_H$ and can be handled recursively. When $s_L\ne 0$, the low subspace selects a family of tracks and the high subspace supplies the local Gray traversal. This is the source of the $n$-fold parallelism in the construction.

\subsubsection{Local Mixing by $\GPS$ and $\GPS^*$}

The local mixing module $\GPS$ couples the high and low subspaces so that several Gray paths advance simultaneously (figure \ref{fig:gps}). Let $f_t$ be the high coordinate flipped at step $t$ of the high-space Gray traversal (transition sequence). Rather than recording a separate control for each track, $\GPS$ realizes this step as a single cyclic-offset matching between $\mathcal{Q}_H$ and $\mathcal{Q}_L$: for each low-space track $j$ it inserts one CNOT whose control--target pair is
\[
    (\mathrm{ctrl},\,\mathrm{tgt})=\bigl((f_t+j)\bmod n_H,\ j\bigr),\qquad j=0,1,\dots,n_L-1,
\]
with the control in $\mathcal{Q}_H$ and the target in $\mathcal{Q}_L$. Because $n_L\le n_H$, the shifted controls $(f_t+j)\bmod n_H$ are distinct across $j$, so the $n_L$ CNOTs act on disjoint qubits and form a single layer. Each step of the high-space traversal thus becomes one parallel matching layer and exposes one insertion slot shared across the $n_L$ low tracks; over its $2^{n_H}$ steps the $\GPS$ skeleton has $2^{n_H}$ CNOT layers and $2^{n_H}$ insertion slots. Filling those slots contributes $2^{n_H}$ phase layers at no extra depth, so $\GPS$ carries the same layer counts as $\GP$.

The variant $\GPS^*$ (figure \ref{fig:gps-star}) alternates forward and reverse local Gray paths. It reduces the number of phase layers at the price of additional CNOT fold-back layers. Informally, $\GPS$ is balanced for phase and CNOT depth, while $\GPS^*$ is phase-prioritized.

\begin{figure}[!t]
    \centering
    \begin{subfigure}[b]{\columnwidth}
        \centering
        \includegraphics[width=\columnwidth]{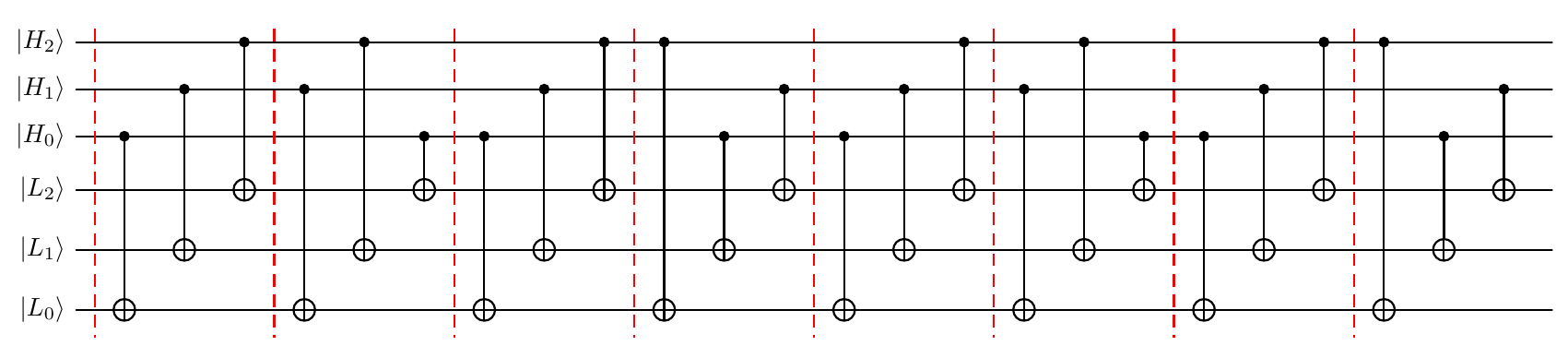}
        \caption{$\GPS$: each step is a cyclic-offset matching, so its CNOTs form one parallel layer.}
        \label{fig:gps}
    \end{subfigure}
    \\[0.6em]
    \begin{subfigure}[b]{\columnwidth}
        \centering
        \includegraphics[width=\columnwidth]{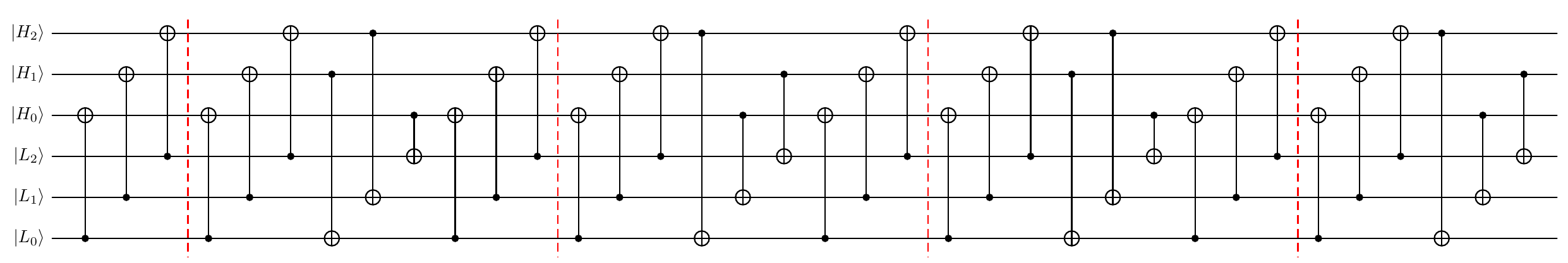}
        \caption{$\GPS^*$: forward and reverse Gray-path segments (phase-prioritized).}
        \label{fig:gps-star}
    \end{subfigure}
    \caption{CNOT skeletons of the two local mixing modules for $n_H=n_L=3$. Dashed split lines mark the insertion slots (one per step), empty until phases are inserted.}
\end{figure}

\subsubsection{Recursive Composition and Phase-Insertion Instances}

Figures \ref{fig:gpf} and \ref{fig:gpf-star} show the recursive composition of the $\GPF$ and $\GPF^*$ \emph{skeletons}. The split lines introduced by $\GPS$ are the recursion sites: rather than receiving a bare phase, each is where a low-subspace block attaches. Concretely, $\GPF_n$ alternates $\GPS_n$ mixing skeletons with low-subspace sub-blocks $\GPF^*_{n_L}$; the low subspace supplies the outer mode sequence while each $\GPS_n$ unfolds a full high-space Gray traversal around the current low mode. The pure high-space modes ($s_L=0$) are collected by a final $\GPF_{n_H}$ on the high subspace, and the recursion bottoms out at $n=1$, where a slot holds a single phase gate. The result is still a CNOT skeleton with insertion slots; it does not by itself implement a diagonal unitary.

A concrete diagonal unitary appears only after the slots are filled. Figure \ref{fig:gpf4-full} gives the complete $4$-qubit phase-insertion circuit obtained this way: every slot of the $n=4$ skeleton receives its $R_z(\varphi_s)$, the phase labels use four-bit Walsh-mode indices, and all $2^4-1$ nonzero Walsh modes appear exactly once. Because the skeleton is fixed by $n$ and the chosen traversal, the same circuit synthesizes any $4$-qubit diagonal unitary by changing only the angles. The full $8$-qubit instance is too wide for the main two-column text and is therefore placed in Appendix~\ref{app:gpf8}.

\begin{figure}[!t]
    \centering
    \begin{subfigure}[b]{\columnwidth}
        \centering
        \includegraphics[width=\columnwidth]{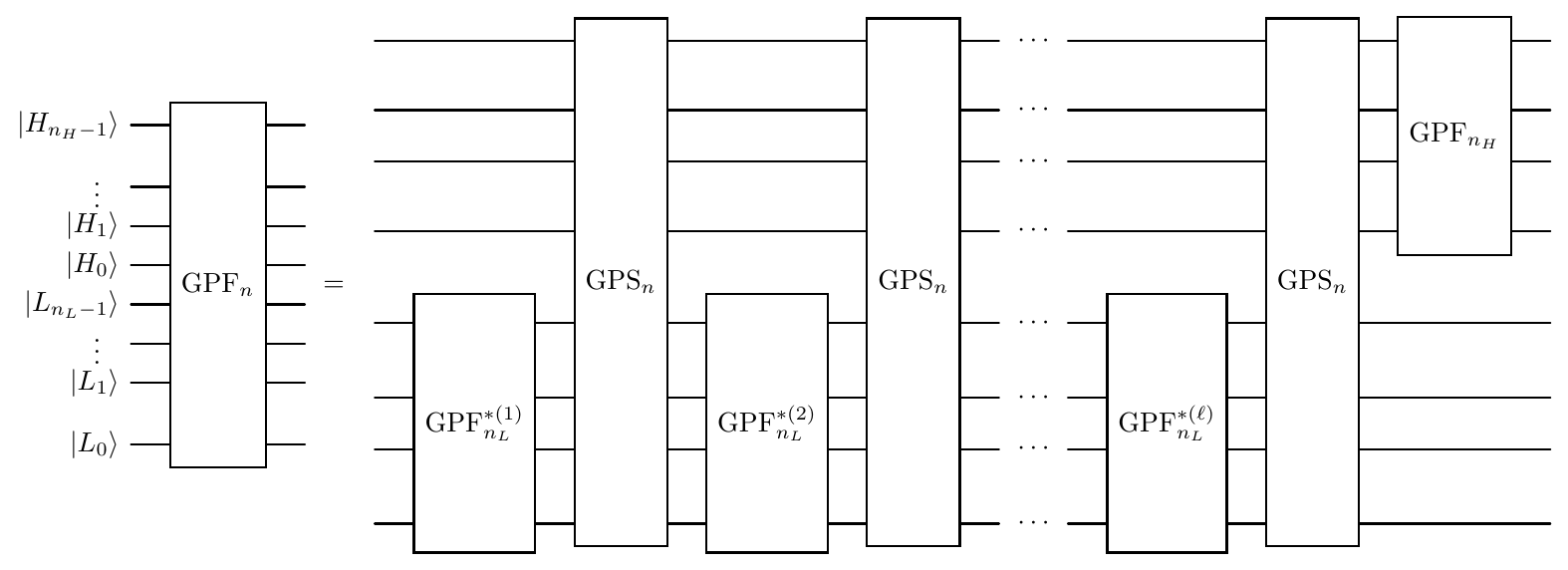}
        \caption{$\GPF$: the final high-space recursion handles the $s_L=0$ modes, the $\GPS_n$ mixings the $s_L\ne 0$ modes.}
        \label{fig:gpf}
    \end{subfigure}
    \\[0.6em]
    \begin{subfigure}[b]{\columnwidth}
        \centering
        \includegraphics[width=\columnwidth]{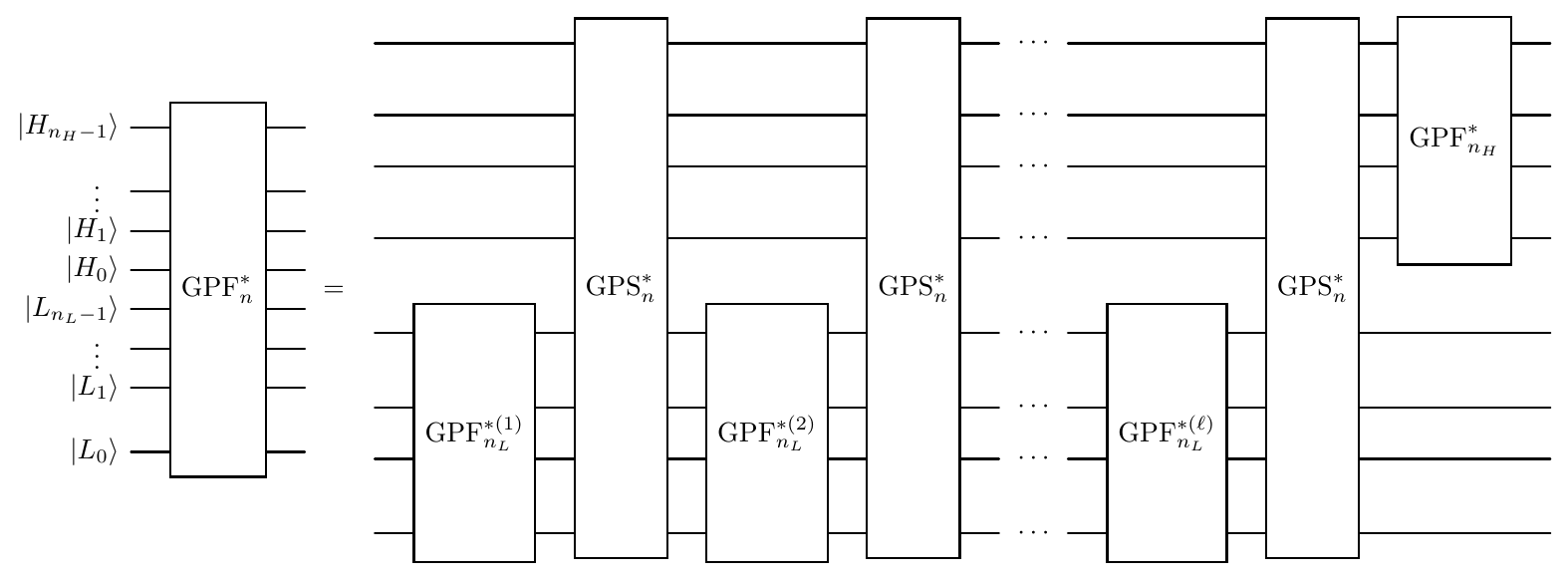}
        \caption{$\GPF^*$: uses the phase-prioritized $\GPS^*$ throughout the mixed-mode portion.}
        \label{fig:gpf-star}
    \end{subfigure}
    \caption{Recursive composition of the $\GPF$ and $\GPF^*$ skeletons; both carry no phase gates, which are inserted into the slots of their $\GPS$/$\GPS^*$ modules (Fig.~\ref{fig:gps}).}
\end{figure}

\begin{figure*}[!t]
    \centering
    \includegraphics[width=0.88\textwidth]{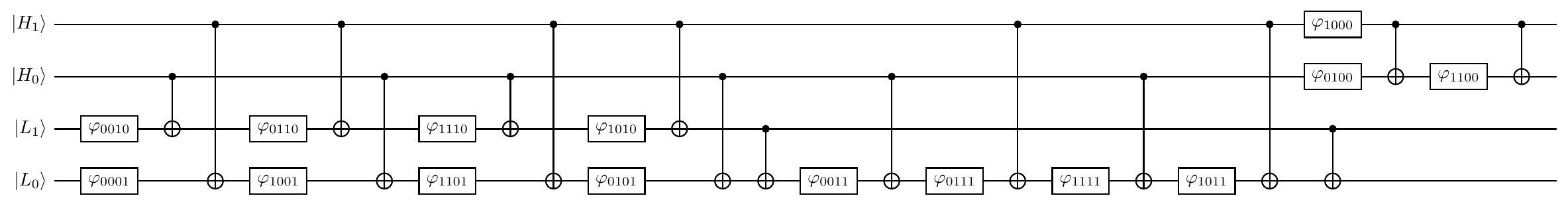}
    \caption{Complete GPF phase-insertion circuit for $n=4$, obtained by filling every insertion slot of the 4-qubit GPF skeleton. Each phase label $\varphi_s$ is indexed by a four-bit Walsh mode.}
    \label{fig:gpf4-full}
\end{figure*}

\subsection{Depth Analysis}
\label{sec:depth}

Let $D^{\GPF^*}_z(n)$ and $D^{\GPF^*}_{\CNOT}(n)$ denote the phase and CNOT depths of $\GPF^*$, respectively. Let $D^{\GPF}_z(n)$ and $D^{\GPF}_{\CNOT}(n)$ denote the corresponding depths of $\GPF$. The base case is immediate: when $n=1$, the only nonzero Walsh mode is $e_1$, so one phase layer and no CNOT layer are required:
\begin{equation}
\begin{aligned}
    D^{\GPF^*}_z(1)&=D^{\GPF}_z(1)=1,\\
    D^{\GPF^*}_{\CNOT}(1)&=D^{\GPF}_{\CNOT}(1)=0.
\end{aligned}
\label{eq:base}
\end{equation}

For $n\ge 2$, the recursive construction gives the following depth recurrences:
\begin{equation}\label{eq:recurrences}
\resizebox{0.88\columnwidth}{!}{$\displaystyle
\begin{aligned}
    &D^{\GPF^*}_z(n) = 2^{n_H-1}D^{\GPF^*}_z(n_L)+D^{\GPF^*}_z(n_H),\\
    &D^{\GPF^*}_{\CNOT}(n) = \kappa\,D^{\GPF^*}_z(n)+D^{\GPF^*}_{\CNOT}(n_H)+D^{\GPF^*}_{\CNOT}(n_L),\\
    &D^{\GPF}_z(n) = 2^{n_H}D^{\GPF^*}_z(n_L)+D^{\GPF}_z(n_H),\\
    &D^{\GPF}_{\CNOT}(n) = D^{\GPF}_z(n)+D^{\GPF}_{\CNOT}(n_H)+D^{\GPF^*}_{\CNOT}(n_L).
\end{aligned}$}
\end{equation}
The recurrences in~\eqref{eq:recurrences} formalize the construction: the low subspace determines how many mixed groups are needed, the high subspace supplies the local parallel traversal, and the pure high-space modes recurse. The coefficient $\kappa$ in the $D^{\GPF^*}_{\CNOT}$ line is the fold factor of Eq.~\eqref{eq:kappa-def}: $\kappa=3$ exactly for the BRGC, while a low-jump sequence gives $\kappa\in(3,4)$ with no closed form. The remaining three lines are Gray-code-independent.

To examine the leading constants we track the normalized depths $D^{\GPF}_z(n)/(2^n/n)$ and $D^{\GPF}_{\CNOT}(n)/(2^n/n)$, together with the analogous $\GPF^*$ ratios $D^{\GPF^*}_z(n)/(2^n/n)$ and $D^{\GPF^*}_{\CNOT}(n)/(2^n/n)$. Figures \ref{fig:gpf-depth-scaling} and \ref{fig:gpfstar-depth-scaling} plot these quantities obtained by direct recurrence evaluation. The curves are not monotone; they oscillate within dyadic scale intervals due to the rounding effects in $n_H=\lceil n/2\rceil$ and $n_L=\lfloor n/2\rfloor$. The lower envelope is attained on the balanced subsequence $n=2^\lambda$, where every recursive split is exact, and the upper envelope on the maximally unbalanced subsequence $n=2^\lambda-1$. As shown next, both envelopes converge to explicit constants, so the normalized depth is bracketed by two constants for all sufficiently large $n$.

\begin{figure}[!t]
    \centering
    \includegraphics[width=\columnwidth]{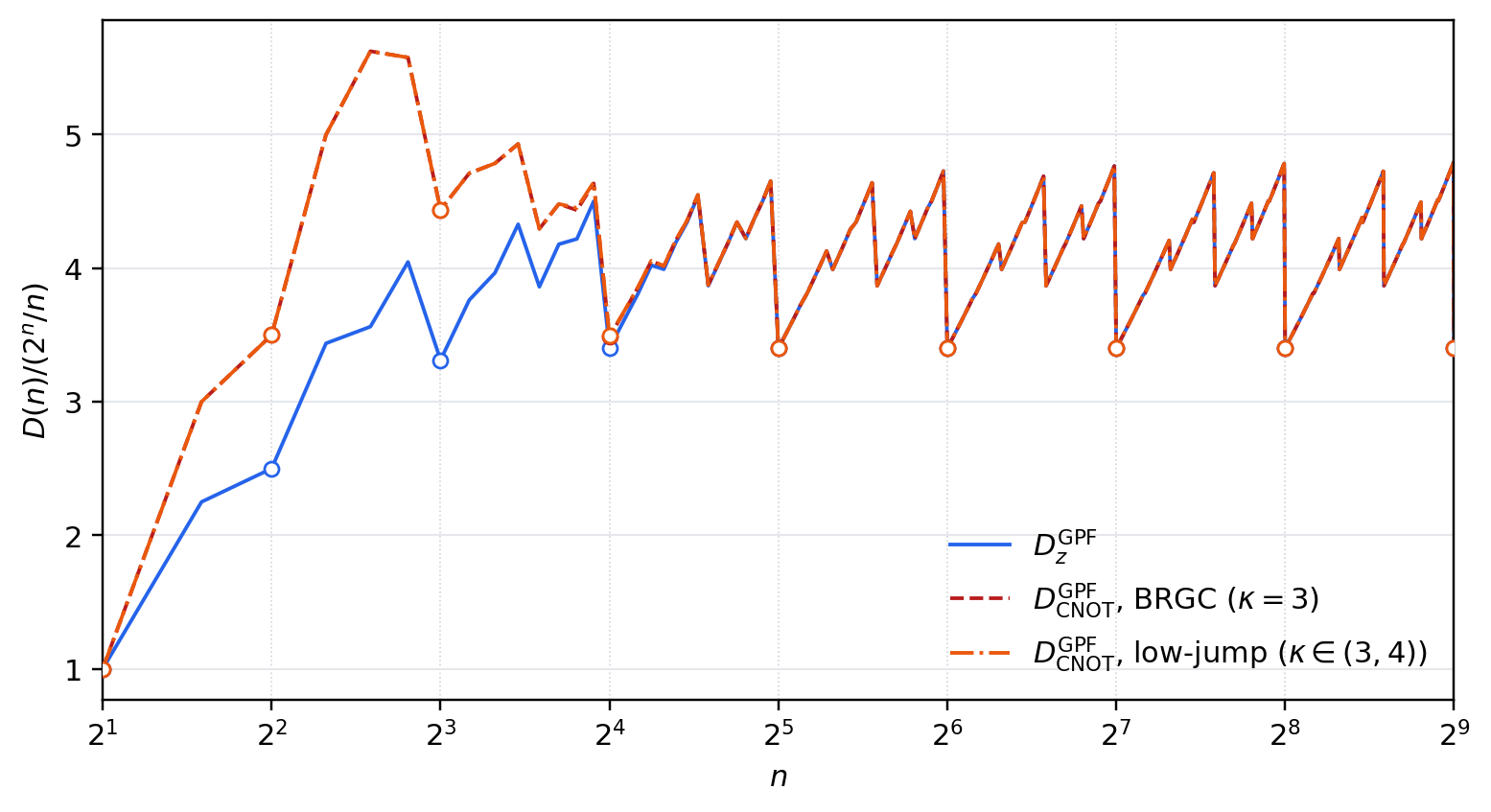}
    \caption{Normalized depth constants for $\GPF$. The phase-depth ratio is Gray-code-independent; the CNOT-depth ratio depends on the transition sequence only weakly, through the fold factor $\kappa$ (Section~\ref{sec:graycode}), so the BRGC and low-jump curves differ only at small $n$. Both converge to $C\approx 3.40147$ (hollow markers: $n=2^\lambda$).}
    \label{fig:gpf-depth-scaling}
\end{figure}

\begin{figure}[!t]
    \centering
    \includegraphics[width=\columnwidth]{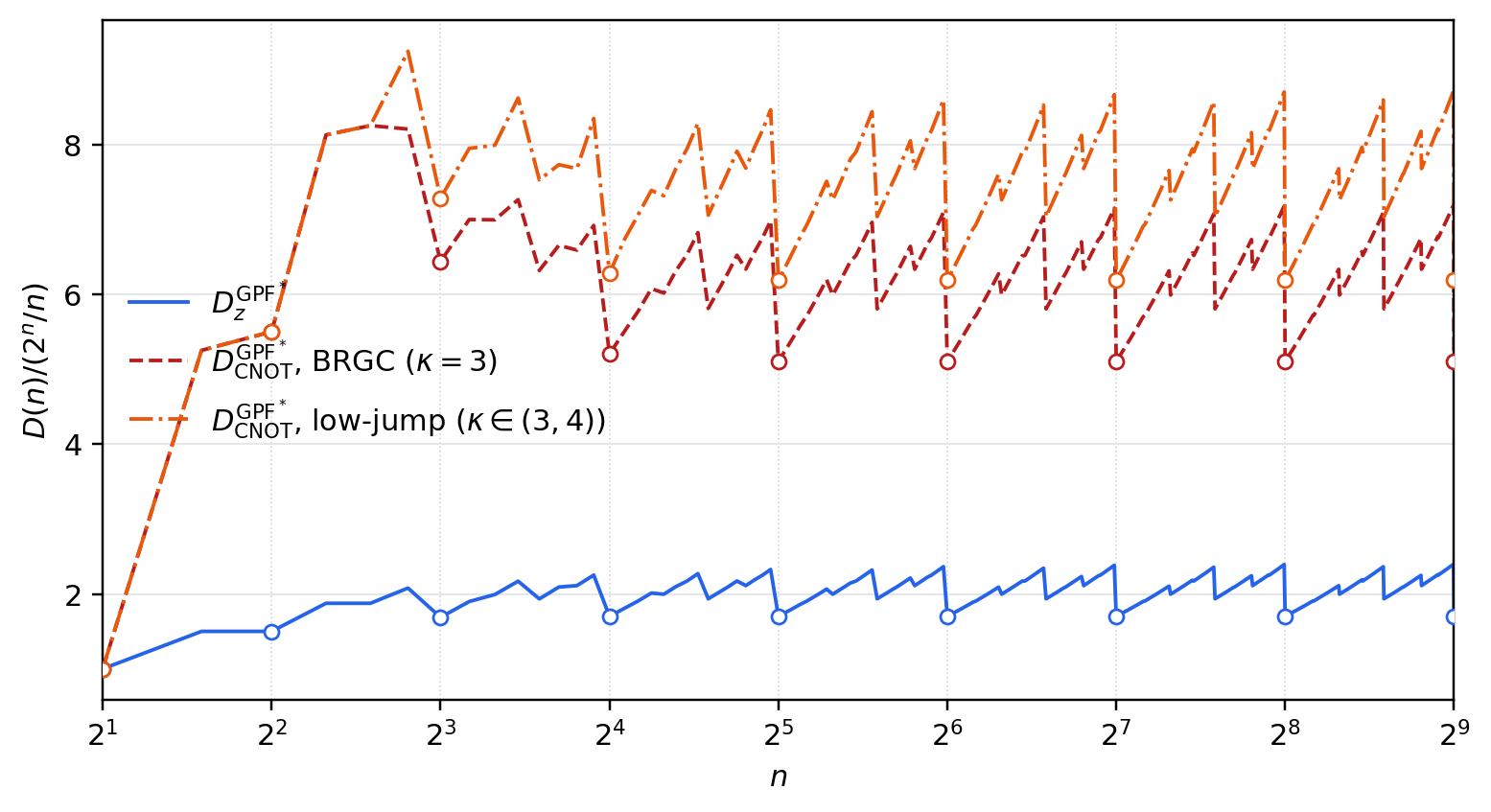}
    \caption{Normalized depth constants for $\GPF^*$. The phase-depth ratio is Gray-code-independent; the CNOT-depth ratio carries the fold factor $\kappa$, converging to $3C/2\approx 5.10$ for the BRGC ($\kappa=3$) and slightly higher for the low-jump sequence ($\kappa\in(3,4)$; Appendix~\ref{app:lowjump}). Hollow markers: $n=2^\lambda$.}
    \label{fig:gpfstar-depth-scaling}
\end{figure}

\begin{theorem}[Closed-form depth constants]
\label{thm:balanced-constants}
Let
\begin{equation}
    C:=\prod_{j=1}^{\infty}\bigl(1+2^{\,1-2^{j-1}}\bigr)
    =2\cdot\tfrac{3}{2}\cdot\tfrac{9}{8}\cdot\tfrac{129}{128}\cdots
    \approx 3.40147 .
    \label{eq:const-C}
\end{equation}
Under all-to-all connectivity, the GPF construction synthesizes arbitrary $n$-qubit diagonal unitaries without ancillas with
\begin{equation}
    D^{\GPF}_z(n)=O(2^n/n),\qquad
    D^{\GPF}_{\CNOT}(n)=O(2^n/n),
\end{equation}
and the leading constants are governed entirely by $C$. As $\lambda\to\infty$, the balanced subsequence $n=2^\lambda$ gives
\begin{equation}
    \frac{D^{\GPF}_z(2^\lambda)}{2^n/n},\ \frac{D^{\GPF}_{\CNOT}(2^\lambda)}{2^n/n}\ \to\ C\approx 3.40147 ,
\end{equation}
while the maximally unbalanced subsequence $n=2^\lambda-1$ gives
\begin{equation}
\begin{aligned}
    &\frac{D^{\GPF}_z(2^\lambda-1)}{2^n/n},\ \frac{D^{\GPF}_{\CNOT}(2^\lambda-1)}{2^n/n}\\
    &\qquad\to\ 2(C-1)\approx 4.80294 .
\end{aligned}
\end{equation}
For the phase-prioritized construction $\GPF^*$, the phase-depth limits are Gray-code-independent,
\begin{equation}
\begin{aligned}
    \frac{D^{\GPF^*}_z(2^\lambda)}{2^n/n}&\to\tfrac{C}{2}\approx 1.70074,\\
    \frac{D^{\GPF^*}_z(2^\lambda-1)}{2^n/n}&\to C-1\approx 2.40147,
\end{aligned}
\end{equation}
whereas the CNOT-depth limits carry the fold factor $\kappa$: $D^{\GPF^*}_{\CNOT}(n)/(2^n/n)\to\tfrac{\kappa}{2}C$ on $n=2^\lambda$ and $\to\kappa(C-1)$ on $n=2^\lambda-1$. For the BRGC ($\kappa=3$) these are $\tfrac{3C}{2}\approx 5.10221$ and $3(C-1)\approx 7.20441$; for a low-jump sequence $\kappa\in(3,4)$ is width-dependent with no closed form (Fig.~\ref{fig:gpfstar-depth-scaling}).
\end{theorem}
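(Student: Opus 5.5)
The plan is to solve the recurrences~\eqref{eq:recurrences} exactly along the two subsequences, normalize by $2^n/n$, and pass to the limit. Everything else is driven by the phase depth of $\GPF^*$, so I would treat it first.

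\emph{Balanced case.} For $n=2^\lambda$ we have $n_H=n_L=m=n/2$, so $D^{\GPF^*}_z(2m)=(2^{m-1}+1)\,D^{\GPF^*}_z(m)$. Unrolling down to the base case~\eqref{eq:base} gives
\[
D^{\GPF^*}_z(2^\lambda)=\prod_{j=1}^{\lambda}\bigl(2^{2^{j-1}-1}+1\bigr).
\]
I would factor each term as $2^{2^{j-1}-1}\bigl(1+2^{1-2^{j-1}}\bigr)$. The powers of two multiply to $2^{(2^\lambda-1)-\lambda}=2^n/(2n)$. Hence the normalized depth is exactly half the partial product $C_\lambda:=\prod_{j\le\lambda}(1+2^{1-2^{j-1}})$, which tends to $C/2$.

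For $\GPF$ the third line of~\eqref{eq:recurrences} reads $D^{\GPF}_z(2m)=2^mD^{\GPF^*}_z(m)+D^{\GPF}_z(m)$. The first term equals $2^m\cdot\tfrac{C_{\lambda-1}}{2}\cdot\tfrac{2^m}{m}=C_{\lambda-1}\,2^n/n$. The second term is $O(2^{n/2})$, which I would show by a crude induction $D(n)\le K2^n/n$ valid for all four quantities. Its normalized contribution therefore vanishes, and the limit is $C$.

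\emph{Unbalanced case.} For $n=2^\lambda-1$ we have $n_H=2^{\lambda-1}$ (balanced) and $n_L=2^{\lambda-1}-1$ (unbalanced again). Unrolling the $\GPF^*$ phase recurrence along the chain $2^\lambda-1\to 2^{\lambda-1}-1\to\cdots\to1$ gives an exact finite sum. Its terms are the balanced values $D^{\GPF^*}_z(2^{k})$, already computed in closed form, each multiplied by the accumulated prefactors $\prod_{i>k}2^{2^{i-1}-1}$.

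After normalization the summands become $x_k\prod_{j<k}(1+x_j)$ with $x_j=2^{1-2^{j-1}}$, up to the bookkeeping of the $n$ versus $2n_L$ ratios. The key identity I would then invoke is the telescoping
\[
\prod_{j\le\lambda}(1+x_j)-1=\sum_{k\le\lambda}x_k\prod_{j<k}(1+x_j),
\]
which yields the limit $C-1$. For $\GPF$, the dominant term $2^{n_H}D^{\GPF^*}_z(n_L)$ has an extra factor $2$ relative to the $\GPF^*$ recursion, together with $n/n_L\to2$ absorbed in the normalization. I would check that these combine to give $2(C-1)$, with the trailing $D^{\GPF}_z(n_H)$ again being $O(2^{n/2})$.

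\emph{Main obstacle.} The hardest step is exactly this index bookkeeping: matching the factors $(2^\lambda-1)/(2^\lambda-2)$ and the powers of two so that the sum is precisely the telescoping expansion. Here the lower-order terms are \emph{not} negligible at small $k$; they are what supply the $-1$. I would first verify the match numerically for $\lambda\le4$ and then prove it by induction on $\lambda$.

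\emph{CNOT depths.} The fourth line of~\eqref{eq:recurrences} gives $D^{\GPF}_{\CNOT}(n)=D^{\GPF}_z(n)+D^{\GPF}_{\CNOT}(n_H)+D^{\GPF^*}_{\CNOT}(n_L)$. The two recursive terms live on $\le\lceil n/2\rceil$ qubits and are $O(2^{n/2})$ by the crude bound, so the CNOT ratios inherit the phase limits $C$ and $2(C-1)$. Likewise $D^{\GPF^*}_{\CNOT}=\kappa D^{\GPF^*}_z+O(2^{n/2})$ gives $\tfrac{\kappa}{2}C$ and $\kappa(C-1)$, interpreted with $\kappa$ evaluated at the top width (for the BRGC it is exactly $3$). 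Finally, the $O(2^n/n)$ claim for all $n$ follows from the same crude induction, whose constant can be taken from the upper envelope.
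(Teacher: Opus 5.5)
Your computations along both subsequences are correct and essentially the paper's. The balanced case is the same telescoped product. For $n=2^\lambda-1$, the paper's normalization $v_\lambda=D^{\GPF^*}_z(2^\lambda-1)/2^{2^\lambda-1-\lambda}$ gives $v_\lambda=v_{\lambda-1}+(P_\lambda-P_{\lambda-1})$, which is your identity $\prod_{j\le\lambda}(1+x_j)-1=\sum_k x_k\prod_{j<k}(1+x_j)$ written recursively. If you normalize once by $2^n/(n+1)=2^{2^\lambda-1-\lambda}$, rather than level by level by $2^n/n$, each unrolled summand is exactly $x_kP_{k-1}$. The bookkeeping you call the main obstacle then disappears, leaving only a final factor $n/(n+1)\to1$. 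For the $\GPF$ limit, count the factor $2$ only once: $2^{n_H}D^{\GPF^*}_z(n_L)/(2^n/n)=(n/n_L)\cdot D^{\GPF^*}_z(n_L)/(2^{n_L}/n_L)\to 2(C-1)$. Your ``extra factor $2$'' and ``$n/n_L\to2$'' are the same factor, not two.

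The step that fails as stated is the crude induction $D(n)\le K2^n/n$. In the $\GPF^*$ phase recurrence, the hypothesis turns the first term into $2^{n_H-1}\cdot K2^{n_L}/n_L=K2^{n-1}/n_L$. This equals $K2^n/n$ for even $n$ and $K2^n/(n-1)$ for odd $n$, leaving no room for the positive term $D^{\GPF^*}_z(n_H)$, whatever $K$ is. The bound may well be true for large $K$; the induction simply cannot propagate it. Two repairs are needed.

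\emph{Subsequence limits.} You do not need the $2^n/n$ bound here: the remainder terms only have to be $o(2^n/n)$. The trivial bound $D^{\GPF^*}_z(m)\le 2^m$ follows by induction, since $2^{n-1}+2^{n_H}\le 2^n$ for $n\ge2$. Analogous $O(2^m)$ bounds hold for the other three quantities, and these suffice because $m\le\lceil n/2\rceil$.

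\emph{The all-$n$ claim.} Strengthen the hypothesis to $D^{\GPF^*}_z(m)\le K2^m/(m+2)$. The first term is then at most $K2^n/(2n_L+4)\le K2^n/(n+3)$. The resulting slack $K2^n/((n+2)(n+3))$ absorbs $D^{\GPF^*}_z(n_H)\le K2^{n_H}/(n_H+2)$ for every $n\ge 10$, and smaller $n$ are absorbed into $K$. The bounds for $D^{\GPF}_z$ and the two CNOT depths then follow with larger constants, because their self-referential terms carry no factor $2^{n_H}$.

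This gives some constant, not $2(C-1)$, so ``the constant can be taken from the upper envelope'' is unsupported. The paper does not prove that either: its for-all-$n$ envelope claim rests on evaluating the recurrences for $n\le 8192$. A repaired induction would therefore put the $O(2^n/n)$ statement on firmer footing than the paper's own argument.
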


\begin{proof}[Proof sketch]
On $n=2^\lambda$ the split is exact ($n_H=n_L=2^{\lambda-1}$), and the $\GPF^*$ phase recurrence in~\eqref{eq:recurrences} collapses to the scalar product $D^{\GPF^*}_z(2^\lambda)=\bigl(2^{\,2^{\lambda-1}-1}+1\bigr)D^{\GPF^*}_z(2^{\lambda-1})$. Telescoping from $D^{\GPF^*}_z(1)=1$ and factoring out the dominant power of two gives $D^{\GPF^*}_z(2^\lambda)=2^{\,2^\lambda-1-\lambda}\prod_{j=1}^{\lambda}\bigl(1+2^{\,1-2^{j-1}}\bigr)$, so $D^{\GPF^*}_z(2^\lambda)/(2^n/n)=\tfrac12\prod_{j=1}^{\lambda}\bigl(1+2^{\,1-2^{j-1}}\bigr)\to C/2$; the product converges absolutely because its $j$-th factor differs from $1$ by $2^{\,1-2^{j-1}}$, which decays doubly exponentially. The remaining constants follow from the leading coefficients of the other three lines of~\eqref{eq:recurrences}: $D^{\GPF^*}_{\CNOT}$ carries the fold factor $\kappa$ (Section~\ref{sec:graycode}), equal to $3$ for the BRGC; the $\GPF$ phase recurrence replaces $2^{n_H-1}$ by $2^{n_H}$ and thus doubles the constant, and the recursive remainder terms vanish under normalization. The upper envelope is obtained by the same telescoping along $n=2^\lambda-1$, where $n_H=2^{\lambda-1}$ is a perfect power of two and $n_L=2^{\lambda-1}-1$ recurses within the family; the base constant then shifts from $C$ to $C-1$. Full details, including the envelope derivation, are given in Appendix~\ref{app:constants}.
\end{proof}

\begin{corollary}[Explicit constant for all $n$]
\label{cor:all-n}
The two subsequences in Theorem~\ref{thm:balanced-constants} are the lower and upper envelopes of the normalized depth. Hence, for every $n$,
\begin{equation}
    D^{\GPF}_z(n),\,D^{\GPF}_{\CNOT}(n)\le\bigl(2(C-1)+o(1)\bigr)\frac{2^n}{n},
\end{equation}
with the bound approached along $n=2^\lambda-1$ and improved to the constant $C$ along $n=2^\lambda$. Here $o(1)\to 0$ as $n\to\infty$ (the normalized depth approaches the constant $2(C-1)$ from below), so it is a vanishing correction, not one of the fixed $O(1)$ leading constants ($C$, $2(C-1)$) it accompanies.
\end{corollary}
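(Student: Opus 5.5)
The plan is to prove the envelope claim directly from the recurrences~\eqref{eq:recurrences} by strong induction on $n$, working with normalized quantities. Set
\[
a(n)=\frac{D^{\GPF^*}_z(n)}{2^n/n},\qquad b(n)=\frac{D^{\GPF}_z(n)}{2^n/n}.
\]
Dividing the phase lines of~\eqref{eq:recurrences} by $2^n/n$ and using $2^{n_H}\cdot 2^{n_L}/n_L=2^n/n_L$ gives
\[
a(n)=\frac{n}{2n_L}\,a(n_L)+\frac{n}{2^{n_L}n_H}\,a(n_H),\qquad
b(n)=\frac{n}{n_L}\,a(n_L)+\frac{n}{2^{n_L}n_H}\,b(n_H).
\]
The second summand in each line is $O(n2^{-n/2})$ times a bounded quantity, so it is a vanishing remainder. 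The main term is governed by the factor $n/(2n_L)$. This factor equals $1$ when $n$ is even, and equals $1+1/(2n_L)$ when $n$ is odd. That is the source of the oscillation.

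First I would show by induction that $a(n)\le A$ for an explicit constant $A$. The multiplicative amplification along the recursion chain $n\to n_L\to\cdots\to 1$ is $\prod(1+\text{correction at odd steps})$. Every step at least halves $n$, so the product converges. Its value is largest when every step in the chain is odd, that is, along $n=2^\lambda-1$, where the chain stays in the family $2^j-1$. Comparing termwise, I would show that for a general $n$ with binary chain $n_0=n$, $n_{i+1}=\lfloor n_i/2\rfloor$, the product $\prod_i n_i/(2n_{i+1})$ together with the additive remainders is dominated by the corresponding product for $2^{\lambda}-1$ of the same length. Telescoping then evaluates that product to the limit $C-1$ from Theorem~\ref{thm:balanced-constants}. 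The even-only chain gives the lower envelope $C/2$ by the same telescoping. The $\GPF$ ratio $b(n)$ inherits these bounds up to the factor $2$ via $n/n_L$ versus $n/(2n_L)$. This yields $b\le 2(C-1)+o(1)$ and $b\ge C-o(1)$ along $n=2^\lambda$.

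For the CNOT depth I would normalize the fourth line the same way. Its main term is $b(n)$ itself. The extra terms $D^{\GPF}_{\CNOT}(n_H)+D^{\GPF^*}_{\CNOT}(n_L)$ are $O(2^{n/2})$ once we use the a priori bound $D_{\CNOT}^{\GPF^*}=O(2^m/m)$, which follows from $\kappa\le4$ and the bound on $a$. After normalization they are $O(n2^{-n/2})=o(1)$. So the CNOT ratio equals $b(n)+o(1)$, and the same envelopes hold.

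The main obstacle is the comparison step: showing that among all chains the all-odd chain maximizes $\prod n_i/(2n_{i+1})$ including remainder terms, and that the convergence is from below. I would handle this with the bound $n_i/(2n_{i+1})\le 1+1/(n_i-1)$ and show $\prod_i(1+1/(n_i-1))$ is maximized when each $n_i$ is as small as possible given its depth, i.e.\ $n_i=2^{\lambda-i}-1$. A monotone-coupling argument does this, with an explicit check of finitely many small $n$ to absorb the remainders into the $o(1)$ term.
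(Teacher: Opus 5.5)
Your normalization is correct: $a(n)=\frac{n}{2n_L}a(n_L)+\frac{n}{2^{n_L}n_H}a(n_H)$ and $b(n)=2a(n)+O(n2^{-n/2})$, and the CNOT line does reduce to $b(n)+o(1)$. The paper does not attempt an analytic comparison. It takes the two limits from the closed-form telescoping along $n=2^\lambda$ and $n=2^\lambda-1$, and it supports the envelope claim only by evaluating the recurrences directly for all $n\le 8192$. Your route is therefore different, but the step that carries the corollary, namely that the all-odd chain dominates, fails as you describe it.

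\emph{First gap: the additive terms are not a perturbation.} Along $n=2^\lambda$ every factor $n/(2n_L)$ equals $1$. So the entire growth from $a(1)=\tfrac12$ to $P_\lambda/2$ comes from the remainders $r(2^j)=2^{1-2^{j-1}}a(2^{j-1})$. Along $n=2^\lambda-1$ the pure product is $\prod_{k=2}^{\lambda}\frac{2^k-1}{2^k-2}=\frac{2^\lambda-1}{2^{\lambda-1}}$. This gives only $a(1)\cdot 2\to 1$, not $C-1\approx 2.40$. The missing $\approx 1.40$ is the amplified sum of the remainders, whose $a(n_H)$ are the balanced values at $n_H=2^{\lambda-1}$; this is exactly how $P_{\lambda-1}$ enters the paper's telescoping of $v_\lambda$. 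So ``telescoping the product gives $C-1$'' is false. The small-$n$ remainders are $\Theta(1)$ contributions to the limit, and they can change the ordering between chains. The pure product favors $13$, with factor $\tfrac{13}{12}\cdot\tfrac32$, over $11$, with factor $\tfrac{11}{10}\cdot\tfrac54$. Yet $a(11)\approx 2.170>a(13)\approx 2.092$. A finite check therefore cannot absorb the remainders into $o(1)$.

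\emph{Second gap: your monotone coupling points the wrong way.} On the interval $[2^{\lambda-1},2^\lambda)$, the all-odd chain $n_i=2^{\lambda-i}-1$ has the \emph{largest} possible $n_i$ at each depth, not the smallest. Since $1+1/(n_i-1)$ is decreasing, maximizing $\prod_i(1+1/(n_i-1))$ over the interval selects the all-even chain $n=2^{\lambda-1}$. There the bound evaluates to $\prod_k\frac{2^k}{2^k-1}\approx 3.46$, although the true factor is $1$, compared with $\to 2$ on the all-odd chain. The bound $n_i/(2n_{i+1})\le 1+1/(n_i-1)$ is too lossy on even steps to single out $2^\lambda-1$.

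A direct strong induction ``$a\le a(2^\lambda-1)$ on each dyadic interval'' does not close either. The point $n=2^\lambda+1$ receives amplification $1+2^{-\lambda}$, roughly twice the excess $1/(2^{\lambda+1}-2)$ available at $n=2^{\lambda+1}-1$.

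What is missing is a position-dependent comparison lemma with the remainders tracked exactly. It would have to exploit that a small odd $n$ forces $n_L$ to sit low in its own interval, where $a(n_L)$ is well below the maximum. Without such a lemma your argument gives $D^{\GPF}_z(n)=O(2^n/n)$ with a non-sharp constant, not the bound $2(C-1)+o(1)$ or convergence from below. It also does not show that $n=2^\lambda$ is the lower envelope, beyond computing the limit along that subsequence. If you did supply the lemma, the result would be stronger than the paper's, whose envelope claim rests on finite computation.
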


Both envelopes are recorded in Theorem~\ref{thm:balanced-constants} and plotted in Figs.~\ref{fig:gpf-depth-scaling}--\ref{fig:gpfstar-depth-scaling}. The basic GP construction is useful as a correctness baseline, while GPF exposes the parallelism needed to reach the asymptotically optimal scale. The $\GPF^*$ variant is preferable when phase gates are substantially more expensive than CNOT gates, as in some fault-tolerant cost models \cite{Litinski2019SurfaceCodes}.

\section{From GPF Structure to Two-Row Placement}
\label{sec:placement}

The synthesis half showed that GPF realizes any $n$-qubit diagonal unitary in depth $O(2^n/n)$ under all-to-all connectivity (Section~\ref{sec:depth}). We now compile this construction onto a two-dimensional nearest-neighbor grid and show that its asymptotic complexity is preserved. Since only two-qubit gates are constrained by connectivity, the placement and routing analysis targets only the CNOT skeleton of GPF, and the circuit diagrams retain phase gates only to mark insertion positions. From the synthesis half we inherit only its synthesis-side objects, namely the recursive high/low split, the modules $\GPS$/$\GPS^*$, the transition sequence $(f_t)$, the block-count recurrence (the first line of Eq.~\eqref{eq:recurrences}), and the constant $C\approx 3.40147$ (Eq.~\eqref{eq:const-C}); all routing-side machinery (offset states, the SWAP-A/B ring, $\mu_{\mathrm{jump}}$, the folding map, and the two routing-module theorems) is introduced here.

To design tailored routing for GPF, we must identify what connections it requires and what they mean physically. This section makes the ``connect high and low sub-blocks'' action of Section~\ref{sec:gpf} concrete: every recursive node produces a family of bipartite matchings, and the natural physical placement for them is a two-row grid. The two geometries the rest of the paper analyzes---two-row top level and single-row recursion---are summarized by one recursion model (Fig.~\ref{fig:recursion-tree}).

\subsection{The Common Connectivity Requirement: Bipartite Matching}
\label{sec:bipartite}

At the top level, the high--low mixing is performed by GPS. GPS parallelizes multiple GP tracks: each low-space qubit is a phase-injection track, and the high-space qubits supply control sources; when step $t$ of the high Gray traversal flips coordinate $f_t$, each low track interacts with the cyclically shifted high qubit. Under all-to-all connectivity this step is a set of disjoint, parallelizable CNOTs (the GPS CNOT skeleton of Fig.~\ref{fig:gps}).

Collecting all the high--low connections a single GPS may use across all steps, one sees that it requires a family of cyclic-offset matchings between the high set $\mathcal{Q}_H$ and the low set $\mathcal{Q}_L$: each step is a perfect matching under a certain offset, and all steps together cover connections approaching a complete bipartite graph (Fig.~\ref{fig:gps-bipartite}).

\begin{figure}[!t]
    \centering
    \includegraphics[width=\columnwidth]{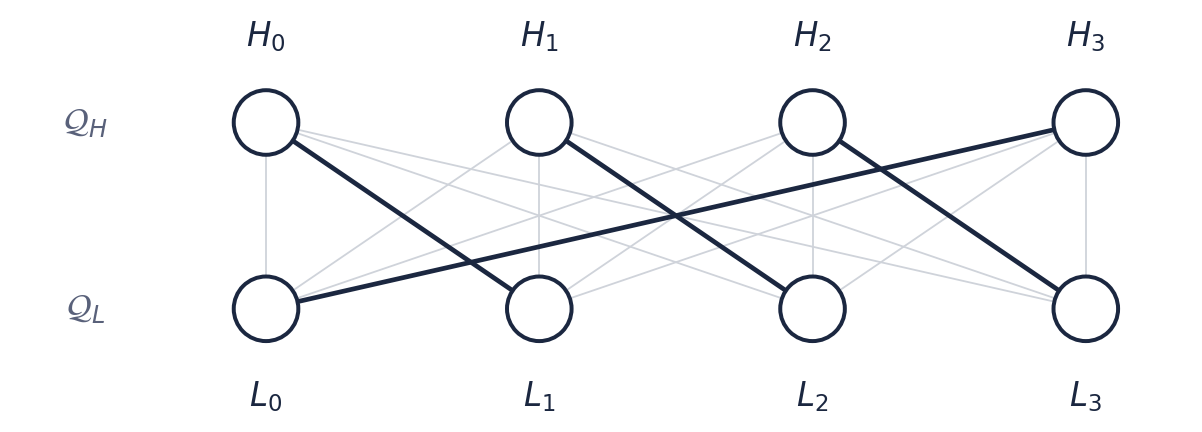}
    \caption{Bipartite connectivity abstraction of GPS. Gray edges represent cross-group connections the top-level GPS may access across different steps; black edges illustrate one specific cyclic-offset matching.}
    \label{fig:gps-bipartite}
\end{figure}

This bipartite matching is not exclusive to the top level. The GPS and GPS$^*$ at recursive nodes face the same type of local problem---splitting a sub-segment into high and low sub-blocks and organizing a family of matchings between them according to their Gray-path schedules. In other words, GPS and GPS$^*$ share the same connectivity type at the synthesis level---multi-layer bipartite matching between two logical halves. The two differ only in the order and multiplicity of matching visits---GPS visits layer by layer in transition-sequence order, GPS$^*$ alternates forward and reverse to trade more CNOT layers for fewer phase layers---not in the geometric type of connectivity.

The core problem for restricted-topology compilation thus becomes concrete: all-to-all connectivity can directly execute any offset matching; nearest-neighbor connectivity must first bring the high and low qubits that need to interact to physically adjacent positions before each matching layer. This converts the problem from ``which CNOTs to synthesize'' to ``how to place qubits and move them between matchings''---placement and routing.

\subsection{Two-Row Embedding and Notation Hierarchy}
\label{sec:tworow-embedding}

The bipartite connectivity almost immediately dictates the placement: put the high half on the upper row and the low half on the lower row, so that each column's vertical edge naturally carries one high--low interaction. The number of columns equals the larger half-space size,
\begin{equation}
    n_H=\lceil n/2\rceil.
\end{equation}
When $n$ is even, $n=2n_H$; when $n$ is odd, $n_H=n_L+1$, and an idle physical site can be appended to the shorter low row to equalize both rows to length $n_H$; this idle site serves only as a routing placeholder and carries no logical qubit. The embedding thus consists of two paths of length $n_H$ plus $n_H$ vertical column edges: horizontal edges execute intra-row SWAPs, and vertical edges at the appropriate time execute cross-row CNOTs (Fig.~\ref{fig:initial}).

\begin{figure}[!t]
    \centering
    \includegraphics[width=\columnwidth]{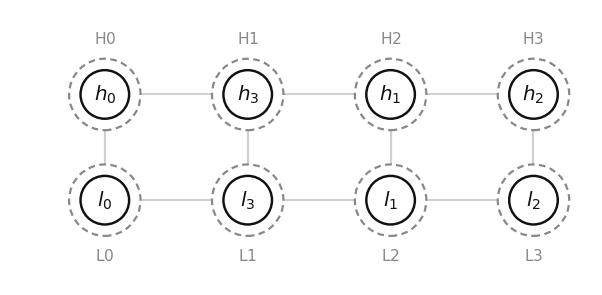}
    \caption{Two-row embedding for $n=2\times 4$: the high half on the upper row, the low half on the lower row (sites $\mathrm{H}_i/\mathrm{L}_i$, occupancy labels $h_i/l_i$; Table~\ref{tab:notation}). Each column's vertical edge carries one high--low interaction; horizontal edges are intra-row SWAPs.}
    \label{fig:initial}
\end{figure}

On the compilation side the primary $\mathrm{H}_i/\mathrm{L}_i$ identity of Section~\ref{sec:gpf} is reused as the fixed physical site, while routing additionally assigns a per-phase occupancy label $h_i/l_i$; together with the global logical index $x_i$ and the recursive coordinate $Q_i$ these make up the four notation layers:

\begin{table}[!t]
    \centering
    \caption{Notation hierarchy for qubit labels. The primary identity $\mathrm{H}_i/\mathrm{L}_i$ is introduced at the synthesis-side high/low partition (Section~\ref{sec:gpf}) and reused as the physical sites on the two-row grid; the remaining labels are sub-algorithm-specific relabelings.}
    \label{tab:notation}
    \begin{tabular}{@{}lll@{}}
        \toprule
        Layer & Notation & Scope \\
        \midrule
        Primary qubit/site & $\mathrm{H}_i$, $\mathrm{L}_i$ & Both halves \\
        Global logical & $x_i$ & Synthesis (basis state) \\
        Offset occupancy & $h_i$, $l_i$ & Section~\ref{sec:tworow} only \\
        Recursive & $Q_0,Q_1,\ldots$ & Section~\ref{sec:onerow} only \\
        \bottomrule
    \end{tabular}
\end{table}

Fig.~\ref{fig:initial} shows the physical sites $\mathrm{H}_i/\mathrm{L}_i$ together with the canonical top-level occupancy $h_i/l_i$ (Table~\ref{tab:notation}). The two-row embedding is the common placement premise for both the top-level and recursive problems; the offset states and cross-row matchings belong only to the top-level GPS.

\begin{figure}[!t]
    \centering
    \begin{tikzpicture}[
        every node/.style={align=center,font=\footnotesize},
        edge from parent/.style={draw,-latex,thick},
        level 1/.style={sibling distance=28mm,level distance=20mm}]
      \node[draw,rounded corners,fill=black!4] (root) {$\GPF(n)$\\\emph{recursive split} $\mathcal{Q}_H\sqcup\mathcal{Q}_L$}
        child {node[draw,rounded corners,fill=blue!6,text width=20mm] {(i) top-level GPS\\\textbf{two rows}, $2\times n_H$\\Sec.~\ref{sec:tworow}}}
        child {node[draw,rounded corners,fill=red!6,text width=20mm] {(ii) high half\\$\GPF(n_H)$\\\textbf{single row}, Sec.~\ref{sec:onerow}}}
        child {node[draw,rounded corners,fill=red!6,text width=20mm] {(iii) low half\\$\GPF^*(n_L)$\\\textbf{single row}, Sec.~\ref{sec:onerow}}};
    \end{tikzpicture}
    \caption{Unified recursion model. $\GPF(n)$ is (i)~a set of top-level GPS-2L mixings between the two halves on a $2\times n_H$ grid, plus (ii)~a high-half recursion $\GPF(n_H)$ and (iii)~a low-half recursion $\GPF^*(n_L)$, both on \emph{single-row} segments. Two-row geometry occurs only at the top level; after the first split, all deeper blocks are single-row (with higher-level qubits acting as bystanders on the line). This figure is the mental model referenced throughout Sections~\ref{sec:routing}--\ref{sec:full}.}
    \label{fig:recursion-tree}
\end{figure}

The compilation is now split into two isomorphic subproblems, differing only in geometry: the top-level GPS routes its cyclic-offset matchings on a two-row grid (Section~\ref{sec:tworow}), and the recursive GPS/GPS$^*$ routes its local matchings on single-row segments (Section~\ref{sec:onerow}). Both solve the same task---implementing a sequence of bipartite matchings with nearest-neighbor gates layer by layer---and the next two sections give the closed-form cost under each geometry. The single-row geometry is the more constrained of the two; folding the top-level block onto a single row as well would place the entire construction on one line, and the two-row top level is retained only as the cost-optimized choice.

\section{Routing Modules}
\label{sec:routing}

\subsection{Top-Level GPS: Two-Row Routing (Theorem A)}
\label{sec:tworow}

This section routes a top-level GPS block on the two-row grid, encoding its matchings as offset states and executing them by moving along a SWAP-A/B ring. Reading off the resulting per-block cost gives the first routing-module theorem (Theorem~\ref{thm:gps2l}): a closed form whose only size-dependent quantity is a single constant of the transition sequence, the cyclic jump density $\mu_{\mathrm{jump}}$. The argument proceeds in four steps: encode matchings as offset states $s_d$; move between states with two parallel-swap layers whose cost equals a ring distance; accumulate the ring distances of the transition sequence into the closed-form cost $J$ (Theorem~\ref{thm:gps2l}); and pin $\mu_{\mathrm{jump}}$ near its lower bound with the low-jump sequence.

Following the cost notation of Section~\ref{sec:preliminaries}, the subscript $\mathrm{route}$ marks routing-added CNOT cost---``incurred by routing,'' as opposed to the circuit's own logical CNOTs---and here the superscript names the routed object as a module--geometry pair: one of GPS, GPS$^*$, GPF, GPF$^*$ combined with $2\mathrm{L}$ (two-row) or $1\mathrm{L}$ (single-row), kept explicit throughout. In particular, $\GPS\text{-}2\mathrm{L}$ is the two-row top-level GPS (its folded version is $\GPS\text{-}1\mathrm{L}$).

The object to be routed (Appendix~\ref{app:examples}, Fig.~\ref{fig:logical-n8}) is the logical top-level GPS for $n=8$ ($n_H=4$), where high qubits $\mathrm{H}_0,\ldots,\mathrm{H}_3$ and low qubits $\mathrm{L}_0,\ldots,\mathrm{L}_3$ are connected by cross-block CNOTs (control on high, target on low) whose spans vary with the transition sequence and are mostly non-adjacent. The task is to implement all of them on the $2\times 4$ nearest-neighbor grid with controllable, closed-form cost.

\subsubsection{Encoding Matchings as Offset States}
\label{sec:offset}

The idea: rather than tracking ``which two qubits interact'' per pair, introduce a unified state label for an entire matching layer, so that ``executing a matching layer'' becomes ``moving the layout to a state.''

For the ring we relabel the high/low qubits $\mathrm{H}_i/\mathrm{L}_i$ as offset occupancy tokens $h_i,l_i$ ($i\in\mathbb{Z}_{n_H}$), following the top-level GPF partition. If a padding site exists, the corresponding $l_i$ is an idle physical site. These $h_i/l_i$ are neither the global logical index $x_i$ nor the fixed physical sites $\mathrm{H}_i/\mathrm{L}_i$, but merely mark ``which bipartite matching currently falls on physical edges.'' The canonical occupancy for $n=2\times 4$ is the one shown in Fig.~\ref{fig:initial}. We fix the \emph{canonical arrangement} in which column $j$ carries the labels of index $\sigma(j)$ on both rows, where $\sigma(2k)=k$ and $\sigma(2k+1)=n_H-1-k$ (the inward-interleaved order $0,n_H{-}1,1,n_H{-}2,\dots$); this canonical arrangement is the offset state $s_0$ (all $\Delta_i=0$), and the offset ring is generated from it.

Let column $i$ have $h_{\pi_H(i)}$ on the upper physical site $\mathrm{H}_i$ and $l_{\pi_L(i)}$ on the lower physical site $\mathrm{L}_i$, where $\pi_H,\pi_L$ are the current occupancy maps. Define the column offset
\begin{equation}
    \Delta_i=(\pi_L(i)-\pi_H(i))\bmod n_H.
\end{equation}
If all non-idle columns satisfy $\Delta_i=d$, we say the layout is in offset state $s_d$. In this state, each vertical column edge $\mathrm{H}_i\leftrightarrow\mathrm{L}_i$ carries exactly one pair $(h_a,l_{a+d})$, and all valid columns simultaneously realize the cyclic-offset matching
\begin{equation}
    M_d=\{(h_a,l_{a+d})\mid a\in\mathbb{Z}_{n_H},\;l_{a+d}\text{ valid}\}.
\end{equation}
This is the entire encoding: reaching $s_d$ makes all cross-row CNOTs of matching $M_d$ fall in parallel on the column edges. Each matching layer required by the top-level GPS takes the form of some $M_d$, so ``executing a matching layer'' translates to ``routing to the corresponding offset state.''

\begin{lemma}[Offset states yield GPS cross-group matchings]
\label{lem:offset}
If the layout is in $s_d$, then the vertical column edges simultaneously realize the valid matching $M_d$ of the top-level GPS at offset~$d$.
\end{lemma}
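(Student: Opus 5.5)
The plan is to unfold the definition of the offset state $s_d$ column by column. Then I would check that the resulting set of column pairs is exactly the cyclic-offset matching that the top-level $\GPS$ requests at offset $d$. I would go in three steps: (i) identify the pair on each vertical edge; (ii) show that these pairs are disjoint and together exhaust $M_d$; (iii) match $M_d$ with the $\GPS$ matching layer from Section~\ref{sec:gpf}, including the idle-site case.

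For step (i), fix a non-idle column $i$ and set $a=\pi_H(i)$. By definition of $s_d$ we have $\Delta_i=d$, so $\pi_L(i)=(a+d)\bmod n_H$. The vertical edge $\mathrm{H}_i\leftrightarrow\mathrm{L}_i$ is a physical nearest-neighbor edge, and it therefore carries exactly the pair $(h_a,l_{a+d})$. For step (ii), I would use that $\pi_H$ and $\pi_L$ are bijections $\mathbb{Z}_{n_H}\to\mathbb{Z}_{n_H}$, being occupancy maps of full rows of length $n_H$. As $i$ ranges over all columns, $a=\pi_H(i)$ therefore ranges over all of $\mathbb{Z}_{n_H}$ exactly once. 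Hence the column pairs are pairwise disjoint, since distinct columns use distinct sites. Their union is $\{(h_a,l_{a+d})\}_{a}$, and discarding the columns whose lower site is the idle padding site leaves precisely $M_d$. Disjointness means all these cross-row CNOTs act on distinct qubits, so they form one parallel layer.

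For step (iii), the $\GPS$ step that flips coordinate $f_t$ uses the pairs $((f_t+j)\bmod n_H,\,j)$ for $j=0,\dots,n_L-1$. Writing $a=(f_t+j)\bmod n_H$ gives $j=a-f_t$, so this set is $\{(h_a,l_{a+d})\}$ with $d\equiv -f_t \pmod{n_H}$, restricted to valid low indices $j<n_L$. The restriction to valid low indices coincides with the ``$l_{a+d}$ valid'' clause in $M_d$.

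The main obstacle is this last identification. The paper writes the pair as (ctrl, tgt), so I must fix the sign convention relating $d$ to $f_t$ and make sure that the labels $h_a,l_b$ coincide with the logical $\mathrm{H}_a,\mathrm{L}_b$ in the canonical occupancy. I would also confirm that when $n$ is odd the padding site is exactly the index $l_{n_H-1}$, so that no requested pair lands on an idle column and no valid pair is lost.
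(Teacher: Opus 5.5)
Your proof is correct and follows the paper's route: you unfold $\Delta_i=d$ column by column to place $(h_a,l_{a+d})$ on each vertical edge, then collect over the valid columns, and your bijectivity argument in step (ii) spells out the exhaustiveness and disjointness that the paper's ``iterating over all valid columns yields $M_d$'' leaves implicit. Step (iii) goes beyond the paper's proof, which only asserts in the surrounding text that each GPS layer has the form $M_d$; the sign you derive, $d\equiv -f_t$, differs from the paper's later indexing of the visited states as $s_{f_t}$ only by the reflection $d\mapsto -d$, which leaves $\Dist_{n_H}$ and hence all routing costs unchanged.
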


\begin{proof}
By the definition of $s_d$, every valid column $i$ satisfies $\pi_L(i)-\pi_H(i)\equiv d\pmod{n_H}$. Let the high qubit at column $i$ be $h_a$ ($a=\pi_H(i)$); then the same column's low qubit is $l_{a+d}$. Iterating over all valid columns yields $M_d$; columns landing on idle sites contribute no logical CNOT.
\end{proof}

\subsubsection{The Offset-State Ring: SWAP-A/SWAP-B and Movement Cost}
\label{sec:ring}

Since ``executing a matching'' is equivalent to ``reaching a state,'' the remaining question is how to move between states and at what cost.

We use two types of parallel adjacent swap layers. SWAP-A simultaneously swaps upper-row pairs $(0,1),(2,3),\dots$ and lower-row pairs $(1,2),(3,4),\dots$. SWAP-B uses the complementary adjacent edges: upper-row $(1,2),(3,4),\dots$ and lower-row $(0,1),(2,3),\dots$. Each type uses only disjoint nearest-neighbor edges, so each counts as one SWAP depth layer with exactly $n_H-1$ parallel SWAP gates.

\begin{lemma}[Offset ring]
\label{lem:offset-ring}
From the canonical arrangement $s_0$, each SWAP-A or SWAP-B layer maps a uniform offset state $s_d$ to a uniform offset state $s_{d+1}$ or $s_{d-1}$ (indices mod $n_H$), and the alternating sequence SWAP-A, SWAP-B, SWAP-A, $\dots$ realizes $s_0\to s_1\to\cdots\to s_{n_H-1}\to s_0$, visiting every offset state exactly once. Hence the offset states form a single cycle of length $n_H$ under the SWAP-A/B primitive, and the minimum number of SWAP layers taking $s_a$ to $s_b$ is $\Dist_{n_H}(a,b)$.
\end{lemma}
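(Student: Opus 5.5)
The plan is to turn the lemma into a statement about two one-row permutations. A layout is described by the pair of row occupancy maps $(\pi_H,\pi_L)$. SWAP-A applies the odd–even transposition layer $E=\{(0,1),(2,3),\dots\}$ to the upper row and $O=\{(1,2),(3,4),\dots\}$ to the lower row. SWAP-B does the opposite. Both $E$ and $O$ are involutions, so applying the same layer type twice returns the layout to where it was; this fact is used at the end. I would first introduce a family of \emph{folded arrangements} $\tau_r$ of a single row, indexed by $r\in\mathbb{Z}_{2n_H}$. The arrangement $\tau_0=\sigma$ is the canonical inward-interleaved order. In it, the ring order $0,1,\dots,n_H-1$ runs left to right along the even columns and returns right to left along the odd columns, so the row is the ring $\mathbb{Z}_{n_H}$ folded onto a line. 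The other members $\tau_r$ are the ring shifted by $\lfloor r/2\rfloor$, with the fold point at one end or the other according to the parity of $r$.

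\textbf{Key step (single-row dynamics).} I would show that $E$ and $O$ map folded arrangements to folded arrangements with $r\mapsto r\pm1$, and that the sign of the step depends on which layer is applied and on the parity of $r$. This is the classical fact that alternating odd–even transpositions on a line rotate tokens around the line's boundary cycle: a token at a boundary that is not swapped simply turns around. I would prove it by direct index bookkeeping. For example, applying $E$ to $\sigma$ gives columns $n_H-1,0,n_H-2,1,\dots$. Applying $O$ to $\sigma$ fixes column $0$ and gives column $2k+1\mapsto k+1$ and column $2k+2\mapsto n_H-1-k$.

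\textbf{Offset computation.} With the single-row dynamics in hand, I would compute $\Delta_i=\pi_L(i)-\pi_H(i)$ column by column. From $s_0$, one SWAP-A layer gives $\Delta_i=1$ in every column; the checks are: column $2k$ gives lower $0$ or $n_H-1-k$ against upper $n_H-1-k$ or $n_H-1$, column $2k+1$ gives $k+1$ versus $k$, and column $2k+2$ gives $n_H-1-k$ versus $n_H-2-k$. In general, the complementary layers move the upper and lower rows around the folded ring in opposite senses. So each alternating layer keeps $\Delta$ uniform and increases it by exactly one, and induction on the number of layers gives $s_0\to s_1\to\cdots$. After $n_H$ layers $d\equiv 0$, and the intermediate offsets $1,\dots,n_H-1$ are all distinct, which establishes the single cycle of length $n_H$. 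The idle padding site for odd $n$ is carried along as an ordinary token, so the argument is unaffected by it; only the count of logical CNOTs changes.

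\textbf{Distance claim.} For the upper bound, going forward from $s_a$ uses the alternating sequence. Going backward uses the involution property: undoing the last layer applied moves the layout to $s_{a-1}$. Hence the shorter arc, of length $\Dist_{n_H}(a,b)$, is realizable. For the lower bound, each layer changes $d$ by $\pm1$ by the key step, so the number of layers is at least the graph distance on the $n_H$-cycle.

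I expect the main obstacle to be the boundary bookkeeping: handling the two end columns, and the parity of $n_H$, which decides whether the last column is paired in $E$ or in $O$. The plan is to absorb this into the definition of $\tau_r$, verify the $n_H$ even and odd cases separately at the two ends, and treat the interior columns uniformly using $\sigma(2k)=k$ and $\sigma(2k+1)=n_H-1-k$.
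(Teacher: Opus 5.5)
Your proposal is correct and is essentially the paper's proof in different coordinates. The paper conjugates $E$ and $O$ by $\sigma$ into the reflections $m\mapsto -1-m$ and $m\mapsto -m$ of $\mathbb{Z}_{n_H}$ and tracks each row as an affine map $m\mapsto\varepsilon m+c$; your family $\tau_r$, $r\in\mathbb{Z}_{2n_H}$, is exactly this set of affine maps, enumerated along the cycle on which $E$ and $O$ act as steps $\pm1$. The one step to state explicitly is why ``opposite senses'' forces a uniform $\Delta$: the two row indices always have the same parity, since both start at $0$ and each layer moves each by $\pm1$. The rows therefore share the same orientation $\varepsilon$ and differ only by a constant shift, which is the paper's key observation and is also what your lower bound on the number of layers needs for layouts reached by non-alternating schedules.
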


The interleaving is what makes the ring uniform: in folded-cycle coordinates the even- and odd-brick layers act as two reflections of $\mathbb{Z}_{n_H}$ whose composite is a unit rotation, so applying complementary bricks to the two rows shifts the common offset by one. The full argument is given in Appendix~\ref{app:offset-ring}. Fig.~\ref{fig:delta-cycle} illustrates this ring for $n_H=4$; it is the physical primitive for scheduling, not the GPS visit order itself.

\begin{figure*}[!t]
    \centering
    \includegraphics[width=0.92\textwidth]{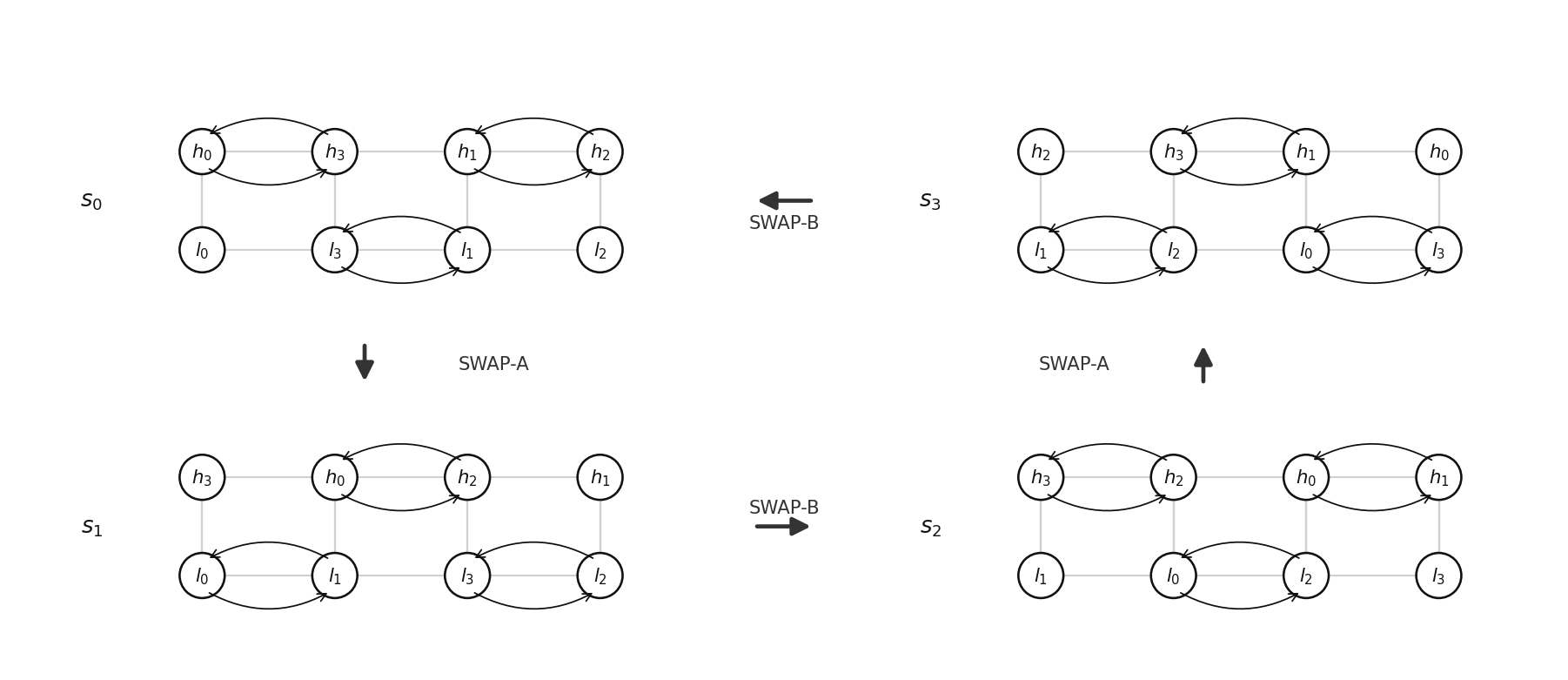}
    \caption{Base offset ring for $n_H=4$. Each cell shows one offset state together with the SWAP layer that advances it to the next state. Alternating SWAP-A/B cycles $s_0\to s_1\to s_2\to s_3\to s_0$, reaching every offset state.}
    \label{fig:delta-cycle}
\end{figure*}

The movement cost between states is the ring distance, the cyclic distance $\Dist_{n_H}$ of Section~\ref{sec:graycode}. The shortest SWAP depth from $s_a$ to $s_b$ is $\Dist_{n_H}(a,b)$, with SWAP gate count $(n_H-1)\,\Dist_{n_H}(a,b)$. The ring distance gives the depth, and multiplying by the per-layer parallelism gives the gate count; this correspondence is the source of all cost formulas in this section.

Consecutive same-type swap layers arise as optimal behavior, not a defect. From $s_d$, applying SWAP-A shifts the offset by $+1$ or $-1$ depending on the parity of $d$. When routing reverses direction on the ring (e.g., $s_0\to s_1\to s_0$), two consecutive same-type layers (A,A or B,B) naturally appear, a direct consequence of moving to the nearest target along $\Dist_{n_H}$ without resetting to $s_0$ after every matching, which avoids unnecessary round trips.

\subsubsection{Transition Sequences and the Closed-Form Cost}
\label{sec:jump-cost}

The top-level GPS visits offset states not in natural order but in the order given by the synthesis-side transition sequence $G=(f_0,f_1,\ldots,f_{2^{n_H}-1})$ of Section~\ref{sec:graycode}, a valid closed cycle on the $n_H$ high coordinates. The top-level GPS visits $s_{f_0},s_{f_1},\ldots,s_{f_{2^{n_H}-1}}$ in order, then closes back to the start. This is the connection point between synthesis and compilation: the Gray traversal gives $(f_t)$, the transition sequence determines which offset states to visit, and the ring distances between adjacent states determine the SWAP layer count.

Accumulating the movement cost of Section~\ref{sec:ring} along this visit sequence gives the internal SWAP overhead of one top-level GPS block. From $s_{f_t}$ to $s_{f_{t+1}}$ requires $\Dist_{n_H}(f_t,f_{t+1})$ layers, so the internal state-switching depth of the full closed cycle is
\begin{equation}
    J=\sum_{t=0}^{2^{n_H}-1}\Dist_{n_H}\!\left(f_t,\,f_{(t+1)\bmod 2^{n_H}}\right),
    \label{eq:jump-cost}
\end{equation}
the cyclic jump cost of the transition sequence---the jump density of Eq.~\eqref{eq:mu-def} scaled by the cycle length, $J=\mu_{\mathrm{jump}}(G)\,2^{n_H}$---with internal SWAP gate count $(n_H-1)J$.

\begin{lemma}[Jump cost gives internal SWAP overhead]
\label{lem:jump-cost}
Among offset-ring (SWAP-transport) schedules, executing the full transition sequence from $s_{f_0}$ and closing back to $s_{f_0}$ requires internal state-switching SWAP depth at least $J$, attained by routing each step along the ring, with SWAP gate count $(n_H-1)J$.
\end{lemma}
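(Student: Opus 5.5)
The plan is to reduce the statement to Lemma~\ref{lem:offset-ring} applied once per step of the transition sequence, and then sum the per-step costs. First I fix the meaning of an offset-ring (SWAP-transport) schedule. Such a schedule is a sequence of SWAP-A/SWAP-B layers, interleaved with cross-row CNOT layers on the vertical edges. The CNOT layers do not permute qubits, so they leave the offset state unchanged. By Lemma~\ref{lem:offset}, the matching $M_{f_t}$ required at step $t$ is executable on the column edges precisely when the layout is in $s_{f_t}$. Any valid schedule can therefore be cut into $2^{n_H}$ consecutive segments. Segment $t$ begins in $s_{f_t}$, right after the CNOT layer of $M_{f_t}$ is applied, and ends at the next occurrence of $s_{f_{(t+1)\bmod 2^{n_H}}}$. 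The last segment wraps back to $s_{f_0}$, which is the closing condition in the statement.

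For the lower bound I apply the distance clause of Lemma~\ref{lem:offset-ring} to each segment. The offset states form a single cycle of length $n_H$, and every SWAP layer moves the offset by exactly $\pm1$. Segment $t$ is a walk on this cycle from $f_t$ to $f_{t+1}$, so it contains at least $\Dist_{n_H}(f_t,f_{t+1})$ SWAP layers. The segments are disjoint in time, because each SWAP layer belongs to exactly one segment. Summing over $t$ gives total SWAP depth at least $\sum_t \Dist_{n_H}(f_t,f_{(t+1)\bmod 2^{n_H}})=J$, as defined in Eq.~\eqref{eq:jump-cost}.

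One subtlety has to be handled here: a layout must never reach an intermediate non-uniform state that escapes the ring. I address this by restricting to the class named in the statement. Within that class every SWAP layer is a full SWAP-A or SWAP-B layer applied to a uniform state, and Lemma~\ref{lem:offset-ring} says such a layer maps $s_d$ to $s_{d\pm1}$. By induction from the initial state $s_{f_0}$, the layout stays uniform throughout. I expect this to be the main obstacle: the statement is phrased "among offset-ring schedules" exactly so that arbitrary SWAP networks are excluded, and the proof has to make that restriction explicit rather than implicitly claim global optimality.

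For attainment I use the explicit schedule. At each step, move from $s_{f_t}$ to $s_{f_{t+1}}$ along the shorter arc of the ring. This is possible because, by Lemma~\ref{lem:offset-ring}, a single layer can step in either direction: applying the appropriate type, A or B, moves the offset by $+1$ or $-1$ depending on the parity of $d$. The step then uses exactly $\Dist_{n_H}(f_t,f_{t+1})$ layers, so the total depth is $J$. For the gate count, each SWAP-A or SWAP-B layer consists of exactly $n_H-1$ disjoint SWAPs (Section~\ref{sec:ring}). The optimal schedule therefore uses $(n_H-1)J$ SWAP gates, and since every schedule in the class has at least $J$ layers, this is also the minimum count within the class.
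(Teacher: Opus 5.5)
Your proposal is correct and follows the paper's own argument: Lemma~\ref{lem:offset-ring} bounds each transition $s_{f_t}\to s_{f_{t+1}}$ below by $\Dist_{n_H}(f_t,f_{t+1})$ SWAP layers, attained along the ring; these bounds sum over the closed cycle to $J$, and each layer carries $n_H-1$ SWAPs, while your segment decomposition and explicit restriction to the offset-ring class simply spell out what the paper's terse proof leaves implicit. One small fix: anchor the uniformity induction at the canonical arrangement $s_0$, from which the block reaches $s_{f_0}$, rather than at an arbitrary layout in $s_{f_0}$, because SWAP-A/B preserve uniformity only on the family of layouts generated from $s_0$ (Appendix~\ref{app:offset-ring}) --- for example, with $n_H=4$ and both rows in natural order, a single SWAP-A already produces the non-uniform offsets $(3,2,2,1)$.
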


\begin{proof}
By Lemma~\ref{lem:offset-ring} the offset states form a cycle on which moving from $s_{f_t}$ to $s_{f_{t+1}}$ takes at least $\Dist_{n_H}(f_t,f_{t+1})$ SWAP layers, attained along the ring; the closed cycle completes all switches (including the last-to-first return), and these minimum layer counts add to $J$. Each layer contains $n_H-1$ parallel SWAPs, so the gate count is $(n_H-1)J$. The minimum is over offset-ring transport; the alternative remote-CNOT realization implements the same matchings without transport and is accounted separately.
\end{proof}

When converting layers to overhead, $1$ SWAP decomposes into $3$ CNOTs (depth $3$), so we use routing-added CNOTs as the uniform unit, which yields the section's main result.

\begin{theorem}[Routing cost of a top-level GPS-2L block]
\label{thm:gps2l}
On a $2\times n_H$ two-row grid using nearest-neighbor SWAPs, routing one top-level GPS block incurs routing-added CNOT depth and gate count
\begin{equation}
\begin{aligned}
    D^{\GPS\text{-}2\mathrm{L}}_{\mathrm{route}}(n)
    &=3J=3\mu_{\mathrm{jump}}\,2^{n_H},\\
    S^{\GPS\text{-}2\mathrm{L}}_{\mathrm{route}}(n)
    &=3(n_H{-}1)J=3(n_H{-}1)\mu_{\mathrm{jump}}\,2^{n_H}.
\end{aligned}
\end{equation}
\end{theorem}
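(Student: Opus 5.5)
The plan is to assemble Theorem~\ref{thm:gps2l} directly from Lemmas~\ref{lem:offset}, \ref{lem:offset-ring} and~\ref{lem:jump-cost}, plus the SWAP-to-CNOT conversion.

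\emph{Decomposition.} First, I will write one top-level GPS block as an interleaving of its $2^{n_H}$ logical matching layers with the transport steps between them. By the GPS definition, step $t$ is the cyclic-offset matching with controls $(f_t+j)\bmod n_H$ and targets $j$. Under the relabeling of Section~\ref{sec:offset}, this is the matching $M_{d}$ for the offset $d$ determined by $f_t$ (up to the fixed sign and shift convention of the canonical arrangement). Lemma~\ref{lem:offset} then guarantees that, once the layout is in $s_{f_t}$, every CNOT of step $t$ lies on a vertical column edge and executes in one layer. Those CNOTs are the circuit's own logical CNOTs and are not routing-added. Routing contributes only the SWAP layers used to move between consecutive offset states.

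\emph{Depth.} Next, I apply Lemma~\ref{lem:offset-ring}: the move $s_{f_t}\to s_{f_{t+1}}$ takes exactly $\Dist_{n_H}(f_t,f_{t+1})$ SWAP-A/B layers along the shorter arc. Summing over the closed cycle, as in Lemma~\ref{lem:jump-cost}, gives $J=\mu_{\mathrm{jump}}2^{n_H}$ SWAP layers. Because SWAP layers and logical CNOT layers alternate sequentially, these depths add, so routing contributes exactly the SWAP depth. Each SWAP layer is a set of disjoint nearest-neighbor SWAPs. Replacing each SWAP by three CNOTs in parallel across the layer turns one SWAP layer into three CNOT layers, which gives $D^{\GPS\text{-}2\mathrm{L}}_{\mathrm{route}}=3J$.

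\emph{Gate count.} Each SWAP-A or SWAP-B layer contains exactly $n_H-1$ SWAPs. The block therefore uses $(n_H-1)J$ SWAPs, hence $3(n_H-1)J$ routing-added CNOTs. Substituting $J=\mu_{\mathrm{jump}}2^{n_H}$ yields both displayed formulas.

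\emph{Main obstacle.} The step needing the most care is identifying the GPS step-$t$ matching with the offset state $s_{f_t}$ exactly, so that the visit sequence is $(f_t)$ and not some shifted or reflected version. A relabeling $d\mapsto \pm d+c$ would preserve all cyclic distances, so $J$ would be unchanged in any case. I would first check this against the canonical interleaved arrangement $\sigma$. If the indices do not match literally, I would invoke the invariance of $\Dist_{n_H}$ under affine isometries of $\mathbb{Z}_{n_H}$.

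A secondary point is the padding site for odd $n$: idle columns carry no CNOT but still participate in the SWAP layers, so the count $n_H-1$ per layer stands.

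Finally, I would state that equality holds under the convention that the block starts and ends in $s_{f_0}$, with any initial move from $s_0$ absorbed into the inter-block accounting.
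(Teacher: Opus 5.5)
Your proposal is correct and follows the paper's route. The paper's proof cites Lemma~\ref{lem:jump-cost} (itself resting on Lemmas~\ref{lem:offset} and~\ref{lem:offset-ring}) for SWAP depth $J$ and SWAP count $(n_H-1)J$, converts each SWAP into three CNOTs, and substitutes $J=\mu_{\mathrm{jump}}2^{n_H}$; this is your chain with less detail. Your caution about the index convention is warranted, since from the definitions the step-$t$ matching is literally $M_{-f_t}$ rather than $M_{f_t}$, and your appeal to the invariance of $\Dist_{n_H}$ under $d\mapsto \pm d+c$ correctly shows that $J$ is unaffected.
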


\begin{proof}
By Lemma~\ref{lem:jump-cost}, the transition-sequence closed cycle has internal state-switching depth $J$ and SWAP gate count $(n_H-1)J$; on a restricted topology, $1$ SWAP decomposes into $3$ CNOTs. Substituting $J=\mu_{\mathrm{jump}}\,2^{n_H}$ (Eq.~\eqref{eq:mu-def}) yields the result.
\end{proof}

The closed form is exact and completely decouples the routing cost from the choice of transition sequence: all size dependence is absorbed into the single constant $\mu_{\mathrm{jump}}$. Since adjacent transition positions must differ and each step has ring distance $\ge 1$, there is always the lower bound $\mu_{\mathrm{jump}}\ge 1$.

The closed form measures the internal cost of the closed cycle. An executable block additionally carries an $O(1)$ boundary correction $b$: the entry from the fixed initial arrangement $s_0$ to the first visited offset $s_{f_0}$, minus the loop-closing return that need not be materialized. The executable block has $J+b$ SWAP layers with $b=O(1)$. At $n=8$, for instance, $n_H=4$ and low-jump give $J=16$, while the executable routed circuit of Fig.~\ref{fig:routed-n8} has 15 SWAP layers and 45 SWAP gates, i.e.\ $b=-1$, $D^{\GPS\text{-}2\mathrm{L}}_{\mathrm{route}}=45$ CNOT depth and $S^{\GPS\text{-}2\mathrm{L}}_{\mathrm{route}}=135$ routing CNOTs. This $O(1)$ correction does not change the leading-order scaling, so the analysis uses Theorem~\ref{thm:gps2l}. The same transition sequence independently affects the synthesis-side depth through its own constant; the two are decoupled, and the compilation side substitutes only $\mu_{\mathrm{jump}}$.

Fig.~\ref{fig:routed-n8} shows the routing result for $n=8$: the same circuit as Fig.~\ref{fig:logical-n8}, but every cross-block CNOT is preceded by SWAP-A/SWAP-B layers that bring the matching to vertical nearest neighbors before execution; all two-qubit gates now fall on coupling edges of the $2\times 4$ grid. A 2.5D rendering of the same schedule, convenient for reading the two-row physical topology, is given in Appendix~\ref{app:examples}, Fig.~\ref{fig:routed-3d-n8}.

\begin{figure*}[!t]
    \centering
    \includegraphics[width=0.88\textwidth]{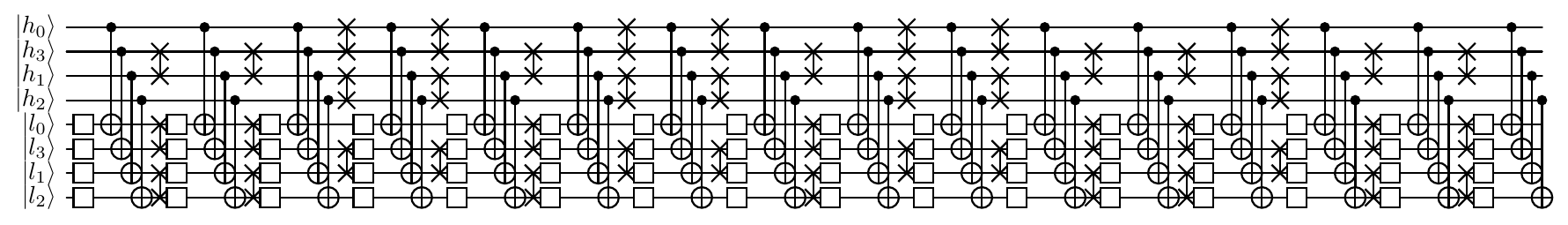}
    \caption{Routed top-level GPS circuit ($n=8$). Boxes abbreviate the inserted $R_z(\varphi_s)$ phase gates.}
    \label{fig:routed-n8}
\end{figure*}

\subsubsection{The low-jump Transition Sequence}
\label{sec:minjump}

Theorem~\ref{thm:gps2l} reduces top-level routing optimization to a pure combinatorial problem: over all valid closed-cycle transition sequences, minimize the jump density $\mu_{\mathrm{jump}}$ of Eq.~\eqref{eq:mu-def}. We adopt a \emph{low-jump} sequence that keeps $\mu_{\mathrm{jump}}$ small (Section~\ref{sec:graycode}); its key property is summarized below. It is not known to be globally optimal---the name is descriptive.

\begin{lemma}[low-jump attains the cost lower bound for small $n_H$]
\label{lem:minjump}
For the low-jump transition sequence,
\begin{itemize}
    \item $n_H\le 5$: $\mu_{\mathrm{jump}}=1.0$, attaining the hard lower bound $J=2^{n_H}$ and proved globally optimal by iterative-deepening exact search;
    \item $n_H=6$: $\mu_{\mathrm{jump}}=66/64\approx 1.03$, exhaustively proved that no smaller valid closed cycle exists;
    \item $n_H\ge 7$: a recursive construction closes and $\mu_{\mathrm{jump}}$ converges to $\approx 1.06$, reported as experimental input without a full-scale optimality claim.
\end{itemize}
\end{lemma}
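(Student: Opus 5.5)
The lemma has three parts of different character. I would prove each part separately: an explicit construction for small $n_H$, a finite exhaustive search at $n_H=6$, and a recursive construction for $n_H\ge 7$.

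\emph{Case $n_H\le 5$.} The lower bound $J\ge 2^{n_H}$ is already available. Consecutive flips of a Gray cycle are distinct coordinates, so every term of Eq.~\eqref{eq:jump-cost} is at least $1$. To show $\mu_{\mathrm{jump}}=1$ is attained, it suffices to exhibit, for each $n_H\in\{1,\dots,5\}$, a closed transition sequence $G$ in which every consecutive pair, including the wrap-around pair $(f_{2^{n_H}-1},f_0)$, is adjacent on the ring $\mathbb{Z}_{n_H}$. Such a $G$ is exactly a closed walk on the cycle graph $C_{n_H}$ whose flip sequence visits all of $\F_2^{n_H}$ once and returns to $0$. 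I would list these witnesses explicitly and verify them mechanically. The iterative-deepening search is not needed for optimality, because the lower bound is tight. Its only role is to find the witnesses. For $n_H\le 2$ the ring distance is trivially $1$, and for $n_H=3,4$ one can write the cycle by hand. At $n_H=5$ I would rely on the search output.

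\emph{Case $n_H=6$.} I would frame the claim as the nonexistence of a valid cycle with $J\le 65$. Write $J=2^{6}+E$, where $E=\sum_t(\Dist_6(f_t,f_{t+1})-1)\ge 0$. The reported value $J=66$ then means $E=2$.

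Two structural facts prune the search:
\begin{itemize}
\item \emph{Parity.} Each coordinate is flipped an even number of times in a closed cycle. A walk on the even-length ring $C_6$ using only $\pm1$ steps alternates the parity of the coordinate index, so any non-adjacent jumps must also balance parity. This should rule out $E=1$ directly.
\item \emph{Symmetry.} Rotations and reflections of the ring, together with translation of the starting vertex, let us fix $f_0=0$ and $f_1=1$.
\end{itemize}
The remaining claim, that no cycle has $E=0$, I would establish by exhaustive backtracking over Hamiltonian cycles of $Q_6$. The search prunes whenever the accumulated excess exceeds the budget. The lemma is then a certified computational statement together with an explicit $E=2$ witness.

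\emph{Case $n_H\ge 7$.} I would give the recursive construction explicitly. The natural candidate is the reflected-concatenation template $G_{m+1}=G_m\cdot(m)\cdot\overline{G_m}\cdot(m)$. It has to be adapted so that the new coordinate $m$ sits ring-adjacent to the flips around it, for instance by inserting the new coordinate between $m-1$ and $0$ on the ring and re-threading the copies. I would check that the result is a valid closed Gray cycle and bound the added jump cost by $O(1)$ per seam.

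However, $\mu_{\mathrm{jump}}$ converging to $\approx1.06$ is stated as experimental. For this part I would only prove validity and an upper bound $\mu_{\mathrm{jump}}\le 1+c$, and report the numerical limit as computed.

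\emph{Main obstacle.} I expect the $n_H=6$ exhaustive certification to be the hardest step. The search space of Hamiltonian cycles on $Q_6$ is enormous, and the parity and symmetry pruning must be strong enough to make it feasible. If the parity argument does not settle $E\le1$ analytically, the proof rests entirely on the computation.
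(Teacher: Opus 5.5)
\textbf{The cases $n_H\le 6$.} Here your plan is the paper's argument. For $n_H\le 5$, the trivial bound $J\ge 2^{n_H}$ is made tight by witnesses found by search. For $n_H=6$, the claim rests on an exhaustive certification; the paper runs iterative deepening upward from $J=64$, with the first flip fixed by rotation.

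Your parity step is correct and saves real work. Since $C_6$ is bipartite and the cyclic flip sequence $f_0,\dots,f_{63},f_0$ is a closed walk on it, the number of odd-distance steps is even. Hence the number of distance-$2$ steps is even, and $E=\#\{d=2\}+2\,\#\{d=3\}$ is even; in fact $J$ is even for every even $n_H$. So $J=65$ is excluded by hand and only $J=64$ needs the search, whereas the paper certifies both levels computationally. The fact doing the work is the closure of this walk, not that each coordinate is flipped an even number of times.

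A minor point: at $n_H=1$ the only cycle is $(0,0)$, with $\Dist_1(0,0)=0$. There both ``consecutive flips differ'' and $\mu_{\mathrm{jump}}=1$ fail, so the first bullet, yours and the paper's, really starts at $n_H=2$.

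\textbf{The gap at $n_H\ge 7$.} The step that would fail is the claim that the added jump cost is $O(1)$ per seam. Two effects are missing.

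First, inserting a new position into the ring changes $\Dist$ on the old coordinates. Every old transition whose short arc crosses the insertion gap becomes one unit longer; under your rule, each $(m-1,0)$ step becomes a distance-$2$ step. Because the old sequence is copied twice at every later level, such a penalty persists as a fixed fraction of $J$, not $O(1)$. With the gap always placed next to $0$, the same transitions are lengthened again at each further insertion.

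Second, consider $P\,(m)\,\overline{P}\,(m)$. Here $P$ must be the cycle minus one flip; with the full cycle the length would be $2^{m+1}+2$. The resulting open path begins and ends with the same flip $p_1$. So if the splice point is never moved, every later seam is flanked by $p_1$, which has only two ring neighbours. Even in the best case $p_1=0$, coordinate $j$ sits at distance $\min(j,n_H-j)$ from its seam partner, and its seam transitions form a fraction $2^{1-j}$ of the cycle. In the limit this gives $\mu_{\mathrm{jump}}\ge\tfrac{15}{16}+\tfrac{1}{32}+\sum_{j\ge 7}2^{1-j}j=\tfrac{39}{32}\approx 1.22$.

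Your construction does keep $\mu_{\mathrm{jump}}$ bounded, since distances among the base coordinates are capped and the seam terms decay like $j2^{-j}$. So a crude bound $\mu_{\mathrm{jump}}\le 1+c$ survives on that argument. But the constant is not small, and computing the sequence would not reproduce the lemma's $\approx 1.06$.

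Getting near that value requires two choices at each level: a splice point whose seam partner is ring-adjacent to the new coordinate, and an insertion edge that the old cycle almost never crosses. The paper instead optimizes the base block and the split, and reports $1.06$ only as a computed value. That computed value is all the third bullet asserts.
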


The sequences satisfy a closed-cycle \emph{validity} constraint (a genuine Gray cycle traversing all $2^{n_H}$ modes and returning to mode $0$, the same criterion as on the synthesis side), and are obtained by exact search for $n_H\le 6$ and a deterministic recursive construction for $n_H\ge 7$. Both the validity constraint and the construction are detailed in Appendix~\ref{app:lowjump}.

Under low-jump, when $n$ is even ($n_H=n/2$), Theorem~\ref{thm:gps2l} becomes
\begin{equation}
\begin{aligned}
    D^{\GPS\text{-}2\mathrm{L}}_{\mathrm{route}}(n)
    &=3\mu_{\mathrm{jump}}\,2^{n/2},\\
    S^{\GPS\text{-}2\mathrm{L}}_{\mathrm{route}}(n)
    &=3\mu_{\mathrm{jump}}\bigl(\tfrac{n}{2}-1\bigr)2^{n/2},
\end{aligned}
\end{equation}
exact in $\mu_{\mathrm{jump}}$: all size dependence is in $2^{n/2}$, the only constant being $\mu_{\mathrm{jump}}$. For $n_H\ge 7$ the low-jump value is supplied by a recursive construction that is not proven optimal (Appendix~\ref{app:lowjump}); we compute $\mu_{\mathrm{jump}}\in[1.058,1.063]$ and report $\mu_{\mathrm{jump}}\approx 1.06$ as an empirical upper bound, giving $D^{\GPS\text{-}2\mathrm{L}}_{\mathrm{route}}\lesssim 3.18\cdot 2^{n/2}$. For a concrete finite $n$ one uses the exact per-block $\mu_{\mathrm{jump}}$, which is $1.0$ whenever a block's high width is at most $5$; the $n=8$ instance therefore uses $\mu_{\mathrm{jump}}=1.0$ and gives normalized depth $12.0$ rather than $12.7$, and the value $\approx 1.06$ enters only the large-$n$ asymptotic constant. This completes Theorem~A.

\subsection{Recursive GPS/GPS$^*$: Single-Row Routing (Theorem B)}
\label{sec:onerow}

This section folds the same GPS block onto a single row. The single-row GPS/GPS$^*$ is the same circuit as the two-row GPS---the same low-jump offset ring, the same SWAP-A/B and cross-block CNOT layers---so it is a specialization of the two-row construction rather than a fresh problem (Sections~\ref{sec:tworow} and \ref{sec:onerow} are one routing problem under two geometries; Fig.~\ref{fig:recursion-tree}); only two things change: the mapping---which lane each qubit occupies on the line---and the routing---layers that were parallel nearest-neighbor on two rows now contain remote gates to be decomposed to nearest-neighbor on one line. Reading off its cost gives the second routing-module theorem (Theorem~\ref{thm:gps1l}): exactly twice the two-row cost of Theorem~\ref{thm:gps2l} at the top recursion level, with a closed form at every deeper level.

The recursive portion is not only GPS$^*$: the low-driven branch uses GPS$^*$, but the $s_L=0$ high branch in standard GPF remains GPS. As noted in Section~\ref{sec:bipartite}, the two share routing geometry and differ only in synthesis-side visit order, so we use GPS$^*$ as the running example and GPS follows identically. Accordingly we give the mapping, the single-row routing method, examples, and the cost (Theorem~\ref{thm:gps1l}). Upon entering this section, the $h_i/l_i$ cross-row relations of Section~\ref{sec:tworow} are retired; qubits on the single row are numbered only by fixed physical coordinate $Q_0,Q_1,\ldots$. The object to be routed is the logical single-row GPS$^*$ for $n=8$ (Appendix~\ref{app:examples}, Fig.~\ref{fig:star-logical-n8}), whose cross-block CNOT spans are highly irregular.

\subsubsection{Folding Map: Inward-Convergent Recursive Selection}
\label{sec:folding}

The two-row layout is folded onto a single line: adjacent physical positions $(2c,2c+1)$ still form a ``column,'' so that at offset $0$ the high and low qubits are adjacent and the CNOT is nearest-neighbor. Which position holds the high qubit (and which the low) is not fixed by a naive even/odd folding---that would drastically increase lane separation---but by inward-convergent recursive selection.

Recursive selection is a balanced binary recursion: pair the currently active positions on the line by adjacency, and let each pair take its inner-side representative into the next deeper level---the left pair its right element, the right pair its left---halving the active count. Within a block of length $2^r$, the highest-level paired left and right representatives (relative to the block's left end) satisfy
\begin{gather}
    L_r=R_{r-1},\quad R_r=2^{r-1}+L_{r-1},\quad (L_1,R_1)=(0,1)
    \notag\\
    \Longrightarrow\quad
    L_r=\Bigl\lfloor\tfrac{2^r}{3}\Bigr\rfloor,\quad
    R_r=\Bigl\lfloor\tfrac{2^{r+1}}{3}\Bigr\rfloor,
\end{gather}
and their pairing distance forms the distance spectrum $\delta_r=R_r-L_r=\lfloor(2^r+1)/3\rfloor$ for $r=1,2,\ldots$ For example, with $8$ addresses $\delta_r=[1,1,3]$ and the top-level representatives are $(L_3,R_3)=(2,5)$. The full-block folding uses only round~$0$ of recursive selection, which fixes the high/low assignment in each column; deeper rounds $\delta_2,\delta_3,\ldots$ appear only upon further recursion.

\begin{figure}[!t]
    \centering
    \includegraphics[width=\columnwidth]{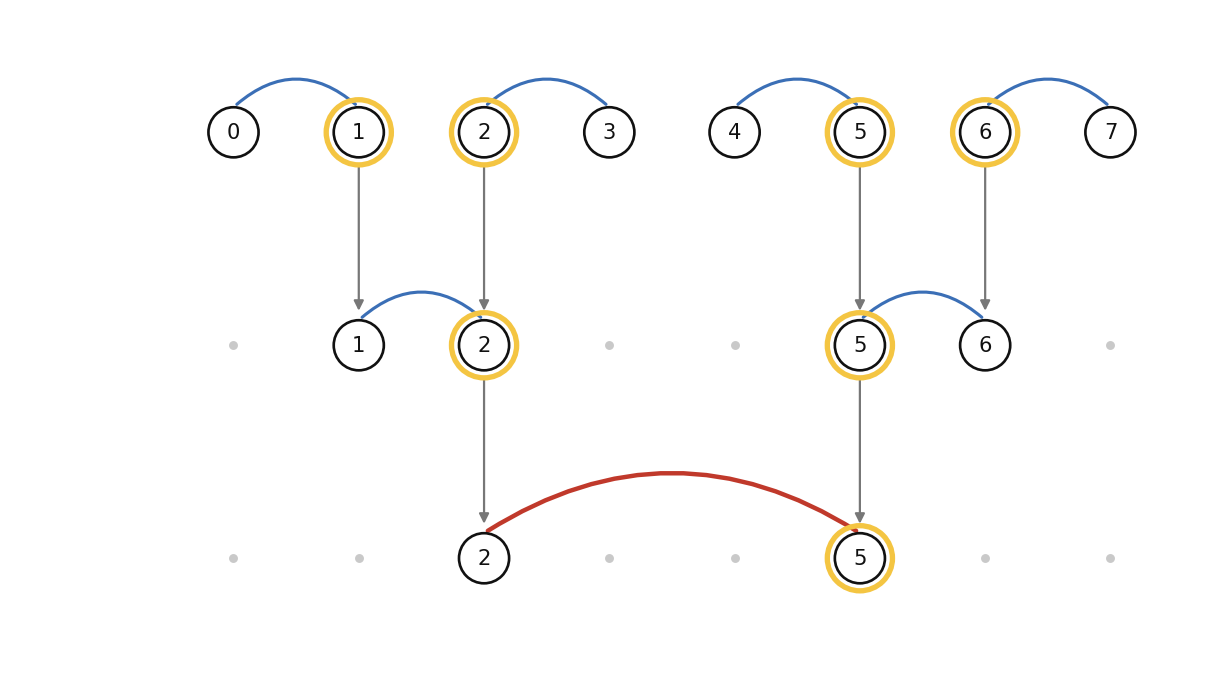}
    \caption{Inward-convergent recursive selection ($K=8$). Adjacent halves take inner-side representatives (gold rings) into the next level: $0$--$7\to\{1,2,5,6\}\to\{2,5\}$; distance spectrum $\delta_r=[1,1,3]$.}
    \label{fig:recursion}
\end{figure}

The reason for using recursive selection rather than naive folding is practical: it minimizes the remote extent of each SWAP layer after folding, compressing the brickwork depth of SWAP-B from $3$ down to $1$ and thereby reducing the total offset-ring routing depth by roughly one-third.

\subsubsection{Single-Row Routing: Intra-Group SWAP-A/B and Inter-Group Remote CNOT}
\label{sec:onerow-routing}

After recursive selection pins each qubit to a fixed lane (Section~\ref{sec:folding}), the folded single-row circuit is Section~\ref{sec:tworow}'s structure transplanted onto a line: a sequence of SWAP-A/B layers interleaved with CNOT layers. Recursive selection partitions the line into two interleaved groups: at the top level, the low group $\{1,2,5,6\}$ and the high group $\{0,3,4,7\}$ for $n=8$. Under this partition the entire GPS/GPS$^*$ block consists of exactly two operations.
\begin{itemize}
    \item \emph{Intra-group SWAP-A/B.} Section~\ref{sec:tworow}'s SWAP-A simultaneously swapped upper-row pairs $(0,1),(2,3),\ldots$ and lower-row pairs $(1,2),(3,4),\ldots$; folded onto a single row, these swaps all fall within the same group. For $n=8$, SWAP-A is the high-group permutation $(0,3),(4,7)$ together with the low-group $(2,5)$, and SWAP-B analogously. Their function is identical to Section~\ref{sec:tworow}: moving the layout between offset states.
    \item \emph{Inter-group remote CNOT.} Each step's cross-block matching connects a high representative to a low representative, a CNOT that crosses the two groups. The groups interleave on the line, so these matchings are mostly remote in the logical layout; for $n=8$, $40$ of the $64$ top-level matchings have span larger than $1$.
\end{itemize}

Single-row routing thus reduces to making both of these nearest-neighbor: intra-group remote SWAPs and inter-group remote CNOTs. Both are present from the top level onward, and recursion merely subdivides each group further, repeating the same decomposition at smaller scale.

Intra-group remote SWAPs decompose into an odd--even brickwork. A SWAP-A/B layer within a group is a permutation, decomposed to nearest-neighbor by an odd--even brickwork. Under the recursive-selection mapping, and independently of $n$, SWAP-A folds to a permutation of maximum distance $3$ with brickwork depth exactly $3$ and gate count $3n_H-1$, while SWAP-B folds to maximum distance $1$ with brickwork depth exactly $1$ and gate count $n_H-1$. Fig.~\ref{fig:swap} shows one SWAP-A layer, remote and decomposed into a three-layer brickwork, and one SWAP-B layer that is already nearest-neighbor.

\begin{figure}[!t]
    \centering
    \begin{subfigure}[b]{0.56\columnwidth}
        \centering
        \includegraphics[height=3.4cm]{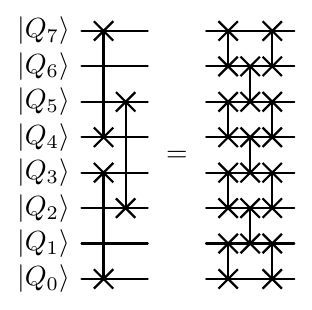}
        \caption{}
        \label{fig:swapA}
    \end{subfigure}
    \hfill
    \begin{subfigure}[b]{0.30\columnwidth}
        \centering
        \includegraphics[height=3.4cm]{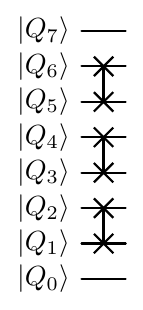}
        \caption{}
        \label{fig:swapB}
    \end{subfigure}
    \caption{The SWAP-A/B layers used by single-row GPS/GPS$^*$ at the first recursion level, with their low-depth nearest-neighbor realization: (a)~SWAP-A folds to a depth-$3$ brickwork, (b)~SWAP-B to depth~$1$. Deeper recursion has the same structure, the SWAPs merely becoming remote and further brickwork-decomposed.}
    \label{fig:swap}
\end{figure}

Under a naive even/odd folding both SWAP types would have brickwork depth $3$; recursive selection compresses SWAP-B to $1$ while SWAP-A remains at $3$ (the one-third saving of Section~\ref{sec:folding}). That SWAP-B is nearest-neighbor is only a feature of the top level and shallow recursion: a few levels deeper, both swap types cross bystanders---qubits from higher levels that remain on the line---increasing distances, and the brickwork handles this uniformly.

Inter-group remote CNOTs decompose into parallel nearest-neighbor ladders. A cross-group CNOT of distance $d$ can be implemented by a pure nearest-neighbor CNOT ladder of $4(d-1)$ gates, with no SWAPs and no ancillas. Recursive selection guarantees that all pairings at the same scale $r$ are equidistant at $\delta_r$ and non-crossing, so these remote CNOTs occupy disjoint intervals on the line and decompose into ladders simultaneously; the layer's depth then depends only on the single distance $\delta_r$, independently of the number of pairings. This is where the distance spectrum $\delta_r$ of Section~\ref{sec:folding} enters on the routing side.

An equivalent shortcut is available: the intra-group SWAP-A/B, while shifting the offset, simultaneously bring each step's matching representatives to adjacent columns, so the matching CNOTs in the routed circuits are directly nearest-neighbor; the remote CNOT ladder above is the alternative that implements the same cross-group CNOT without relying on this transport. This completes single-row routing.

\subsubsection{Examples}
\label{sec:examples}

We apply the method to specific $n=8$ routed circuits (the logical starting point is Fig.~\ref{fig:star-logical-n8}); each realizes its original module.

Example~1, the top-level GPS (Fig.~\ref{fig:gps-1row}): every offset-ring step's intra-group SWAP-A/B is decomposed to nearest-neighbor brickwork, and inter-group matchings fall on adjacent columns. Two further examples are deferred to Appendix~\ref{app:examples}. Example~2 is the top-level GPS$^*$, with the same folding and brickwork and only the synthesis-side schedule switched to the forward--reverse alternation of GPS$^*$ (Fig.~\ref{fig:star-1row}). Example~3 is a one-level recursive GPS$^*$ entering the low group $\{1,2,5,6\}$ with $\{0,3,4,7\}$ as bystanders, where the offset-shifting SWAP$(2,5)$ now crosses the bystanders $\{3,4\}$ and is brickwork-decomposed (Fig.~\ref{fig:segment-1256}); this illustrates deeper-recursion intra-group SWAPs also becoming remote.

\begin{figure*}[!t]
    \centering
    \includegraphics[width=0.88\textwidth]{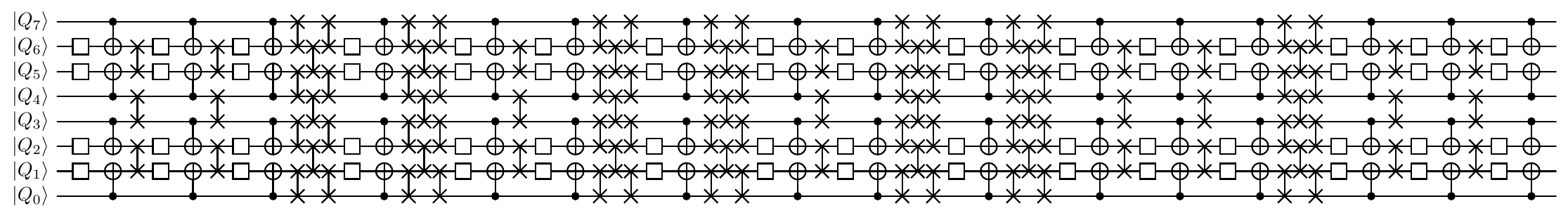}
    \caption{Routed single-row GPS module ($n=8$).}
    \label{fig:gps-1row}
\end{figure*}

\subsubsection{Cost and Complexity}
\label{sec:onerow-cost}

A GPS-1L node at level $r$ of the GPF recursion tree ($r=0$ is the top level) acts on $K_r=n/2^r$ positions, with its high sub-group containing $w_r=n/2^{r+1}$ qubits and its offset ring containing $J_r=\mu_{\mathrm{jump}}\,2^{w_r}$ SWAP-A/B layers. As $r$ increases, the offset ring shortens while each brickwork layer deepens: bystander spacing right-shifts the distance spectrum by $r$ positions, so the SWAP-A layer brickwork depth is exactly $\delta_{r+3}$ and the SWAP-B layer depth is $\delta_{r+2}$ ($\delta_k=\lfloor(2^k+1)/3\rfloor$, Section~\ref{sec:folding}). This deepening is not an empirical input: it follows from the distance spectrum, and its summed contribution is governed by the identity $\delta_{k+1}+\delta_k=2^k$ used in the proof below.

\begin{theorem}[Routing cost of a recursive GPS-1L block]
\label{thm:gps1l}
The routing-added CNOT depth of a GPS-1L block at level $r$ of the GPF recursion tree (high width $w_r=n/2^{r+1}$) is
\begin{equation}
    D^{\GPS\text{-}1\mathrm{L}}_{\mathrm{route}}(n,r)
    =6\,\mu_{\mathrm{jump}}\;2^{\,r}\;2^{\,n/2^{\,r+1}}
\label{eq:gps1l-depth}
\end{equation}
The gate-count leading term (at $r=0$) is $S^{\GPS\text{-}1\mathrm{L}}_{\mathrm{route}}(n,0)=3(n-1)\mu_{\mathrm{jump}}\,2^{n/2}$.
\end{theorem}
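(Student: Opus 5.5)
The plan is to reduce the single-row cost to the two-row offset-ring count of Theorem~\ref{thm:gps2l} and then multiply each ring step by its folded brickwork cost. By construction (Sections~\ref{sec:folding}--\ref{sec:onerow-routing}), a GPS-1L node at level $r$ executes the same offset-ring schedule as a two-row GPS block of high width $w_r=n/2^{r+1}$. Lemma~\ref{lem:jump-cost} therefore gives its number of SWAP-A/B layers as $J_r=\mu_{\mathrm{jump}}\,2^{w_r}$. The cross-group matching CNOTs are brought to adjacent columns by the same transport (the ``equivalent shortcut'' of Section~\ref{sec:onerow-routing}), so they add no routing cost. All routing-added depth thus comes from realizing each folded SWAP-A/B layer as a nearest-neighbor brickwork.

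\textbf{Step 1: per-layer brickwork depth.} From the inward-convergent recursive selection, I would show that at level $r$ the bystanders right-shift the distance spectrum by $r$. A folded SWAP-A layer is then a permutation of disjoint, equidistant, non-crossing transpositions of span $\delta_{r+3}$, and a SWAP-B layer one of span $\delta_{r+2}$. A disjoint family of equal-span non-crossing transpositions is realized by odd--even brickwork in depth equal to that span, which gives SWAP-layer depths $\delta_{r+3}$ and $\delta_{r+2}$. At $r=0$ this reproduces the $3$ and $1$ of Fig.~\ref{fig:swap}.

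\textbf{Step 2: A/B bookkeeping.} By Lemma~\ref{lem:offset-ring}, the ring edge between $s_d$ and $s_{d+1}$ is always crossed by the same type, determined by the parity of $d$. Along a ring-routed traversal, the layers therefore split into $J_r^{A}$ A-layers and $J_r^{B}$ B-layers according to how often even-indexed versus odd-indexed ring edges are crossed. I would argue that $J_r^{A}=J_r^{B}=J_r/2$ up to an $O(1)$ boundary term, absorbed like the correction $b$ of Theorem~\ref{thm:gps2l}. The first attempt would use the low-jump structure: almost all steps are unit moves, and the Gray cycle's reflection and recursive structure makes the occupancy of even and odd edges symmetric. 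If exact symmetry fails, I would state the balance at leading order.

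\textbf{Step 3: summation.} Combining the two steps and using $\delta_{k+1}+\delta_k=2^k$, the SWAP depth is
\[
\tfrac{J_r}{2}\bigl(\delta_{r+3}+\delta_{r+2}\bigr)=\tfrac{J_r}{2}\,2^{r+2}=2^{r+1}J_r .
\]
Multiplying by $3$ CNOTs per SWAP gives $6\,\mu_{\mathrm{jump}}\,2^{r}\,2^{n/2^{r+1}}$, which is Eq.~\eqref{eq:gps1l-depth}. At $r=0$ this is exactly twice the two-row depth of Theorem~\ref{thm:gps2l}.

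\textbf{Step 4: gate count at $r=0$.} A SWAP-A layer brickworks into $3n_H-1$ nearest-neighbor SWAPs and a SWAP-B layer into $n_H-1$. With the A/B balance this averages to $2n_H-1=n-1$ SWAPs per ring layer, so the count is $3(n-1)J_0=3(n-1)\mu_{\mathrm{jump}}2^{n/2}$.

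The main obstacle is Step 2: the exact A/B balance of the ring traversal is what makes the sum collapse through $\delta_{k+1}+\delta_k=2^k$. A secondary risk is Step 1's claim that bystanders shift the spectrum by exactly $r$ at every level, which I would verify by induction on the recursive-selection recurrence $L_r=R_{r-1}$, $R_r=2^{r-1}+L_{r-1}$.
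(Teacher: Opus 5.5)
Your outline follows the paper's proof step for step. It uses $J_r=\mu_{\mathrm{jump}}2^{w_r}$ ring layers inherited from the two-row schedule, brickwork depths $\delta_{r+3}$ and $\delta_{r+2}$ for the folded SWAP-A and SWAP-B layers, an even A/B split, the identity $\delta_{k+1}+\delta_k=2^k$, the factor $3$, and the same $r=0$ gate count.

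\textbf{The gap is Step~2.} You leave it as a plan, and it is not a boundary effect. In general the depth is $3\bigl(\delta_{r+3}J_r^{A}+\delta_{r+2}J_r^{B}\bigr)$. For $J_r^{A}=(\tfrac12+\eta)J_r$ this equals $3J_r\bigl(2^{r+1}+\eta(\delta_{r+3}-\delta_{r+2})\bigr)$, so any leading-order imbalance changes the constant $6$. Your fallback of ``stating the balance at leading order'' is therefore exactly the claim that needs proof. The paper settles this step in one line by a parity remark: SWAP-A takes even offsets to odd, SWAP-B takes odd back to even, and the cycle closes at $s_0$. Your own reading of Lemma~\ref{lem:offset-ring} shows this does not by itself give the split. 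By the update in Appendix~\ref{app:offset-ring}, after $t$ layers SWAP-A sends $\delta\mapsto\delta+(-1)^t$ and SWAP-B sends $\delta\mapsto\delta-(-1)^t$. So both types take even offsets to odd ones. Closure only equates even-to-odd with odd-to-even moves, which every closed walk on an even ring does. The oscillation $s_0\to s_1\to s_0\to\cdots$ uses A layers only, as Section~\ref{sec:ring} itself notes. For even ring size, balance is equivalent to even-indexed and odd-indexed ring edges being crossed equally often. That is a property of the particular low-jump cycle, and neither argument verifies it.

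\textbf{A symmetry argument closes the gap without analyzing the sequence.} Replace $f_t$ by $-f_t\bmod w_r$. The result is the original Gray cycle with coordinates relabeled by $c\mapsto -c$, so it is valid and has the same $\mu_{\mathrm{jump}}$. Each low track $j$ still runs a full Gray cycle with controls $(j-f_t)\bmod w_r$, and the matchings remain cyclic-offset. In the reflected schedule every ring move is reversed at the same time step. By the update rule above, every layer therefore switches type, giving $J_r^{A}\leftrightarrow J_r^{B}$. Choose the orientation with $J_r^{A}\le J_r/2$ and use $\delta_{r+3}\ge\delta_{r+2}$. Then $\delta_{r+3}J_r^{A}+\delta_{r+2}J_r^{B}\le\tfrac12(\delta_{r+3}+\delta_{r+2})J_r=2^{r+1}J_r$, which gives Eq.~\eqref{eq:gps1l-depth} as an upper bound. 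In the same orientation, $S^{\GPS\text{-}1\mathrm{L}}_{\mathrm{route}}(n,0)\le 3(n-1)\mu_{\mathrm{jump}}2^{n/2}$. The stated equality still requires checking the balance for the concrete cycles; this includes the exact factor $2$ over Theorem~\ref{thm:gps2l} at $r=0$.

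\textbf{A smaller point in Step~1.} The folded SWAP-A transpositions are not non-crossing: at $n=8$ they are $(0,3),(2,5),(4,7)$, which interleave. The depth-$3$ claim at $r=0$ should instead rest on the fact that this layer is exactly the complete three-layer brickwork on the line, with all $3n_H-1$ slots used. That is also where its gate count comes from.
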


\begin{proof}[Proof (depth)]
By the half-and-half property---SWAP-A pushes even offset states to odd, SWAP-B pushes odd back to even (Lemma~\ref{lem:offset-ring}), and the closed cycle from $s_0$ must return to $s_0$---the $J_r$ layers split evenly, $J_r/2$ each for A and B. Summing each layer's depth, multiplying by $3$ for CNOT conversion, and using the identity $\delta_{k+1}+\delta_k=2^k$ (hence $\delta_{r+3}+\delta_{r+2}=2^{r+2}$), gives Eq.~\eqref{eq:gps1l-depth}.

\emph{(Gate count.)} At the dominant top-level block ($r=0$), each SWAP-A layer has $3n_H-1$ gates and each SWAP-B layer has $n_H-1$ gates, so
\begin{equation}
\begin{aligned}
    S^{\GPS\text{-}1\mathrm{L}}_{\mathrm{route}}(n,0)
    &=3\!\cdot\!\tfrac{J}{2}\bigl[(3n_H{-}1){+}(n_H{-}1)\bigr]\\
    &=3(n{-}1)\mu_{\mathrm{jump}}\,2^{n/2}.
\end{aligned}
\end{equation}
\end{proof}

The form parallels Section~\ref{sec:tworow}'s $S^{\GPS\text{-}2\mathrm{L}}_{\mathrm{route}}=3(n_H-1)\mu_{\mathrm{jump}}2^{n_H}$; the width factor changes from $n_H-1$ to $n-1$ (the cost of not parallelizing across two rows). Deeper-level gate counts increase slightly due to brickwork overlap but decay double-exponentially with $2^{n/2^{r+1}}$ and are sub-leading. Both formulas have $\mu_{\mathrm{jump}}$ as the only constant, and executable single blocks each carry an $O(1)$ boundary correction. At $r=0$, $D^{\GPS\text{-}1\mathrm{L}}_{\mathrm{route}}(n,0)=2\,D^{\GPS\text{-}2\mathrm{L}}_{\mathrm{route}}(n)$: folding to a single row is exactly twice the two-row cost, establishing Theorem~B. Eq.~\eqref{eq:gps1l-depth} is \emph{exact at $r=0$}; for $r>0$ it is a leading-order \emph{upper bound}, since it charges the brickwork of each scale separately whereas overlapping layers from different scales only reduce the total. Its exact $r>0$ value does not enter the complexity analysis, which depends only on the $r=0$ term, all $r>0$ contributions being doubly-exponentially subdominant.

\section{Complete GPF Compilation}
\label{sec:full}

This section collects the paper's core conclusions, obtained by combining the two module theorems along the GPF recursion. We first assemble the blocks, then prove the main result---restricted-topology routing preserves the complexity in closed form (Theorems~\ref{thm:route-depth}--\ref{thm:route-count})---with topology-independence following as a corollary (Corollary~\ref{cor:topology}). We then contrast this with general-purpose routers on the native two-row topology, and give the ancilla space--time extension.

\subsection{Complete GPF as Recursive Assembly of Unit Blocks}
\label{sec:assembly}

The synthesis construction of the first half of this paper fully determines the complete GPF circuit as a fixed sequence of GPS/GPS$^*$ blocks (emitted in a deterministic order); compilation need only route each block using Section~\ref{sec:tworow} or \ref{sec:onerow} and concatenate them in synthesis order. The recursive structure is the one summarized in Fig.~\ref{fig:recursion-tree}: with $n_H=\lceil n/2\rceil$ and $n_L=\lfloor n/2\rfloor$, $\GPF(n)$ consists of (i)~top-level GPS-2L mixings on the $2\times n_H$ grid (Section~\ref{sec:tworow}), (ii)~a high-half $\GPF(n_H)$ on a single row with GPS-1L, and (iii)~a low-half $\GPF^*(n_L)$ on a single row with GPS$^*$-1L (Section~\ref{sec:onerow}); two-row geometry exists only at the top level.

The number of top-level GPS-2L blocks is given by a structural fact established in the synthesis half (Section~\ref{sec:depth}): the GPS reference count equals the $R_z$ layer depth of the low-half GPF$^*$ circuit, i.e.\ the number of Walsh-mode groups, denoted $D^{\GPF^*}_z(n_L)$ and satisfying the phase-layer recurrence of Eq.~\eqref{eq:recurrences} (its first line). For example, $D^{\GPF^*}_z(2),\ldots,D^{\GPF^*}_z(6)=2,4,6,12,20$, which are exactly the top-level GPS-2L block counts for $n=4,\ldots,12$. Each top-level block has $2^{n_H}$ logical CNOT layers (one offset ring, Section~\ref{sec:tworow}).

Each block's routing returns to offset $s_0$: the top-level offset ring starts from $s_0$ and closes back to $s_0$ (Lemma~\ref{lem:offset-ring}), and single-row blocks' brickwork likewise returns to a zero-offset arrangement at block end. A block ends in the canonical arrangement up to an $O(1)$ boundary correction $b$ (Section~\ref{sec:jump-cost}; for low-jump, $b=0$ and the block self-closes exactly to the canonical arrangement for $n_H\le 6$); any residual permutation of the $n_H$ lanes is restored to canonical by at most $n_H$ brick layers, i.e.\ $O(n_H)$ additional SWAP depth per block. The full circuit is then the head-to-tail concatenation of each block's routed circuit, with all two-qubit gates on nearest-neighbor edges. The complete routed circuit for $n=8$---every block of $\GPF(8)$ routed and concatenated, rendered as a 2.5D depth chart---is in Appendix~\ref{app:circuit}, Fig.~\ref{fig:gpf-routed-n8}: $278$ CNOTs plus $350$ SWAPs, all nearest-neighbor, realizing the original $\GPF(8)$. Summed over the $D^{\GPF^*}_z(n_L)=\Theta(2^{n/2})$ top-level blocks, these per-block boundary and restoration costs contribute at most $O(n_H\,2^{n/2})=O(2^{n/2})$ to the total routing depth---doubly-exponentially subdominant to the leading $\Theta(2^n/n)$ term, so they leave the leading constant unchanged.

\subsection{Routing-Cost Complexity}
\label{sec:complexity}

Summing single-block costs along the recursion tree, each block's routing cost is proportional to its own logical cost:
\begin{itemize}
    \item \textbf{Two-row top-level GPS-2L} (width $n_H$): routing depth $3\mu_{\mathrm{jump}}2^{n_H}$, gate count $3(n_H-1)\mu_{\mathrm{jump}}2^{n_H}$ (Theorem~\ref{thm:gps2l});
    \item \textbf{Single-row GPS-1L/GPS$^*$-1L} (width $w_r=n/2^{r+1}$, single-row recursion depth $r$): routing depth $6\mu_{\mathrm{jump}}2^r 2^{w_r}$, gate count $3(2w_r-1)\mu_{\mathrm{jump}}2^r 2^{w_r}$ (Theorem~\ref{thm:gps1l}; the $2^r$ factor is the bystander-induced brickwork deepening, from $\delta_{r+3}+\delta_{r+2}=2^{r+2}$; at $r=0$ exactly twice the two-row cost).
\end{itemize}

Only the top-level two-row term is needed in exact form: it dominates (proof below), so the single-row terms---exact at their own $r=0$ and leading-order upper bounds for $r>0$---enter the result only through the doubly-exponentially subdominant correction. The leading constants below are therefore exact in $C$, and exact in $\mu_{\mathrm{jump}}$ for $n_H\le 6$ (upper bounds in the large-$n$ regime where $\mu_{\mathrm{jump}}\approx 1.06$).

We adopt the complexity notation of the synthesis half: the constant $C\approx 3.40147$ (Eq.~\eqref{eq:const-C}) and its partial products $P_m\to C$ (Appendix~\ref{app:constants}); the routing constant additionally carries $\mu_{\mathrm{jump}}$ ($\approx 1.06$ for the low-jump instance, Lemma~\ref{lem:minjump}).

\begin{theorem}[Routing depth complexity of complete GPF]
\label{thm:route-depth}
The routing-added CNOT depth of complete $\GPF(n)$ satisfies the recurrences
\begin{align}
    D^{\GPF\text{-}2\mathrm{L}}_{\mathrm{route}}(n) ={}
    &\underbrace{3\mu_{\mathrm{jump}}\,2^{n_H}D^{\GPF^*}_z(n_L)}_{\text{top-level GPS-2L}}
    \notag\\
    &+ D^{\GPF\text{-}1\mathrm{L}}_{\mathrm{route}}(n_H,0)
    \notag\\
    &+ D^{\GPF^*\text{-}1\mathrm{L}}_{\mathrm{route}}(n_L,0),
    \label{eq:route-depth-top} \\
    D^{\GPF\text{-}1\mathrm{L}}_{\mathrm{route}}(n,r) ={}
    &6\mu_{\mathrm{jump}}\,2^r 2^{n_H}D^{\GPF^*}_z(n_L)
    \notag\\
    &+ D^{\GPF\text{-}1\mathrm{L}}_{\mathrm{route}}(n_H,r{+}1)
    \notag\\
    &+ D^{\GPF^*\text{-}1\mathrm{L}}_{\mathrm{route}}(n_L,r{+}1),
    \label{eq:route-depth-1l} \\
    D^{\GPF^*\text{-}1\mathrm{L}}_{\mathrm{route}}(n,r) ={}
    &6\mu_{\mathrm{jump}}\,2^r 2^{n_H}D^{\GPF^*}_z(n_L)
    \notag\\
    &+ D^{\GPF^*\text{-}1\mathrm{L}}_{\mathrm{route}}(n_H,r{+}1)
    \notag\\
    &+ D^{\GPF^*\text{-}1\mathrm{L}}_{\mathrm{route}}(n_L,r{+}1)
    \label{eq:route-depth-star1l}
\end{align}
($n\le 2$ yields $0$). The two single-row recurrences share the same leading term and differ only in the high-half module (GPF vs.\ GPF$^*$), so their distinction affects only the doubly-exponentially subdominant correction, not the leading constant. The normalized depth $D^{\GPF\text{-}2\mathrm{L}}_{\mathrm{route}}(n)/(2^n/n)$ oscillates in sawtooth fashion, with minima at $n=2^\lambda$ and maxima at $n=2^{\lambda+1}-1$ within each dyadic interval $[2^\lambda,2^{\lambda+1})$, converging respectively to
\begin{equation}
\begin{aligned}
    \frac{D^{\GPF\text{-}2\mathrm{L}}_{\mathrm{route}}(2^\lambda)}{2^n/n}
    &\to 3\mu_{\mathrm{jump}}\,C\approx 10.82,\\
    \frac{D^{\GPF\text{-}2\mathrm{L}}_{\mathrm{route}}(2^\lambda{-}1)}{2^n/n}
    &\to 3\mu_{\mathrm{jump}}\,2(C{-}1)\approx 15.27.
\end{aligned}
\end{equation}
\end{theorem}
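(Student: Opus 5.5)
The plan has two parts. First we justify the three recurrences from the assembly of Section~\ref{sec:assembly}. Then we extract the limits of the normalized depth exactly as in the proof of Theorem~\ref{thm:balanced-constants}.

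\textbf{The recurrences.} Section~\ref{sec:assembly} shows that $\GPF(n)$ is a head-to-tail concatenation of three kinds of blocks: the $D^{\GPF^*}_z(n_L)$ top-level GPS-2L blocks, the high-half $\GPF(n_H)$ and the low-half $\GPF^*(n_L)$. Each block returns to the canonical arrangement up to an $O(1)$ boundary correction, so routing depths add. Theorem~\ref{thm:gps2l} charges $3\mu_{\mathrm{jump}}2^{n_H}$ to each top-level block, which gives the first term of~\eqref{eq:route-depth-top}. The two halves sit on single-row segments at recursion level $0$ of the single-row tree.

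Unfolding one more level, each single-row node at level $r$ contains $D^{\GPF^*}_z(n_L)$ GPS-1L or GPS$^*$-1L blocks of high width $n_H$. Theorem~\ref{thm:gps1l} charges each of them $6\mu_{\mathrm{jump}}2^r2^{n_H}$. Its children sit at level $r+1$, which gives~\eqref{eq:route-depth-1l}--\eqref{eq:route-depth-star1l}. The only difference between $\GPF$ and $\GPF^*$ nodes is which module type appears in the high child, since both modules share the same routing geometry. For $n\le 2$ every interaction is already nearest-neighbor on the segment, so the base value is $0$. The boundary and restoration costs total $O(n_H2^{n/2})$, as already bounded in Section~\ref{sec:assembly}, so they drop out of the leading term.

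\textbf{The limits.} Next we show that the single-row contributions are negligible after dividing by $2^n/n$. The level-$0$ single-row terms are $O(2^{r}2^{n/4}D^{\GPF^*}_z(n/4))=O(2^{n/2}/n)$. Level $r$ contributes at most $2^r$ nodes, each $O(2^r 2^{n/2^{r+1}}\cdot 2^{n/2^{r+1}}/(n/2^{r+1}))$. Summing over $r\ge 0$ gives $O(2^{n/2}\,\mathrm{poly}(n))$, which is $o(2^n/n)$ because the sum is geometric-doubly-exponential in $r$ and dominated by $r=0$. Only the top-level term remains, so
\[
\frac{D^{\GPF\text{-}2\mathrm{L}}_{\mathrm{route}}(n)}{2^n/n}= 3\mu_{\mathrm{jump}}\frac{2^{n_H}D^{\GPF^*}_z(n_L)}{2^n/n}+o(1).
\]
By the third line of~\eqref{eq:recurrences}, $2^{n_H}D^{\GPF^*}_z(n_L)=D^{\GPF}_z(n)-D^{\GPF}_z(n_H)$. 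The subtracted term is $O(2^{n/2})$, so the ratio equals $3\mu_{\mathrm{jump}}D^{\GPF}_z(n)/(2^n/n)+o(1)$. Theorem~\ref{thm:balanced-constants} then yields the limit $3\mu_{\mathrm{jump}}C$ along $n=2^\lambda$ and $3\mu_{\mathrm{jump}}\cdot2(C-1)$ along $n=2^\lambda-1$. The sawtooth envelope placement (minima at $2^\lambda$, maxima at $2^{\lambda+1}-1$) transfers directly from Corollary~\ref{cor:all-n}. Substituting $\mu_{\mathrm{jump}}\approx1.06$ gives the numerical values.

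\textbf{Main obstacle.} I expect the hardest step to be making the subdominance bound uniform. Eq.~\eqref{eq:gps1l-depth} is only an upper bound for $r>0$, and the $2^r$ growth of the brickwork depth must not overwhelm the doubly exponential decay of $2^{w_r}$. I would first bound the total single-row cost crudely as $\sum_r 2^r\cdot 2^r\cdot 6\mu_{\mathrm{jump}}2^{2w_r}$ with $w_r=n/2^{r+1}$. That sum is dominated by $r=0$ and equals $O(2^{n/2})$ times a constant, which suffices. A second subtlety is that $\mu_{\mathrm{jump}}$ depends on the block width. The lower-level blocks may use different per-block values, but since every value lies in $[1,1.07]$ this affects only the subdominant terms.
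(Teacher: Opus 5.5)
Your overall strategy matches the paper's. You isolate the top-level term $3\mu_{\mathrm{jump}}2^{n_H}D^{\GPF^*}_z(n_L)$ as dominant and evaluate its normalized limit along the two subsequences. Using the third line of \eqref{eq:recurrences}, $2^{n_H}D^{\GPF^*}_z(n_L)=D^{\GPF}_z(n)-D^{\GPF}_z(n_H)$, to hand the limits to Theorem~\ref{thm:balanced-constants} is a tidy repackaging of Appendix~\ref{app:route-constants}. That appendix re-runs the telescoping directly: it gets exactly $3\mu_{\mathrm{jump}}P_{\lambda-1}$ on $n=2^\lambda$, and reuses $v_\lambda=P_\lambda-1$ with the factor $2$ from $2^{n_H}$ on $n=2^\lambda-1$. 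Your version is the same computation and is correct, since $D^{\GPF}_z(n_H)=O(2^{n/2})$.

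\textbf{The subdominance estimate is off by one recursion level, and as written it fails.} In \eqref{eq:route-depth-top}--\eqref{eq:route-depth-star1l}, a single-row node at level $r$ has argument $m\approx n/2^{r+1}$. Its blocks therefore have high width $\approx n/2^{r+2}$, and its own term is $\Theta(2^r\,2^{m}/m)$. At $r=0$ this is the $\Theta(2^{n/2}/n)$ you correctly state in your first sentence.

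You then import $w_r=n/2^{r+1}$ from Theorem~\ref{thm:gps1l}, whose indexing puts $r=0$ at the folded top-level block of width $n/2$. With that width, your per-node bound at $r=0$ is $\Theta(2^n/n)$, the same order as the top-level term. Your crude bound $\sum_r 4^r\cdot 6\mu_{\mathrm{jump}}2^{2w_r}$ is worse: its $r=0$ term is $6\mu_{\mathrm{jump}}2^n$, which exceeds the top-level term by a factor $\sim n$. So neither bound yields the $O(2^{n/2}\,\mathrm{poly}(n))$ you conclude.

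To fix it, use width $n/2^{r+2}$ at recurrence level $r$ (equivalently, start the sum at $r=1$ in Theorem~\ref{thm:gps1l}'s convention). Also count $2^{r+1}$ nodes per level rather than $2^r$. The level-$r$ total is then $\Theta(2^{3r}2^{n/2^{r+1}}/n)$, and the sum over all levels is $O(2^{n/2})$, which is what you need. The paper has the same convention mismatch between the recurrence's $2^r$ factor and Theorem~\ref{thm:gps1l}, but that affects only subdominant terms.

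\textbf{Minor point on the sawtooth.} The location of the extrema (minima at $2^\lambda$, maxima at $2^{\lambda+1}-1$) does not transfer directly from Corollary~\ref{cor:all-n}. The routing ratio equals $3\mu_{\mathrm{jump}}$ times the synthesis ratio only up to an $n$-dependent perturbation of size $O(2^{-n/2}\,\mathrm{poly}(n))$. That perturbation preserves the limits and the asymptotic bracketing, but not by itself the exact argmin and argmax within each dyadic interval. The paper establishes that placement by directly evaluating the routing recurrence for $n\le 8192$. You should either cite that evaluation or claim only the asymptotic bracketing between $3\mu_{\mathrm{jump}}C$ and $3\mu_{\mathrm{jump}}\cdot 2(C-1)$.
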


\begin{proof}[Proof sketch]
The top-level term $3\mu_{\mathrm{jump}}2^{n_H}D^{\GPF^*}_z(n_L)$ dominates: the two single-row recursions act on roughly $n/2$ qubits each, with magnitude $\Theta(2^{n/2})$, doubly-exponentially smaller than the top-level's $\Theta(2^n/n)$ (relative correction $\sim 2^{-n/2}$). On the balanced subsequence $n=2^\lambda$ with $n_H=n_L=2^{\lambda-1}$, the phase recurrence collapses to a scalar form, and telescoping gives the top-level term divided by $2^n/n$ approaching $3\mu_{\mathrm{jump}}C$; the upper envelope from the maximally unbalanced $n=2^\lambda-1$ shifts the base constant from $C$ to $2(C-1)$ by the same telescoping argument. Full algebra is in Appendix~\ref{app:route-constants}. Evaluating the recurrence at $n=8,16,32,64$ gives normalized depth $12.0,11.1,10.82,10.82$, monotonically approaching $3\mu_{\mathrm{jump}}C$ (Fig.~\ref{fig:route-depth}).
\end{proof}

\begin{theorem}[Routing gate-count complexity of complete GPF]
\label{thm:route-count}
The routing-added CNOT gate count $S^{\GPF\text{-}2\mathrm{L}}_{\mathrm{route}}(n)$ satisfies an isomorphic recurrence (top-level: $3(n_H-1)\mu_{\mathrm{jump}}2^{n_H}$ per block; single-row: $3(2w_r-1)\mu_{\mathrm{jump}}2^r 2^{w_r}$ per block). The normalized gate count $S^{\GPF\text{-}2\mathrm{L}}_{\mathrm{route}}(n)/2^n$ oscillates in sawtooth fashion with subsequence limits
\begin{equation}
\begin{aligned}
    \frac{S^{\GPF\text{-}2\mathrm{L}}_{\mathrm{route}}(2^\lambda)}{2^n}
    &\to\tfrac{3}{2}\mu_{\mathrm{jump}}\,C\approx 5.41,\\
    \frac{S^{\GPF\text{-}2\mathrm{L}}_{\mathrm{route}}(2^\lambda{-}1)}{2^n}
    &\to\tfrac{3}{2}\mu_{\mathrm{jump}}\,2(C{-}1)\approx 7.64.
\end{aligned}
\end{equation}
Gate count exceeds depth by a width factor of approximately $n_H$ (each depth layer has roughly $n_H-1$ parallel SWAPs); under normalization this factor is absorbed into $2^n$, halving the constant to $\tfrac{3}{2}\mu_{\mathrm{jump}}$. The proof is isomorphic to Theorem~\ref{thm:route-depth} (Fig.~\ref{fig:route-count}).
\end{theorem}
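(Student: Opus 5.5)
The plan is to mirror the proof of Theorem~\ref{thm:route-depth}, replacing each per-block depth by the corresponding per-block gate count. First I will write the recurrence explicitly:
\[
S^{\GPF\text{-}2\mathrm{L}}_{\mathrm{route}}(n)=3(n_H{-}1)\mu_{\mathrm{jump}}2^{n_H}D^{\GPF^*}_z(n_L)+S^{\GPF\text{-}1\mathrm{L}}_{\mathrm{route}}(n_H,0)+S^{\GPF^*\text{-}1\mathrm{L}}_{\mathrm{route}}(n_L,0).
\]
The leading term is justified by two facts. The top-level block count is $D^{\GPF^*}_z(n_L)$ (Section~\ref{sec:assembly}). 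The per-block count comes from Theorem~\ref{thm:gps2l}. The single-row recurrences use the per-block count $3(2w_r-1)\mu_{\mathrm{jump}}2^r2^{w_r}$ from Theorem~\ref{thm:gps1l}, together with the depth-analogous bystander factor. The $O(1)$ boundary and $O(n_H)$ restoration costs from Section~\ref{sec:assembly} add at most $O(n_H^2 2^{n/2})$ gates in total.

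Second, I will show the top-level term dominates. Each single-row subtree acts on about $n/2$ qubits. Its top node contributes $O(n\,2^{n/4}\cdot 2^{n/4})=O(n2^{n/2})$, since the high width is $n/4$ and the block count is $D^{\GPF^*}_z(n/4)=O(2^{n/4})$ by Theorem~\ref{thm:balanced-constants}. Deeper levels $r$ carry a factor $2^r$ but $2^{w_r}$ shrinks doubly exponentially, so the geometric-type sum is $O(n2^{n/2})$. This is $o(2^n)$, so after dividing by $2^n$ only the top-level term survives.

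Third, I will normalize the top-level term. Write $3(n_H-1)\mu_{\mathrm{jump}}2^{n_H}D^{\GPF^*}_z(n_L)/2^n=\tfrac{3}{2}\mu_{\mathrm{jump}}\cdot\tfrac{2(n_H-1)}{n}\cdot\tfrac{2^{n_H}D^{\GPF^*}_z(n_L)}{2^n/n}$. The factor $2(n_H-1)/n\to 1$ on both subsequences. The last factor is exactly the quantity whose limits were computed for Theorem~\ref{thm:route-depth} (Appendix~\ref{app:route-constants}): it tends to $C$ on $n=2^\lambda$ and to $2(C-1)$ on $n=2^\lambda-1$. This is the same telescoping of the first line of Eq.~\eqref{eq:recurrences} used in Theorem~\ref{thm:balanced-constants}, where $2^{n_H}D^{\GPF^*}_z(n_L)$ is essentially $D^{\GPF}_z(n)$. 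Multiplying gives $\tfrac32\mu_{\mathrm{jump}}C$ and $\tfrac32\mu_{\mathrm{jump}}\,2(C-1)$.

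For the sawtooth claim, I will note that the ratio to the depth normalization is $(n_H-1)\cdot n/2^n\cdot 2^n/n\cdot$const. So the gate-count sawtooth is the depth sawtooth multiplied by a smooth factor tending to $1/2$, which preserves the locations of the minima and maxima asymptotically.

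The main obstacle is making the subdominance bound rigorous for the $r>0$ single-row terms. Their gate counts are only leading-order upper bounds, and the factor $2^r$ must be shown not to accumulate. I will handle this by bounding every level-$r$ contribution by $O(n\,2^r 2^{n/2^{r+1}}\cdot 2^{n/2^{r+1}})$. The $r=1$ term already dominates this sum, which is $O(n2^{n/2})$.
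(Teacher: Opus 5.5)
Your proposal is correct and follows the paper's route. The paper's proof (Appendix~\ref{app:route-constants}) likewise reuses the depth recurrence with an extra width factor $n_H-1$ per layer, lets the top-level term dominate, and absorbs $(n_H-1)/n\to\tfrac12$ into the $2^n$ normalization, which is exactly your factorization $\tfrac32\mu_{\mathrm{jump}}\cdot\tfrac{2(n_H-1)}{n}\cdot\tfrac{2^{n_H}D^{\GPF^*}_z(n_L)}{2^n/n}$.

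One small slip is in your sawtooth paragraph. The ratio of the two normalized quantities is $(n_H-1)/n$: it equals $\tfrac{m-1}{2m}$ at $n=2m$ and $\tfrac{m}{2m+1}$ at $n=2m+1$, not the expression you wrote. This ratio varies at the same $O(1/n)$ scale as neighbouring sawtooth values, so "a smooth factor tending to $1/2$" does not by itself fix where the extrema sit. It is cleaner to read the extremum locations directly from the sawtooth of $D^{\GPF^*}_z(m)/(2^m/m)$, which the paper also supports only numerically.
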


\begin{figure}[!t]
    \centering
    \includegraphics[width=\columnwidth]{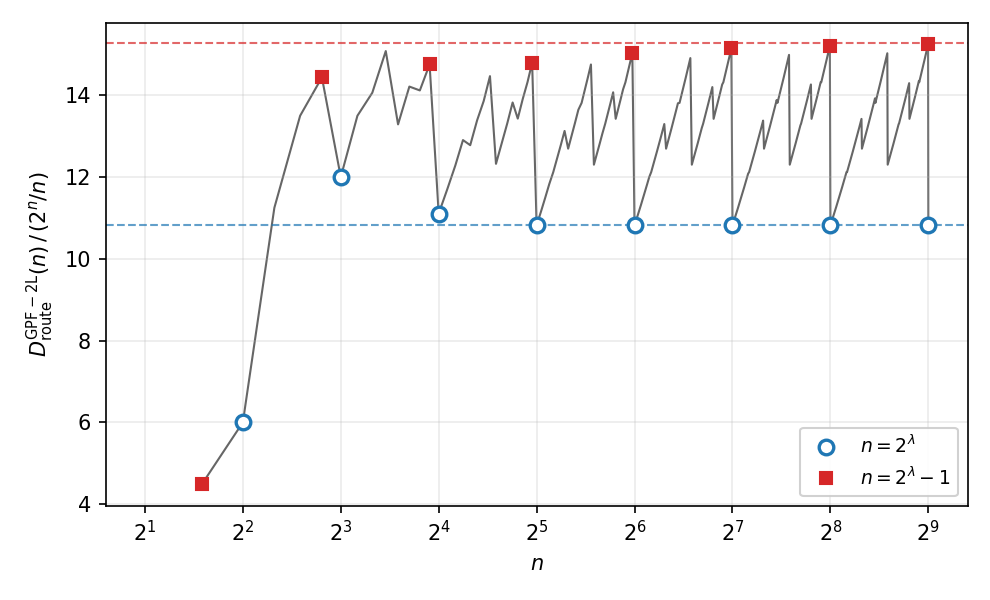}
    \caption{Normalized routing-depth complexity of complete GPF. $D^{\GPF\text{-}2\mathrm{L}}_{\mathrm{route}}/(2^n/n)$ oscillates in sawtooth fashion: hollow circles ($n=2^\lambda$) form the lower envelope, converging to $3\mu_{\mathrm{jump}}C\approx 10.82$; filled squares ($n=2^\lambda-1$) form the upper envelope, converging to $3\mu_{\mathrm{jump}}\cdot 2(C-1)\approx 15.27$.}
    \label{fig:route-depth}
\end{figure}

\begin{figure}[!t]
    \centering
    \includegraphics[width=\columnwidth]{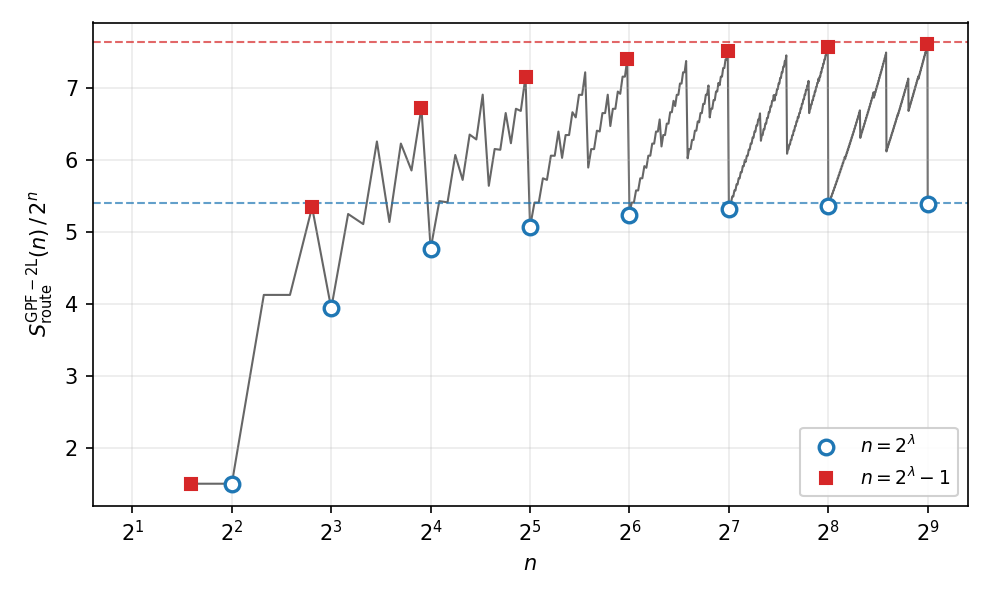}
    \caption{Normalized routing gate-count complexity of complete GPF. $S^{\GPF\text{-}2\mathrm{L}}_{\mathrm{route}}/2^n$ exhibits isomorphic sawtooth behavior: lower envelope converges to $\tfrac{3}{2}\mu_{\mathrm{jump}}C\approx 5.41$; upper envelope to $\tfrac{3}{2}\mu_{\mathrm{jump}}\cdot 2(C-1)\approx 7.64$.}
    \label{fig:route-count}
\end{figure}

As an end-to-end consistency check, the single-block formulas, the assembly recurrence, and an explicitly constructed circuit agree at the smallest fully-recursive instance, $n=8$. The block-count recurrence~\eqref{eq:recurrences} reproduces the top-level GPS-2L counts $D^{\GPF^*}_z(2),\ldots,D^{\GPF^*}_z(6)=2,4,6,12,20$; the recurrence~\eqref{eq:route-depth-top}--\eqref{eq:route-depth-star1l} evaluated at $n=8$ gives normalized depth $12.0$, on the predicted sawtooth (Fig.~\ref{fig:route-depth}); and the fully assembled, concatenated $\GPF(8)$ routed circuit (Section~\ref{sec:assembly}, Appendix~\ref{app:circuit}) is the explicitly constructed circuit realizing the original. The complexity statement therefore rests on a chain that closes end-to-end at $n=8$, not on the asymptotics alone.

This yields the paper's main result: the routing overhead on restricted topology is of the same asymptotic order as synthesis---depth $\Theta(2^n/n)$, gate count $\Theta(2^n)$---with leading constants given in closed form by $C$ and $\mu_{\mathrm{jump}}$ ($\mu_{\mathrm{jump}}\approx 1.06$ for the low-jump instance). Compiling GPF onto a nearest-neighbor topology does not change its complexity class; it introduces only a definite constant factor. This preservation is also topology-independent, and on any topology containing a linear chain the compiled depth is asymptotically optimal, as we now record.

\begin{corollary}[Topology-independent preservation and optimality]
\label{cor:topology}
Folding the top-level block onto a single row as well (Theorem~\ref{thm:gps1l} at $r=0$) compiles the complete $\GPF(n)$ on a one-dimensional line at routing depth $\Theta(2^n/n)$, at twice the two-row leading constant. A line is a spanning subgraph of every standard nearest-neighbor architecture---a $1$-D chain, a $2$-D grid, a ladder, or heavy-hex---so the same routing applies on each. Moreover the $\Omega(2^n/n)$ depth lower bound for a generic $n$-qubit diagonal unitary is a counting bound on the at most $\lfloor n/2\rfloor$ two-qubit gates a single layer admits, independent of connectivity \cite{Sun2023Depth}, so it holds on each of these topologies and the construction matches it. The compiled depth is therefore asymptotically optimal on any topology containing a linear chain, with the two-row grid (Theorem~\ref{thm:gps2l}) the cost-optimized choice whose row dimension the ancilla fan-out scales.
\end{corollary}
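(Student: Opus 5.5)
The corollary has three parts, and I would take them in order: a one-line realization, transfer to any architecture containing a chain, and a matching lower bound.

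\emph{One-line realization.} I would modify the assembly of Section~\ref{sec:assembly} in one place only. Each top-level GPS-2L block is replaced by its folded version GPS-1L at recursion level $r=0$. The high half $\GPF(n_H)$ and the low half $\GPF^*(n_L)$ already sit on single-row segments, so they need no change. By Theorem~\ref{thm:gps1l}, the folded top-level block costs $6\mu_{\mathrm{jump}}2^{n_H}$, which is $2\,D^{\GPS\text{-}2\mathrm{L}}_{\mathrm{route}}(n)$. The top-level recurrence \eqref{eq:route-depth-top} is then replaced by \eqref{eq:route-depth-1l} evaluated at $r=0$. Its leading term is $6\mu_{\mathrm{jump}}2^{n_H}D^{\GPF^*}_z(n_L)$. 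The two recursive terms are $O(2^{n/2})$ by the same domination argument used in the proof of Theorem~\ref{thm:route-depth}. The telescoping of Theorem~\ref{thm:balanced-constants} then gives normalized limits $6\mu_{\mathrm{jump}}C$ and $6\mu_{\mathrm{jump}}\,2(C-1)$, which are exactly twice the two-row constants. I also need the per-block boundary and restoration corrections to stay subdominant. Lane restoration on a line of length $n$ takes $O(n)$ odd--even brick layers per block. Summed over $\Theta(2^{n/2})$ blocks this is $O(n2^{n/2})$, which is still $o(2^n/n)$.

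\emph{Transfer to other topologies.} Suppose the coupling graph contains a Hamiltonian path on $n$ vertices, or more generally a path on at least $n$ vertices. I would place the line layout along that path, leaving any extra vertices idle. Every gate of the one-line circuit acts on a path edge, which is an edge of the coupling graph, so the depth and gate count carry over unchanged. To justify the list in the statement, I would exhibit such a path for each architecture: a snake traversal of the 2-D grid, the boundary path of the ladder, and the known long paths of heavy-hex patches.

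\emph{Lower bound.} I would cite the counting argument of \cite{Sun2023Depth}. A generic diagonal unitary has $2^n-1$ independent real parameters. Each layer contains at most $n$ single-qubit gates and at most $\lfloor n/2\rfloor$ two-qubit gates, so a layer contributes only $O(n)$ continuous parameters, and depth must be $\Omega(2^n/n)$. The argument never refers to connectivity, so it holds on every topology. Combined with the $O(2^n/n)$ total depth, this gives the $\Theta$ claim. Total depth here means synthesis depth (Theorem~\ref{thm:balanced-constants}) plus routing depth, and both are $O(2^n/n)$.

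\emph{Main obstacle.} I expect the delicate step to be the folded top level. Theorem~\ref{thm:gps1l} was derived with the top-level GPS as its $r=0$ example, but the recursive children must now begin at level $r+1=1$ on the line rather than on fresh rows. I would have to check that their bystander spacing matches the $\delta_{r+3}$ and $\delta_{r+2}$ shifts assumed in that theorem. If the exact depths drift, I would fall back on the statement's note that the $r>0$ formulas are upper bounds that enter only the doubly-exponentially subdominant term. That suffices for the leading constant, so the factor-of-two claim would survive.
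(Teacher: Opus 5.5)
Your proposal is correct and follows the paper's own reasoning: you fold the top level via Theorem~\ref{thm:gps1l} at $r=0$, so that \eqref{eq:route-depth-top} is replaced by \eqref{eq:route-depth-1l} at $r=0$. This doubles only the leading constant, while the children (now entering at $r=1$) and the $O(n2^{n/2})$ restoration costs stay subdominant; you then embed the line along a path of the coupling graph and invoke the connectivity-independent counting bound of \cite{Sun2023Depth}. Your weaker embedding hypothesis (a path with enough vertices, namely $2n_H$ sites counting the idle padding site when $n$ is odd, rather than a spanning line) is the safer formulation, because finite heavy-hex layouts can have more than two degree-one qubits, and then no Hamiltonian path exists.
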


\subsection{Comparison with General-Purpose Routers}
\label{sec:comparison}

This comparison is the empirical face of the preservation result: on the native two-row layout our normalized routing overhead stays bounded and closed-form, while general-purpose heuristic routers must search and become infeasible at these sizes. To make this concrete, we place our method and the routers on the same $2\times n_H$ two-row topology. Three representative routers are used: qiskit's LightSABRE \cite{Li2019SABRE}, tket's LexiRoute \cite{Sivarajah2021tket}, and BQSKit's Sabre \cite{Younis2021BQSKit}. Each is given the same low-jump GPF CNOT skeleton (Section~\ref{sec:minjump}) and performs only layout and routing without additional gate cancellation, matching the construction-only, no-local-optimization setting of our method. The unified metric is the post-routing total two-qubit depth: SWAPs are expanded to $3$ CNOTs and tket's BRIDGE remote CNOT to $4$ CNOTs, with no cancellation, and our method uses the conservative upper bound from Section~\ref{sec:complexity}. Scans cover $n=3,\ldots,16$; search-based routers become infeasible beyond $n\approx 16$.


\begin{figure}[!t]
    \centering
    \includegraphics[width=\columnwidth]{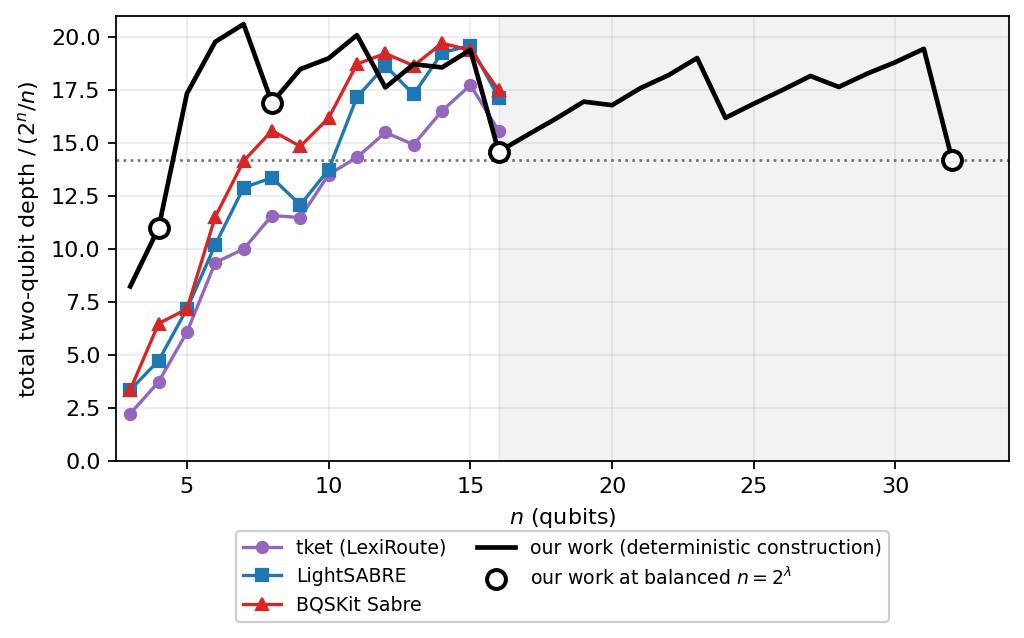}
    \caption{Normalized total two-qubit depth on the $2\times n_H$ two-row topology (divided by $2^n/n$). Our deterministic construction's normalized depth is bounded (converging to a constant) and needs no search, so it extends across the shaded region where search-based routers become infeasible.}
    \label{fig:comparison}
\end{figure}

Fig.~\ref{fig:comparison} normalizes the total depth by $2^n/n$. We flag one asymmetry here, revisited in caveat~(i) below: our curve is the analytic upper bound of Section~\ref{sec:complexity} while the router curves are measured, so the comparison is conservative, our true cost lying below the bound. The primary finding is structural rather than a single-point win: our normalized total depth converges to the closed-form constant $3\mu_{\mathrm{jump}}C+C\approx 14.2$ (Section~\ref{sec:complexity}) and is available without any search, so it extends to scales where search-based routers are infeasible ($n\gtrsim 16$, shaded region). The curves cross at the balanced size $n=16$, where our (upper-bound) cost first falls below all three routers; we stress that this is one point on a trend, not the main claim, and that tket's LexiRoute is the strongest baseline at the smaller sizes $n=8,10,12,14$.

Three caveats keep this comparison honest. \emph{(i) Bound vs.\ measured.} Our curve in Fig.~\ref{fig:comparison} is the conservative upper bound of Section~\ref{sec:complexity}---every layer at full cost, including the $2^s$ bystander factor, with no logical/routing overlap---whereas the routers are measured; the assembled circuit is markedly sparser (the assembled $\GPF(8)$ uses $278$ CNOTs and $350$ SWAPs, Appendix~\ref{app:circuit}), so the true ``Ours'' values lie below the bound and the crossover is conservative. \emph{(ii) Topology.} The construction itself is topology-independent (Corollary~\ref{cor:topology}); the $2\times n_H$ two-row grid is the cost-optimized layout \emph{induced by the GPF recursion}, not a near-square coupling map tuned for generic circuits. What this comparison measures is the \emph{constant} on that geometry. On a more connected near-square grid the general routers exploit the extra edges and lead on the constant, so the bounded-overhead advantage is specific to the two-row geometry GPF targets---``native'' means native to the method, not the device. \emph{(iii) Metric.} ``Total two-qubit depth'' is a connectivity-overhead proxy; our routing is SWAP-heavy ($n=8$ adds $350$ routing SWAPs---$1050$ CNOT-equivalents---against $278$ logical CNOTs).

\subsection{Beyond Ancilla-Free: A Depth--Space Tradeoff}
\label{sec:fanout}

GPF's Walsh phases are additive and order-independent (a byproduct of the block-wise construction of Section~\ref{sec:assembly}), so the $\Theta(2^n/n)$ phase layers can be split into $k$ portions and run on $k$ copies of the working register in parallel. The gain is not confined to the phase count: in a copy that keeps only its $1/k$ portion of phases, every CNOT-skeleton segment bracketing an omitted phase has no intervening phase gate and cancels pairwise---a GPS/GPS$^*$ block stripped of all phases is the identity, since the closed Gray cycle returns every parity check to its initial value. Each copy therefore collapses to roughly $1/k$ of the full circuit plus a small boundary residual, so both its CNOT and its phase depth fall to $\approx D^{\GPF\text{-}2\mathrm{L}}_{\mathrm{route}}(n)/k$. This depth--space trade-off is a third realization of the construction, spending ancilla rows to buy depth, and concretizes known depth--space trade-offs for diagonal operators with ancillas \cite{Sun2023Depth,Jiang2020SpaceDepth,Zylberman2024Diagonal}.

\begin{figure}[!t]
    \centering
    \includegraphics[width=0.82\columnwidth]{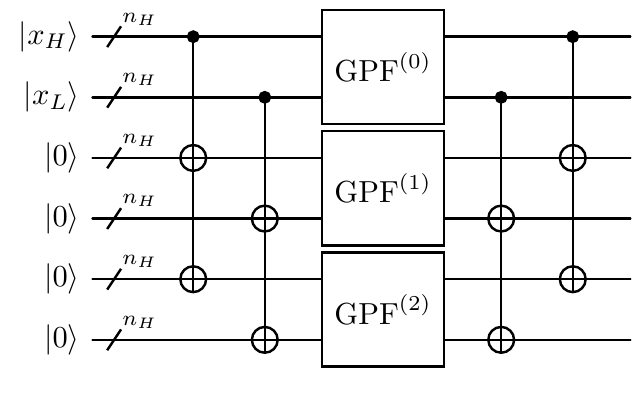}
    \vspace{-0.6em}
    \caption{Fan-out replication of GPF phase layers ($k=3$): the working register is copied to $k-1$ ancilla copies, each running its $1/k$ share of the phase layers in parallel before un-fan-out accumulates them back. This is the space--time extension---$k$-fold space cuts depth to $\approx 1/k$.}
    \label{fig:fanout}
\end{figure}

The mechanism has three steps: fan-out replicates the working register into $k$ copies via lateral CNOTs; each copy $j$ then executes $\GPF^{(j)}$---the shared CNOT skeleton carrying only its $j$-th portion of phases, the rest cancels as above---with all $k$ copies in parallel; and un-fan-out applies reverse CNOTs to accumulate every copy's phases back onto the working register and reset the ancillas to $\ket{0}$.

The boundary residual is bounded as follows. The only skeleton layers surviving the pairwise cancellation in a copy are the partial blocks at the two ends of its retained phase range; each such partial block simplifies to at most $n_H$ layers (slightly more at deeper recursion, from bystander distances), and at most one arises per level of the $O(\log n)$-deep recursion tree. Summing over levels gives $O(n_H\log n)=O(n\log n)$ surviving layers per copy, to which fan-out and un-fan-out add only a lower-order replication cost; the constant-tracking is a routine induction we omit.

Using $k$ copies on $kn$ physical qubits, each copy runs $\approx 1/k$ of the circuit, so
\begin{equation}
    D^{\mathrm{ancilla}}_{\mathrm{route}}(kn) \approx \frac{D^{\GPF\text{-}2\mathrm{L}}_{\mathrm{route}}(n)}{k}+O(n\log n),
\end{equation}
where $D^{\GPF\text{-}2\mathrm{L}}_{\mathrm{route}}(n)=\Theta(2^n/n)$ is the complete GPF routing depth of Section~\ref{sec:complexity}: $k$-fold space amortizes the exponential leading term by $k$ at an additive $O(n\log n)$ cost, negligible against $2^n/n$. This stays asymptotically optimal for the enlarged register: with $a$ ancillary qubits the depth of an $n$-qubit diagonal unitary is bounded below by $\Omega\!\big(2^n/(n+a)\big)$ \cite{Sun2023Depth}, which for the $2k$-row layout's $a=(k{-}1)n$ ancillas reads $\Omega(2^n/(kn))$; the routed depth above meets this bound up to the additive $O(n\log n)$ term, so the trade-off stays on the optimal frontier even once routing overhead is counted. Two caveats. First, replication needs cross-copy connectivity that a single $2\times n_H$ working grid lacks; each copy occupies its own two-row grid, so the $k$ copies tile an extended grid of $2k$ rows of $n_H$ (the working register's two rows plus $2(k{-}1)$ ancilla rows; Fig.~\ref{fig:fanout}), on which the lateral replication CNOTs become inter-row edges, so the natural setting is this larger grid rather than all-to-all connectivity. Second, the scheme abandons the ancilla-free premise that motivates GPF. The exact extended-grid dimensions follow from the same recursion; we do not track their constants here.

\section{Conclusion}
\label{sec:discussion}

This paper treats the circuit realization of diagonal unitaries end to end. On the synthesis side, an ancilla-free Gray-Path Framework realizes any $n$-qubit diagonal unitary in $R_z$ and CNOT depth $O(2^n/n)$, governed by a single explicit constant $C\approx 3.40147$. On the compilation side we establish three results. First, mapping the construction onto a nearest-neighbor two-row grid preserves this asymptotic optimality with closed-form constants: because the interaction structure is fixed in advance, routing is solved by a closed-form schedule rather than heuristic search, so restricted topology costs only a definite constant factor. Second, folding the construction onto a single row extends this preservation to be topology-independent---asymptotically optimal on any architecture containing a linear chain---at twice the two-row constant. Third, replicating the phase layers across additional rows yields a space--time tradeoff that cuts depth by a factor $k$ using $k$ working copies while remaining asymptotically optimal for the enlarged register. Together these place diagonal-unitary compilation on restricted hardware on the same asymptotic footing as all-to-all synthesis, with the leading constants available in closed form.

A natural open question is whether the asymptotic depth optimality established for diagonal unitaries extends to larger classes of structured operators under nearest-neighbor constraints.

\bibliographystyle{IEEEtran}
\bibliography{article}

\begin{thebibliography}{10}
\providecommand{\url}[1]{#1}
\csname url@samestyle\endcsname
\providecommand{\newblock}{\relax}
\providecommand{\bibinfo}[2]{#2}
\providecommand{\BIBentrySTDinterwordspacing}{\spaceskip=0pt\relax}
\providecommand{\BIBentryALTinterwordstretchfactor}{4}
\providecommand{\BIBentryALTinterwordspacing}{\spaceskip=\fontdimen2\font plus
\BIBentryALTinterwordstretchfactor\fontdimen3\font minus
  \fontdimen4\font\relax}
\providecommand{\BIBforeignlanguage}[2]{{%
\expandafter\ifx\csname l@#1\endcsname\relax
\typeout{** WARNING: IEEEtran.bst: No hyphenation pattern has been}%
\typeout{** loaded for the language `#1'. Using the pattern for}%
\typeout{** the default language instead.}%
\else
\language=\csname l@#1\endcsname
\fi
#2}}
\providecommand{\BIBdecl}{\relax}
\BIBdecl

\bibitem{Farhi2014QAOA}
E.~Farhi, J.~Goldstone, and S.~Gutmann, ``A quantum approximate optimization
  algorithm,'' 2014.

\bibitem{Bremner2011IQP}
M.~J. Bremner, R.~Jozsa, and D.~J. Shepherd, ``Classical simulation of
  commuting quantum computations implies collapse of the polynomial
  hierarchy,'' \emph{Proceedings of the Royal Society A: Mathematical, Physical
  and Engineering Sciences}, vol. 467, no. 2126, pp. 459--472, 2011.

\bibitem{Welch2014Diagonal}
J.~Welch, D.~Greenbaum, S.~Mostame, and A.~Aspuru-Guzik, ``Efficient quantum
  circuits for diagonal unitaries without ancillas,'' \emph{New Journal of
  Physics}, vol.~16, no.~3, p. 033040, 2014.

\bibitem{Sun2023Depth}
X.~Sun, G.~Tian, S.~Yang, P.~Yuan, and S.~Zhang, ``Asymptotically optimal
  circuit depth for quantum state preparation and general unitary synthesis,''
  \emph{IEEE Transactions on Computer-Aided Design of Integrated Circuits and
  Systems}, vol.~42, no.~10, pp. 3301--3314, 2023.

\bibitem{Zhang2024DiagonalDepth}
S.~Zhang, K.~Huang, and L.~Li, ``Depth-optimized quantum circuit synthesis for
  diagonal unitary operators with asymptotically optimal gate count,''
  \emph{Physical Review A}, vol. 109, no.~4, p. 042601, 2024.

\bibitem{Yan2024Overview}
G.~Yan, W.~Wu, Y.~Chen, K.~Pan, X.~Lu, Z.~Zhou, Y.~Wang, R.~Wang, and J.~Yan,
  ``Quantum circuit synthesis and compilation optimization: Overview and
  prospects,'' 2024.

\bibitem{Li2019SABRE}
G.~Li, Y.~Ding, and Y.~Xie, ``Tackling the qubit mapping problem for {NISQ}-era
  quantum devices,'' in \emph{Proceedings of the 24th International Conference
  on Architectural Support for Programming Languages and Operating Systems
  (ASPLOS '19)}, 2019, pp. 1001--1014.

\bibitem{Shende2006Synthesis}
V.~V. Shende, S.~S. Bullock, and I.~L. Markov, ``Synthesis of quantum-logic
  circuits,'' \emph{IEEE Transactions on Computer-Aided Design of Integrated
  Circuits and Systems}, vol.~25, no.~6, pp. 1000--1010, 2006.

\bibitem{Bullock2004Diagonal}
S.~S. Bullock and I.~L. Markov, ``Asymptotically optimal circuits for arbitrary
  n-qubit diagonal computations,'' \emph{Quantum Information \& Computation},
  vol.~4, no.~1, pp. 27--47, 2004.

\bibitem{Amy2018CNOTPhase}
M.~Amy, P.~Azimzadeh, and M.~Mosca, ``On the controlled-{NOT} complexity of
  controlled-{NOT}-phase circuits,'' \emph{Quantum Science and Technology},
  vol.~4, no.~1, p. 015002, 2018.

\bibitem{Vandaele2022PhasePolynomial}
V.~Vandaele, S.~Martiel, and T.~Goubault~de Brugi{\`e}re, ``Phase polynomials
  synthesis algorithms for {NISQ} architectures and beyond,'' \emph{Quantum
  Science and Technology}, vol.~7, no.~4, p. 045027, 2022.

\bibitem{Kivlichan2018SwapNetwork}
I.~D. Kivlichan, J.~McClean, N.~Wiebe, C.~Gidney, A.~Aspuru-Guzik, G.~K.-L.
  Chan, and R.~Babbush, ``Quantum simulation of electronic structure with
  linear depth and connectivity,'' \emph{Physical Review Letters}, vol. 120,
  no.~11, p. 110501, 2018.

\bibitem{OGorman2019SwapNetworks}
B.~O'Gorman, W.~J. Huggins, E.~G. Rieffel, and K.~B. Whaley, ``Generalized swap
  networks for near-term quantum computing,'' 2019.

\bibitem{Kotil2023QAOARouting}
A.~Kotil, F.~{\v{S}}imkovic, and M.~Leib, ``Improved qubit routing for {QAOA}
  circuits,'' 2023.

\bibitem{Zhu2024Coqa}
Y.~Zhu, Y.~Zhou, J.~Cheng, Y.~Jin, B.~Li, S.~Niu, and Z.~Liang, ``Compiler
  optimizations for {QAOA},'' in \emph{Proceedings of the 43rd IEEE/ACM
  International Conference on Computer-Aided Design (ICCAD '24)}, 2024, pp.
  1--7.

\bibitem{Hashim2022FermionicSwap}
A.~Hashim, R.~Rines, V.~Omole, R.~K. Naik, J.~M. Kreikebaum, D.~I. Santiago,
  F.~T. Chong, I.~Siddiqi, and P.~Gokhale, ``Optimized {SWAP} networks with
  equivalent circuit averaging for {QAOA},'' \emph{Physical Review Research},
  vol.~4, p. 033028, 2022.

\bibitem{Montanez2025ParityTwine}
J.~A. Monta{\~n}ez-Barrera, Y.~Ji, M.~R. von Spakovsky, D.~E. Bernal~Neira, and
  K.~Michielsen, ``Optimizing {QAOA} circuit transpilation with parity twine
  and {SWAP} network encodings,'' 2025.

\bibitem{Jin2021StructuredQAOA}
Y.~Jin, J.~Luo, L.~Fong, Y.~Chen, A.~B. Hayes, C.~Zhang, F.~Hua, and E.~Z.
  Zhang, ``A structured method for compilation of {QAOA} circuits in quantum
  computing,'' 2021.

\bibitem{Kissinger2020SteinerGauss}
A.~Kissinger and A.~Meijer-van~de Griend, ``{CNOT} circuit extraction for
  topologically-constrained quantum memories,'' \emph{Quantum Information \&
  Computation}, vol.~20, no. 7\&8, pp. 581--596, 2020.

\bibitem{Nash2020NISQ}
B.~Nash, V.~Gheorghiu, and M.~Mosca, ``Quantum circuit optimizations for {NISQ}
  architectures,'' \emph{Quantum Science and Technology}, vol.~5, no.~2, p.
  025010, 2020.

\bibitem{Griend2023PermRowCol}
A.~Meijer-van~de Griend and S.~M. Li, ``Dynamic qubit routing with {CNOT}
  circuit synthesis for quantum compilation,'' in \emph{Proceedings of the 19th
  International Conference on Quantum Physics and Logic (QPL 2022)}, ser.
  Electronic Proceedings in Theoretical Computer Science, vol. 394, 2023.

\bibitem{Brugiere2022Depth}
T.~Goubault~de Brugi{\`e}re, M.~Baboulin, B.~Valiron, S.~Martiel, and
  C.~Allouche, ``Reducing the depth of linear reversible quantum circuits,''
  \emph{IEEE Transactions on Quantum Engineering}, vol.~2, pp. 1--22, 2021.

\bibitem{Yuan2024Connectivity}
P.~Yuan, J.~Allcock, and S.~Zhang, ``Does qubit connectivity impact quantum
  circuit complexity?'' \emph{IEEE Transactions on Computer-Aided Design of
  Integrated Circuits and Systems}, vol.~43, no.~2, pp. 520--533, 2024.

\bibitem{Wu2023CNOTLimited}
B.~Wu, X.~He, S.~Yang, L.~Shou, G.~Tian, J.~Zhang, and X.~Sun, ``Optimization
  of {CNOT} circuits on limited-connectivity architecture,'' \emph{Physical
  Review Research}, vol.~5, no.~1, p. 013065, 2023.

\bibitem{Jiang2020SpaceDepth}
J.~Jiang, X.~Sun, S.-H. Teng, B.~Wu, K.~Wu, and J.~Zhang, ``Optimal space-depth
  trade-off of {CNOT} circuits in quantum logic synthesis,'' in
  \emph{Proceedings of the 2020 ACM-SIAM Symposium on Discrete Algorithms
  (SODA)}, 2020, pp. 213--229.

\bibitem{Huang2026ZAP}
C.-H. Huang, X.~Zhao, H.-Z. Xu, W.~Zhuang, M.-J. Hu, D.~E. Liu, and J.~Wang,
  ``{ZAP}: Zoned architecture and performant compiler for field programmable
  atom array,'' \emph{IEEE Transactions on Quantum Engineering}, 2026, early
  access.

\bibitem{Zylberman2024Diagonal}
J.~Zylberman, U.~Nzongani, A.~Simonetto, and F.~Debbasch, ``Efficient quantum
  circuits for non-unitary and unitary diagonal operators with
  space-time-accuracy trade-offs,'' \emph{ACM Transactions on Quantum
  Computing}, vol.~6, no.~2, pp. 1--43, 2025.

\bibitem{Patel2008LinearReversible}
K.~N. Patel, I.~L. Markov, and J.~P. Hayes, ``Optimal synthesis of linear
  reversible circuits,'' \emph{Quantum Information \& Computation}, vol.~8, no.
  3--4, pp. 282--294, 2008.

\bibitem{Mottonen2004Multiqubit}
M.~M{\"o}tt{\"o}nen, J.~J. Vartiainen, V.~Bergholm, and M.~M. Salomaa,
  ``Quantum circuits for general multiqubit gates,'' \emph{Physical Review
  Letters}, vol.~93, no.~13, p. 130502, 2004.

\bibitem{Litinski2019SurfaceCodes}
D.~Litinski, ``A game of surface codes: Large-scale quantum computing with
  lattice surgery,'' \emph{Quantum}, vol.~3, p. 128, 2019.

\bibitem{Sivarajah2021tket}
S.~Sivarajah, S.~Dilkes, A.~Cowtan, W.~Simmons, A.~Edgington, and R.~Duncan,
  ``t\textbar ket\ensuremath{\rangle}: a retargetable compiler for {NISQ}
  devices,'' \emph{Quantum Science and Technology}, vol.~6, no.~1, p. 014003,
  2021.

\bibitem{Younis2021BQSKit}
\BIBentryALTinterwordspacing
E.~Younis, C.~Iancu, W.~Lavrijsen, M.~Davis, and E.~Smith, ``Berkeley quantum
  synthesis toolkit ({BQSKit}),'' Lawrence Berkeley National Laboratory, 2021.
  [Online]. Available: \url{https://github.com/BQSKit/bqskit}
\BIBentrySTDinterwordspacing

\bibitem{SacHimelfarb2025SkewTolerant}
G.~Sac~Himelfarb and M.~Schwartz, ``Improved constructions of skew-tolerant
  gray codes,'' \emph{IEEE Transactions on Information Theory}, vol.~71,
  no.~10, pp. 8017--8028, 2025.

\end{thebibliography}

\clearpage
\onecolumn
\appendices

\renewcommand{\theequation}{\thesection\arabic{equation}}

\begin{sidewaysfigure}[!p]
	\refstepcounter{section}
	\label{app:gpf8} 
	
	\addcontentsline{toc}{section}{\protect\numberline{\thesection}Full Eight-Qubit GPF Circuit}
	
	\centering
	{\LARGE\bfseries Appendix \thesection\quad Full Eight-Qubit GPF Circuit \par}
	\vspace{5em} 
	
	\centering
	\includegraphics[width=0.9\textheight]{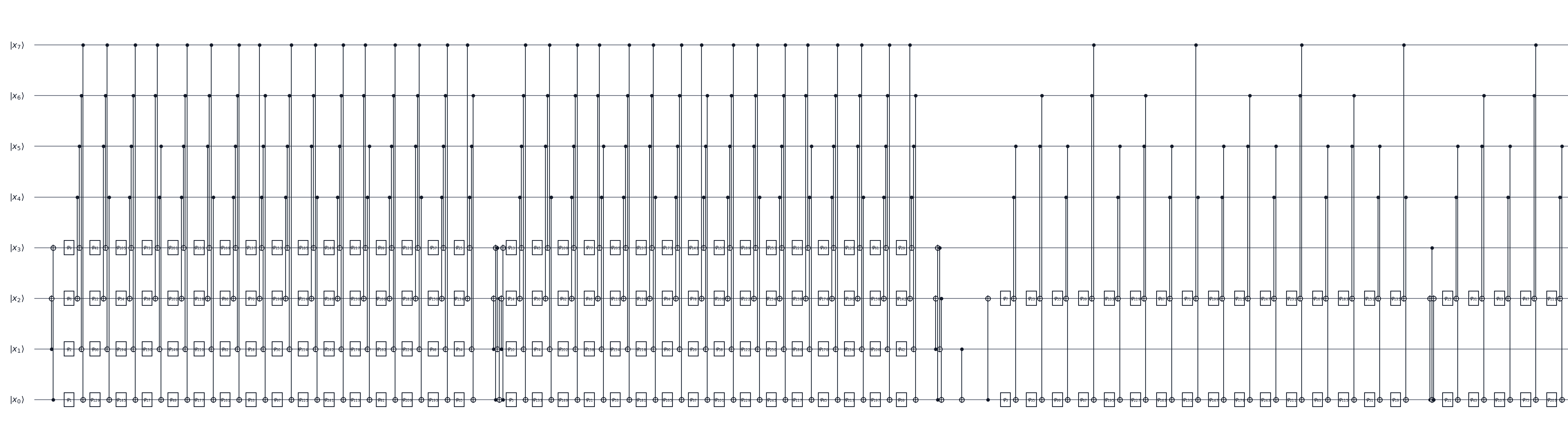}
	
	\vspace{0.6em}
	{\Large $\Longrightarrow$}\quad\small continued
	\vspace{0.6em}
	
	\includegraphics[width=0.9\textheight]{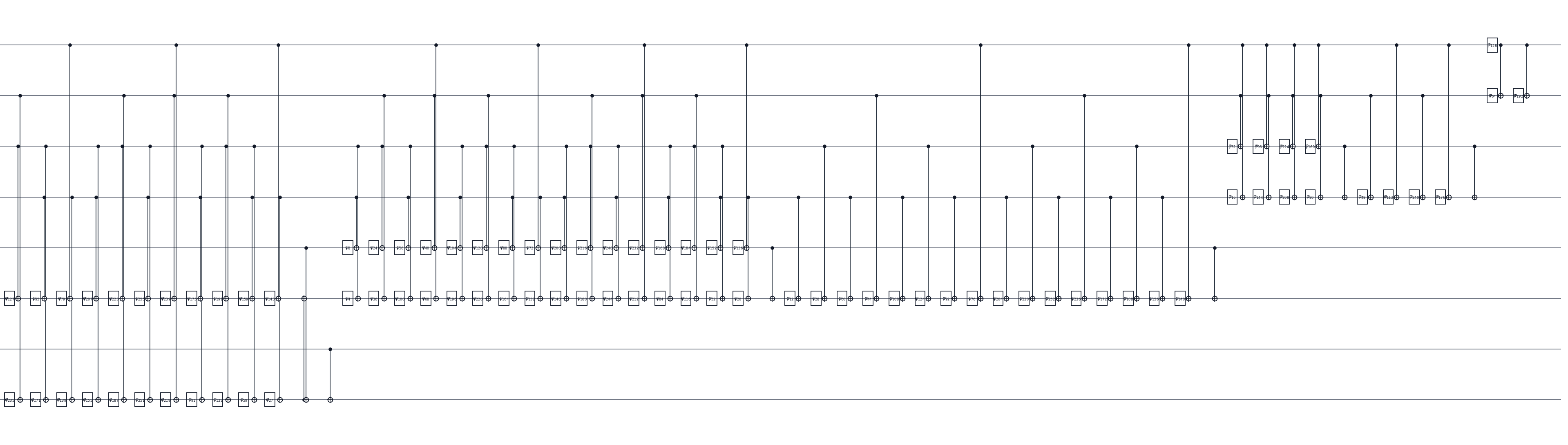}
	
	\caption{Complete expanded GPF$(8)$ circuit, split into two consecutive panels.
		The phase labels use decimal Walsh-mode indices; the lower panel continues from the right end of the upper panel.}
	\label{fig:gpf8-full}
\end{sidewaysfigure}

\begin{sidewaysfigure}[!p]
	\refstepcounter{section}
	\label{app:circuit}

	\addcontentsline{toc}{section}{\protect\numberline{\thesection}Complete Routed GPF Circuit ($n=8$)}

	\centering
	{\Large\bfseries Appendix \thesection\quad Complete Routed GPF Circuit ($n=8$) \par}
	\vspace{1.5em}

	\centering
	\includegraphics[width=0.88\textheight]{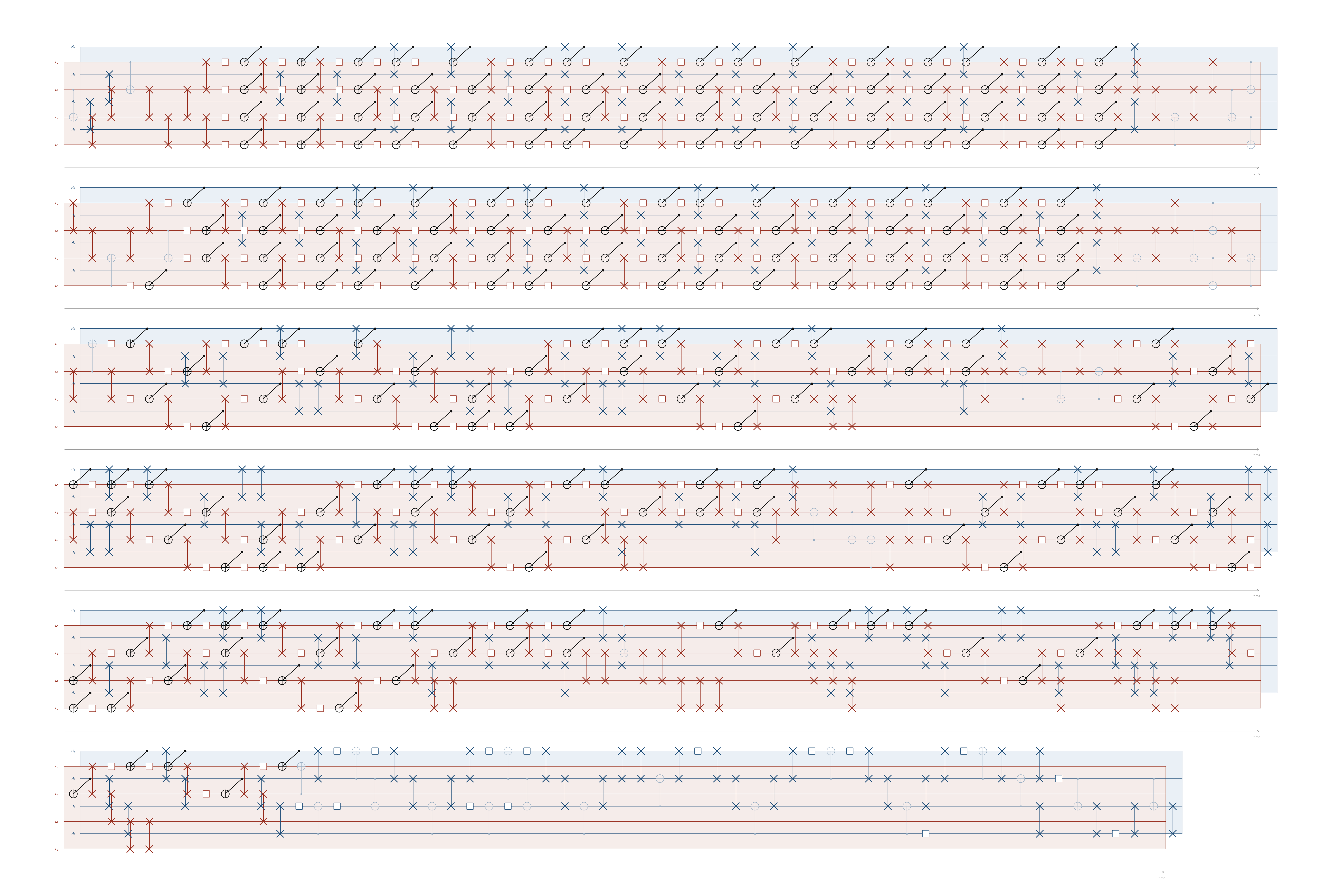}

	\caption{Complete routed $\GPF(8)$ circuit on the $2\times 4$ grid (folded into six segments; low-jump sequence). Dark cross-row links are top-level GPS-2L matchings (Section~\ref{sec:tworow}); light intra-row links are the recursive GPS-1L/GPS$^*$-1L relay (Section~\ref{sec:onerow}). The whole circuit---$278$ CNOTs plus $350$ SWAPs, all nearest-neighbor---realizes the original $\GPF(8)$.}
	\label{fig:gpf-routed-n8}
\end{sidewaysfigure}

\clearpage

\section{Closed-Form Depth Constants}
\label{app:constants}
\setcounter{equation}{0}

We solve the recurrences of Section~\ref{sec:depth} in closed form. Throughout, write
\begin{equation}
    P_m:=\prod_{j=1}^{m}\bigl(1+2^{\,1-2^{j-1}}\bigr),\qquad
    C=\lim_{m\to\infty}P_m\approx 3.40147,
\end{equation}
the limit existing because the $j$-th factor differs from $1$ by $2^{\,1-2^{j-1}}$, which decays doubly exponentially (already $2^{-31}$ at $j=6$, so $P_6$ matches $C$ to nine digits).

\subsection{Balanced subsequence $n=2^\lambda$}
Here $n_H=n_L=2^{\lambda-1}$, both powers of two, so the $\GPF^*$ phase recurrence in~\eqref{eq:recurrences} becomes the scalar recurrence
\begin{equation}
    D^{\GPF^*}_z(2^\lambda)=2^{\,2^{\lambda-1}-1}D^{\GPF^*}_z(2^{\lambda-1})+D^{\GPF^*}_z(2^{\lambda-1})
    =\bigl(2^{\,2^{\lambda-1}-1}+1\bigr)D^{\GPF^*}_z(2^{\lambda-1}).
\end{equation}
With $D^{\GPF^*}_z(1)=1$, telescoping yields
\begin{equation}
    D^{\GPF^*}_z(2^\lambda)=\prod_{j=1}^{\lambda}\bigl(2^{\,2^{j-1}-1}+1\bigr)
    =2^{\,2^\lambda-1-\lambda}\,P_\lambda,
\end{equation}
where the exponent uses $\sum_{j=1}^{\lambda}(2^{j-1}-1)=2^\lambda-1-\lambda$. Therefore
\begin{equation}
    \frac{D^{\GPF^*}_z(2^\lambda)}{2^n/n}=\frac{D^{\GPF^*}_z(2^\lambda)\,2^\lambda}{2^{2^\lambda}}
    =\tfrac12 P_\lambda\;\xrightarrow{\lambda\to\infty}\;\frac{C}{2}\approx 1.70074 .
\end{equation}
The other balanced constants follow from the leading coefficients of the remaining lines of~\eqref{eq:recurrences}. The $\GPF^*$ CNOT recurrence gives $D^{\GPF^*}_{\CNOT}(2^\lambda)=\kappa\,D^{\GPF^*}_z(2^\lambda)+2D^{\GPF^*}_{\CNOT}(2^{\lambda-1})$, whose recursive term is doubly-exponentially smaller than $2^{2^\lambda}$, so $D^{\GPF^*}_{\CNOT}(2^\lambda)/(2^n/n)\to \tfrac{\kappa}{2}C$; for the BRGC fold factor $\kappa=3$ this is $3C/2\approx 5.10221$. The $\GPF$ phase recurrence carries $2^{n_H}$ in place of the $2^{n_H-1}$ of the $\GPF^*$ one, doubling the leading term, so $D^{\GPF}_z(2^\lambda)/(2^n/n)\to 2\cdot(C/2)=C\approx 3.40147$; and the $\GPF$ CNOT recurrence gives $D^{\GPF}_{\CNOT}(2^\lambda)=D^{\GPF}_z(2^\lambda)+\text{(lower order)}$, hence $D^{\GPF}_{\CNOT}(2^\lambda)/(2^n/n)\to C$ as well.

\subsection{Worst case $n=2^\lambda-1$}
Now $n_H=2^{\lambda-1}$ is a perfect power of two while $n_L=2^{\lambda-1}-1$ stays in the same family. Put $g_\lambda:=D^{\GPF^*}_z(2^\lambda-1)$ and $u_\lambda:=g_\lambda/2^{\,2^\lambda-1}$. Using the $\GPF^*$ phase recurrence in~\eqref{eq:recurrences} together with the balanced value $D^{\GPF^*}_z(2^{\lambda-1})=2^{\,2^{\lambda-1}-1-(\lambda-1)}P_{\lambda-1}$ from the previous subsection,
\begin{equation}
    u_\lambda=\tfrac12 u_{\lambda-1}+P_{\lambda-1}\,2^{\,1-\lambda-2^{\lambda-1}} .
\end{equation}
Setting $v_\lambda:=2^\lambda u_\lambda$ removes the factor $\tfrac12$, and since $P_{\lambda-1}\,2^{\,1-2^{\lambda-1}}=P_\lambda-P_{\lambda-1}$ this telescopes:
\begin{equation}
    v_\lambda=v_{\lambda-1}+\bigl(P_\lambda-P_{\lambda-1}\bigr)
    \;\Longrightarrow\;
    v_\lambda=v_2+P_\lambda-P_2 .
\end{equation}
From $D^{\GPF^*}_z(3)=4$ one has $v_2=2$ and $P_2=3$, hence $v_\lambda=P_\lambda-1\to C-1$. Because $D^{\GPF^*}_z(2^\lambda-1)/(2^n/n)=\tfrac{2^\lambda-1}{2^\lambda}\,v_\lambda$,
\begin{equation}
    \frac{D^{\GPF^*}_z(2^\lambda-1)}{2^n/n}\;\xrightarrow{\lambda\to\infty}\;C-1\approx 2.40147 .
\end{equation}
The factor-$3$ and factor-$2$ relations of the previous subsection then give $D^{\GPF^*}_{\CNOT}(2^\lambda-1)/(2^n/n)\to 3(C-1)\approx 7.20441$ and $D^{\GPF}_z(2^\lambda-1)/(2^n/n)=D^{\GPF}_{\CNOT}(2^\lambda-1)/(2^n/n)\to 2(C-1)\approx 4.80294$.

\subsection{Envelopes and the for-all-$n$ bound}
Direct evaluation of~\eqref{eq:recurrences} for all $n\le 8192$ shows that, within each interval $2^\lambda\le n<2^{\lambda+1}$, the normalized depth is minimized at $n=2^\lambda$ and maximized at $n=2^{\lambda+1}-1$, with the two envelopes increasing monotonically toward the limits of the previous subsections. This establishes the for-all-$n$ bound of Corollary~\ref{cor:all-n}.

\subsection{Bounding the Fold Factor $\kappa$}
The $\GPF^*$ CNOT limits carry the fold factor $\kappa$ as $D^{\GPF^*}_{\CNOT}(n)/(2^n/n)\to\tfrac{\kappa}{2}C$ on $n=2^\lambda$ and $\to\kappa(C-1)$ on $n=2^\lambda-1$. The factor is bounded $3\le\kappa\le 4$ for every transition sequence, with $\kappa=3$ for the BRGC and $\kappa\in(3,4)$ for a low-jump sequence; the bound is proved in Appendix~\ref{app:lowjump}. Hence $D^{\GPF^*}_{\CNOT}(n)/(2^n/n)\le 2C$ and $\le 4(C-1)$ respectively: a low-jump sequence raises the $\GPF^*$ CNOT depth by at most a factor $4/3$ over the BRGC value and never changes the asymptotic order.

\clearpage
\section{Mechanism Example Circuits}
\label{app:examples}

This appendix collects illustrative single-block figures referenced from Sections~\ref{sec:placement}--\ref{sec:routing}: the logical circuits to be routed (Figs.~\ref{fig:logical-n8} and \ref{fig:star-logical-n8}), the 2.5D rendering of the routed top-level GPS (Fig.~\ref{fig:routed-3d-n8}), and the two further single-row routing examples (Figs.~\ref{fig:star-1row}--\ref{fig:segment-1256}).

\begin{figure}[htbp]
    \centering
    \includegraphics[width=0.88\textwidth]{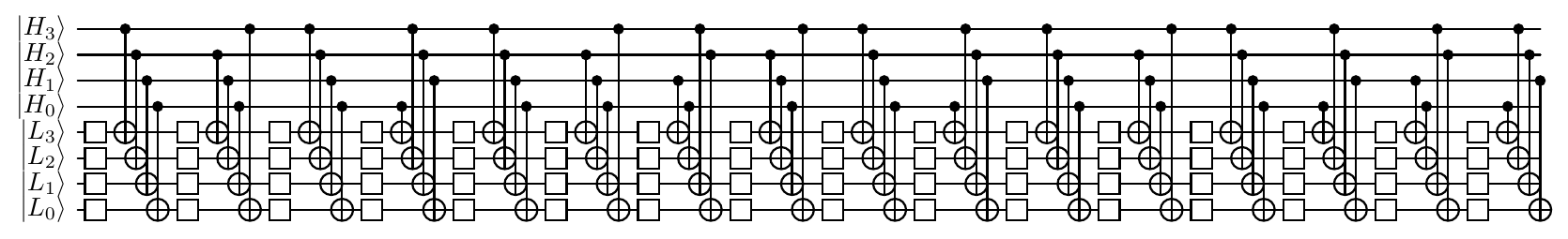}
    \caption{Logical top-level GPS to be routed ($n=8$): the cross-block CNOT spans vary with the transition sequence, so most are non-nearest-neighbor---this is why routing is needed.}
    \label{fig:logical-n8}
\end{figure}

\begin{figure}[htbp]
    \centering
    \includegraphics[width=0.88\textwidth]{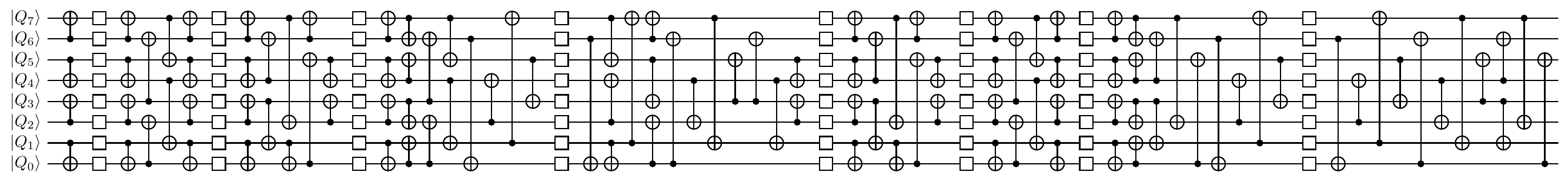}
    \caption{Logical single-row GPS$^*$ to be routed ($n=8$): the CNOT spans are highly irregular, and the long-range pairs are exactly what single-row routing must decompose to nearest-neighbor.}
    \label{fig:star-logical-n8}
\end{figure}

\begin{figure}[htbp]
    \centering
    \includegraphics[width=0.95\columnwidth]{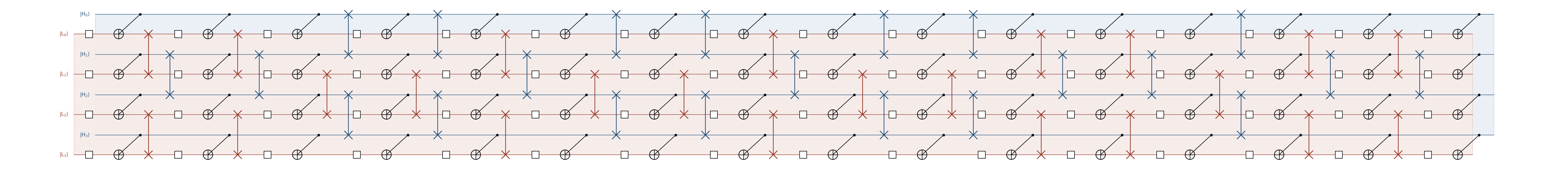}
    \caption{2.5D depth chart of the routed top-level GPS block ($n=2\times 4$, same low-jump schedule as Fig.~\ref{fig:routed-n8}).}
    \label{fig:routed-3d-n8}
\end{figure}

\begin{figure}[htbp]
    \centering
    \includegraphics[width=0.95\columnwidth]{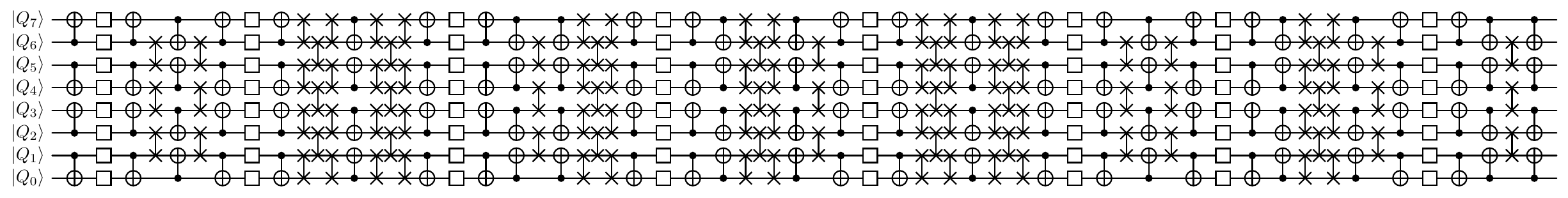}
    \caption{Example 2: routed single-row GPS$^*$ module ($n=8$); same folding and brickwork as Fig.~\ref{fig:gps-1row}, with the schedule switched to GPS$^*$.}
    \label{fig:star-1row}
\end{figure}

\begin{figure}[htbp]
    \centering
    \includegraphics[width=0.8\columnwidth]{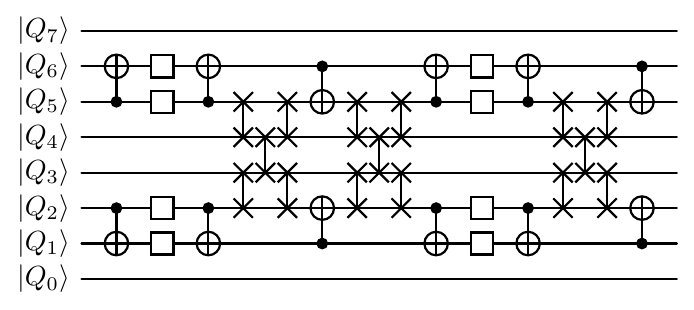}
    \caption{Example 3: single-row GPS$^*$ on segment $\{1,2,5,6\}$ ($n=8$): the remote SWAP$(2,5)$ is brickwork-decomposed through bystanders $\{3,4\}$ and restored---a deeper-recursion intra-group SWAP becoming remote. This realizes GPS$^*$(4).}
    \label{fig:segment-1256}
\end{figure}

\section{Closed-Form Derivation of Complete GPF Routing Complexity}
\label{app:route-constants}
\setcounter{equation}{0}

This appendix fills in the algebraic details of Theorems~\ref{thm:route-depth} and \ref{thm:route-count}, reusing $P_m=\prod_{j=1}^{m}(1+2^{\,1-2^{j-1}})$ and $C=\lim_{m\to\infty}P_m\approx 3.40147$ from Appendix~\ref{app:constants}.

\subsection{Leading-Term Separation}

From the recurrence of Theorem~\ref{thm:route-depth}, $D^{\GPF\text{-}2\mathrm{L}}_{\mathrm{route}}(n)$ consists of the top-level term $3\mu_{\mathrm{jump}}2^{n_H}D^{\GPF^*}_z(n_L)$ and two single-row recursions $D^{\GPF\text{-}1\mathrm{L}}_{\mathrm{route}}(n_H,0)$, $D^{\GPF^*\text{-}1\mathrm{L}}_{\mathrm{route}}(n_L,0)$. The latter act on approximately $n/2$ qubits, with magnitude $\Theta(2^{n/2})$, doubly-exponentially smaller than the top-level's $\Theta(2^n/n)$ (relative correction $\sim 2^{-n/2}$); under normalization they vanish, so the leading constant is determined entirely by the top-level term.

\subsection{Balanced Subsequence $n=2^\lambda$}

Here $n_H=n_L=2^{\lambda-1}$, both powers of two. By the balanced-subsequence telescoping of Appendix~\ref{app:constants},
\begin{equation}
    D^{\GPF^*}_z(2^\lambda)=\prod_{j=1}^{\lambda}\bigl(2^{\,2^{j-1}-1}+1\bigr)
    =2^{\,2^\lambda-1-\lambda}\,P_\lambda,
    \qquad\text{hence}\quad
    D^{\GPF^*}_z(2^{\lambda-1})=2^{\,2^{\lambda-1}-\lambda}\,P_{\lambda-1}.
\end{equation}

Substituting into the top-level term:
\begin{equation}
    3\mu_{\mathrm{jump}}\,2^{n_H}D^{\GPF^*}_z(n_L)
    =3\mu_{\mathrm{jump}}\,2^{\,2^{\lambda-1}}\cdot 2^{\,2^{\lambda-1}-\lambda}\,P_{\lambda-1}
    =3\mu_{\mathrm{jump}}\,2^{\,2^\lambda-\lambda}\,P_{\lambda-1}.
\end{equation}

Dividing by $2^n/n=2^{\,2^\lambda-\lambda}$ yields $D^{\GPF\text{-}2\mathrm{L}}_{\mathrm{route}}(2^\lambda)/(2^n/n)\to 3\mu_{\mathrm{jump}}P_{\lambda-1}\to 3\mu_{\mathrm{jump}}C\approx 10.82$.

\subsection{Upper Envelope $n=2^\lambda-1$}

Here $n_H=2^{\lambda-1}$ is a power of two and $n_L=2^{\lambda-1}-1$ remains in the same family, reusing the worst-case telescoping of Appendix~\ref{app:constants}. Set $g_\lambda=D^{\GPF^*}_z(2^\lambda-1)$, $u_\lambda=g_\lambda/2^{2^\lambda-1}$, and $v_\lambda:=2^\lambda u_\lambda$. The phase recurrence at $n=2^\lambda-1$ gives $u_\lambda=u_{\lambda-1}+2^{-(2^{\lambda-1})}\,$-type increments; using the identity $P_{\lambda-1}\,2^{1-2^{\lambda-1}}=P_\lambda-P_{\lambda-1}$ (immediate from the definition of $P_m$), these telescope to $v_\lambda=v_{\lambda-1}+(P_\lambda-P_{\lambda-1})$, whence $v_\lambda=P_\lambda-1\to C-1$. Accounting for the GPF phase recurrence's $2^{n_H}$ factor (in place of $2^{n_H-1}$, contributing a factor of $2$), the upper envelope is $D^{\GPF\text{-}2\mathrm{L}}_{\mathrm{route}}(2^\lambda-1)/(2^n/n)\to 3\mu_{\mathrm{jump}}\cdot 2(C-1)\approx 15.27$. For all $n$, the normalized routing depth is bracketed between $3\mu_{\mathrm{jump}}C$ and $3\mu_{\mathrm{jump}}\cdot 2(C-1)$.

\subsection{Gate Count}

The gate-count recurrence of Theorem~\ref{thm:route-count} is isomorphic to the depth recurrence, with each depth layer carrying an additional width factor of approximately $n_H-1$ parallel SWAPs. Under normalization by $2^n$ (rather than $2^n/n$), this width factor is absorbed into the denominator, halving the constant to $\tfrac{3}{2}\mu_{\mathrm{jump}}$; the lower and upper envelopes are thus $\tfrac{3}{2}\mu_{\mathrm{jump}}C\approx 5.41$ and $\tfrac{3}{2}\mu_{\mathrm{jump}}\cdot 2(C-1)\approx 7.64$.

Direct recurrence evaluation for all $n\le 8192$ confirms: within each interval $2^\lambda\le n<2^{\lambda+1}$, normalized depth and gate count are minimized at $n=2^\lambda$ and maximized at $n=2^{\lambda+1}-1$, with both envelopes monotonically approaching the limits above.

\section{The Low-Jump Transition Sequence: Construction and Constants}
\label{app:lowjump}
\setcounter{equation}{0}

Viewed as a Gray code on the $n_H$ high coordinates, the transition sequence $(f_0,\ldots,f_{2^{n_H}-1})$ of Section~\ref{sec:jump-cost} is a \emph{graph-compatible} code: its routing cost $J$ (Theorem~\ref{thm:gps2l}) is the total movement of the flipped coordinate along the position ring, so a sequence whose consecutive flips fall in adjacent ring positions is a cyclic, ring analogue of the skew-tolerant (path-compatible) Gray codes studied by Wilson--Blaum and Sac~Himelfarb--Schwartz \cite{SacHimelfarb2025SkewTolerant}. We minimize the jump density $\mu_{\mathrm{jump}}$ of Eq.~\eqref{eq:mu-def}. This appendix gives the construction behind Lemma~\ref{lem:minjump} and the curves of Fig.~\ref{fig:gpfstar-depth-scaling}, and bounds the synthesis-side fold factor $\kappa$.

\subsection{Validity and Lower Bound}
A candidate is admissible only as a genuine closed Gray cycle: applying all $2^{n_H}$ flips must traverse every high-space mode once and return to mode $0$, the same criterion enforced on the synthesis side. Length $2^{n_H}$ or a final entry $0$ does not suffice---appending a closing $0$ to an open Gray path fails for most $n_H$ ($2,3,4,6,8,\ldots$), since the result does not return to mode $0$. Because adjacent flips differ, each ring step has $\Dist_{n_H}\ge 1$, giving the hard bound $J\ge 2^{n_H}$, i.e.\ $\mu_{\mathrm{jump}}\ge 1$. Equality---a perfectly adjacent, ``$1$-skew-tolerant'' cycle---is attainable only for small $n_H$: complete path-compatible Gray codes are absent at $n_H=7$ and conjectured absent for all $n_H\ge 7$ \cite{SacHimelfarb2025SkewTolerant}, so $\mu_{\mathrm{jump}}$ necessarily exceeds $1$ beyond a few coordinates.

\subsection{Construction}
For $n_H\le 6$ we obtain the optimal cycle by exact search: iterative deepening on the target cost upward from $2^{n_H}$, returning the first feasible target, every smaller one having been exhaustively certified infeasible (an admissible remaining-cost bound---each remaining step costs at least $1$---and fixing the first flip by rotational symmetry keep this tractable). The optima are $\mu_{\mathrm{jump}}=1.0$ for $n_H\le 5$ (e.g.\ $n_H{=}2$: $0,1,0,1$;\ \ $n_H{=}3$: $0,1,0,2,0,1,0,2$) and $\mu_{\mathrm{jump}}=66/64\approx 1.03$ for $n_H=6$. These are cached.

Beyond $n_H=6$ exact certification is impractical, and the optimal cycles lose the clean self-similar form they keep through $n_H=5$ (the $n_H=6$ optimum already needs one distance-$3$ step). We extend deterministically by a \emph{reflected sandwich}, the reflect-and-shift paradigm of supercomposite Gray codes \cite{SacHimelfarb2025SkewTolerant} specialized to the ring. One new coordinate $c$, placed at a ring position adjacent to the existing block, splits the order-$r$ cycle $B_r$ into two equal halves stacked along $c$:
\begin{equation*}
B_r=
\left[\begin{array}{c|c}
\;B_{r-1}\; & 0\\[2pt]\hline\rule{0pt}{2.4ex}
\;\widetilde{B}_{r-1}\; & 1
\end{array}\right]
\qquad
\begin{array}{@{}l}
\text{block 1: forward sweep of the }(r{-}1)\text{-subcube},\ c=0,\\[4pt]
\text{block 2: reflected sweep }\widetilde{B}_{r-1},\ c=1,
\end{array}
\end{equation*}
where the right column is the value of the new coordinate $c$. Coordinate $c$ stays $0$ through a forward sweep $B_{r-1}$ of the $(r{-}1)$-subcube, flips once to enter the reflected sweep $\widetilde{B}_{r-1}$ at $c=1$, and flips once more to close---so $c$ changes exactly twice, and because $c$ is ring-adjacent to the block, both junctions are unit steps. The smallest instance is $r=3$ (new coordinate $2$), with transition sequence $0,1,0,\mathbf{2},\,0,1,0,\mathbf{2}$, the two $\mathbf{2}$'s marking the block boundaries; the $n_H=7$ cycle has exactly this shape, two sweeps of the $6$-subcube joined by the two flips of coordinate~$6$.

The base need not be a single level below: the split that minimizes cost is highly regular---for $7\le n_H\le 10$ the recursion rests directly on the exhaustively-optimal $k=6$ block, and for $n_H\ge 11$ it adds one coordinate for odd $n_H$ and three for even, the two choices differing by only an $O(1)$ boundary term. Up to that term the extension is single-coordinate and uniquely determined. It closes for every $n_H$, with $\mu_{\mathrm{jump}}$ falling from $1.0625$ at $n_H=7$ to $\approx 1.060$ by $n_H=16$; this is the $\approx 1.06$ of Lemma~\ref{lem:minjump}. It is not proven optimal, so ``low-jump'' is descriptive.

\subsection{The Two Constants: Jump Density and Fold Factor}
\label{app:kappa-bound}
The sequence enters the two halves of the paper through two opposed functionals (Section~\ref{sec:graycode}). On the compilation side the \emph{jump density} $\mu_{\mathrm{jump}}$ (Eq.~\eqref{eq:mu-def}) fixes the routing constant $3\mu_{\mathrm{jump}}C$ (Theorem~\ref{thm:gps2l}); the low-jump value $\approx 1.06$ gives $3\mu_{\mathrm{jump}}C\approx 10.82$, whereas the BRGC's $\mu_{\mathrm{jump}}$ rises toward $2$ as $n_H$ grows.

On the synthesis side the same sequence sets the \emph{fold factor} $\kappa$ of Eq.~\eqref{eq:kappa-def}, the $\GPS^*$ CNOT-to-phase layer ratio. The $\GPS^*$ schedule processes the $2^{n_H}$ flips in consecutive groups of four; each group retains two phase layers and emits either six or eight CNOT layers---the two branches of the forward--reverse fold, according to whether the group is matched. The per-group ratio is thus $3$ or $4$, and summing over groups gives
\begin{equation}
    3\le\kappa\le 4,
\end{equation}
with $\kappa=3$ exactly when every group is matched. The BRGC, whose odd-step flips are all the lowest coordinate, matches every group and attains $\kappa=3$; a low-jump sequence mixes the two cases, so its computed per-width fold factor $\kappa_{\mathrm{lj}}(r_c)$ lies in $(3,4)$ and converges to $\kappa^\ast\approx 3.64$ (Fig.~\ref{fig:gpfstar-depth-scaling}). Consequently $D^{\GPF^*}_{\CNOT}(n)/(2^n/n)\to\tfrac{\kappa}{2}C\le 2C$ on $n=2^\lambda$ and $\to\kappa(C-1)\le 4(C-1)$ on $n=2^\lambda-1$: a low-jump sequence raises the $\GPF^*$ CNOT depth by at most a factor $4/3$ over the BRGC value and never changes the asymptotic order.

\section{Proof of the Offset-Ring Lemma}
\label{app:offset-ring}
\setcounter{equation}{0}

We prove Lemma~\ref{lem:offset-ring}. Identify each column $j\in\{0,\dots,n_H-1\}$ with the \emph{folded-cycle coordinate} $m=\sigma(j)\in\mathbb{Z}_{n_H}$, where $\sigma(2k)=k$ and $\sigma(2k+1)=n_H-1-k$; this is a bijection, and in the canonical arrangement the label carried at folded coordinate $m$ on each row is exactly $m$.

A swap layer permutes the contents of a row by its brick involution. In folded coordinates the even-brick map $E$ (pairs $(0,1),(2,3),\dots$) and the odd-brick map $O$ (pairs $(1,2),(3,4),\dots$) act as the two reflections of $\mathbb{Z}_{n_H}$,
\begin{equation}
    \hat E(m)=-1-m,\qquad \hat O(m)=-m \pmod{n_H},
\end{equation}
which is the pairwise content of $\sigma(E(j))=n_H-1-\sigma(j)$ and $\sigma(O(j))=-\sigma(j)$ (checked on each brick pair, including the fixed boundary column when $n_H$ is odd).

Track each row as a map (folded coordinate $\mapsto$ label) of the affine form $m\mapsto\varepsilon m+c$ with $\varepsilon\in\{+1,-1\}$ and $c\in\mathbb{Z}_{n_H}$; both rows start at $(\varepsilon,c)=(+1,0)$. Applying a brick whose reflection is $\hat B(m)=\varepsilon_B m+c_B$---so $(\varepsilon_E,c_E)=(-1,-1)$ and $(\varepsilon_O,c_O)=(-1,0)$---replaces $m\mapsto\varepsilon m+c$ by $m\mapsto\varepsilon\,\hat B(m)+c=(\varepsilon\varepsilon_B)m+(\varepsilon c_B+c)$. Both bricks are reflections ($\varepsilon_B=-1$), so after $t$ layers both rows share the same parity $\varepsilon^H_t=\varepsilon^L_t=(-1)^t$. Hence the offset
\begin{equation}
    \delta_t(m)=\bigl(\varepsilon^L_t m+c^L_t\bigr)-\bigl(\varepsilon^H_t m+c^H_t\bigr)=c^L_t-c^H_t
\end{equation}
is independent of $m$: every layout reached by this schedule is a uniform offset state.

For the offset value, SWAP-A applies $E$ to the upper row and $O$ to the lower row, and SWAP-B applies $O$ to the upper and $E$ to the lower. Writing $c_{t+1}=\varepsilon_t c_B+c_t$ with $\varepsilon_t=(-1)^t$, both the SWAP-A step ($t$ even) and the SWAP-B step ($t$ odd) give
\begin{equation}
    \delta_{t+1}=c^L_{t+1}-c^H_{t+1}=\delta_t+1 \pmod{n_H}.
\end{equation}
Thus alternating SWAP-A/SWAP-B from $s_0$ yields $\delta_t=t$, i.e.\ $s_0\to s_1\to\cdots\to s_{n_H-1}\to s_0$ with each offset visited once. The same update shows that from any uniform state a single SWAP-A or SWAP-B layer reaches a uniform state of offset $\delta\pm1$, the two layer types giving opposite signs; the offset states therefore form a $\mathbb{Z}_{n_H}$-cycle, and the least number of layers from $s_a$ to $s_b$ is the cycle distance $\Dist_{n_H}(a,b)=\min(|a-b|,n_H-|a-b|)$. This proves Lemma~\ref{lem:offset-ring}.

\end{document}